\newtheorem{theorem}{Theorem}
\newtheorem{conjecture}{Conjecture}
\newtheorem{corollary}{Corollary}
\newtheorem{lemma}{Lemma}
\newtheorem{definition}{Definition}
\newtheorem{proposition}{Proposition}
\def\autorefapp#1{\hyperref[#1]{Appendix~\ref{#1}}}
\newcommand{\bra}[1]{\left\langle #1 \right|}
\newcommand{\ket}[1]{\left|#1\right\rangle}
\def\ketbra#1{ |{#1}\rangle\!\langle{#1}| }
\def\vev#1{\langle{#1}\rangle}
\def\ni{\noindent}
\def\BE{\mathbb{E}}
\def\op{{\cal O}}
\def\ep{\varepsilon}
\DeclareMathOperator*{\EV}{\mathbb{E}}
\DeclareMathOperator*{\var}{\text{var}}
\DeclareMathOperator*{\Tr}{\text{Tr}}
\definecolor{myTan}{rgb}{0.941,0.934,0.879}
\newcommand\hl[1]{{\color{black}{#1}}}
\begin{document}

\title{Random quantum circuits anti-concentrate in log depth}

\author{Alexander M. Dalzell}
\affiliation{Institute for Quantum Information and Matter, Caltech, Pasadena, CA 91125}
\author{Nicholas Hunter-Jones}
\affiliation{Perimeter Institute for Theoretical Physics, Waterloo, ON N2L 2Y5}
\author{Fernando G. S. L. Brand{\~a}o}
\affiliation{Institute for Quantum Information and Matter, Caltech, Pasadena, CA 91125}
\affiliation{AWS Center for Quantum Computing, Pasadena, CA 91125}

\begin{abstract}
We consider quantum circuits consisting of randomly chosen two-local gates and study the number of gates needed for the distribution over measurement outcomes for typical circuit instances to be anti-concentrated, roughly meaning that the probability mass is not too concentrated on a small number of measurement outcomes. Understanding the conditions for anti-concentration is important for determining which quantum circuits are difficult to simulate classically, as anti-concentration has been in some cases an ingredient of mathematical arguments that simulation is hard and in other cases a necessary condition for easy simulation. Our definition of anti-concentration is that the expected \textit{collision probability}, that is, the probability that two independently drawn outcomes will agree, is only a constant factor larger than if the distribution were uniform. We show that when the two-local gates are each drawn from the Haar measure (or any two-design), at least $\Omega(n \log(n))$ gates (and thus $\Omega(\log(n))$ circuit depth) are needed for this condition to be met on an $n$ qudit circuit. In both the case where the gates are nearest-neighbor on a 1D ring and the case where gates are long-range, we show that $O(n \log(n))$ gates are also sufficient, and we precisely compute the optimal constant prefactor for the $n \log(n)$. The technique we employ relies upon a mapping from the expected collision probability to the partition function of an Ising-like classical statistical mechanical model, which we manage to bound using stochastic and combinatorial techniques. 
\end{abstract}

\maketitle

\section{Introduction}

Random quantum circuits (RQCs) are a crucial model for understanding a diverse set of phenomena in both quantum information and quantum many-body physics. They have been used to study the onset of quantum chaos and dynamical spread of entanglement in strongly interacting quantum systems \cite{NahumRuhmanVijayHaah2017EntanglementGrowth,vonKeyserlingk2018OperatorHydrodynamics,NahumVijayHaah2018OperatorSpreading}, including information processing in black holes \cite{HaydenPreskill}.
They also form the basis for recent experiments aiming to demonstrate exponential quantum advantage \cite{Boixo2018,Arute2019GoogleQuantumSupremacy,Wu2021StrongCompAdvantage}. 

The utility of RQCs in these situations derives from a myriad of quantitative properties they have been shown to possess. For example, RQCs quickly generate entanglement \cite{DahlstenOliveiraPlenio2007TypicalEntanglement,OliveiraDahlstenPlenio2007GenericEntanglement,NahumRuhmanVijayHaah2017EntanglementGrowth}, lead to fast scrambling and decoupling of quantum information \cite{BrownFawzi2012RQCScrambling, BrownFawzi2015RQCDecoupling}, and act as efficient encoding circuits for good quantum error-correcting codes \cite{BrownFawzi2013RandomCodes}. When the circuits are geometrically local, they lead to ballistic spreading of local operators \cite{vonKeyserlingk2018OperatorHydrodynamics,NahumVijayHaah2018OperatorSpreading}. Furthermore, they form approximate unitary designs, that is, despite being composed of local gates, they efficiently approximate a global random unitary transformation up to any polynomial number of moments \cite{HarrowLow2009RQC2design,BrownViola2010tdesign,BHH2016RQCtdesign,HarrowMehraban2018tdesign,Hunter-Jones2019StatMechDesign}. Meanwhile, computing transition amplitudes of RQCs has been shown to be just as difficult as for arbitrary quantum circuits \cite{Bouland2019RCSComplexity,Movassagh2018RCSAverageCase,Movassagh2019RCSAverageCaseRobust,Bouland2021NoiseFrontier,Kondo2021ImprovedRobustness}, suggesting that classical simulation of RQCs should require exponential time.  

In this work, we focus on another property of random quantum circuits called \textit{anti-concentration}. Roughly speaking, when we measure the output state of the circuit in the computational basis, anti-concentration is the property that the distribution over measurement outcomes is fairly well spread across all possible outcomes and not too concentrated onto just one or only a small portion of those outcomes. Quantitatively, our definition of anti-concentration depends on the \textit{collision probability}, the probability that measurement outcomes from two independent copies of the circuit agree. An RQC architecture is said to be anti-concentrated if the collision probability is at most a constant factor larger than its minimal value. Understanding when this is the case is particularly important for knowing when RQCs are hard to classically simulate. On the one hand, anti-concentration is a necessary ingredient in most formal hardness arguments for RQC simulation \cite{Aaronson2011BosonSampling,Bremner2016AverageCaseIQP,Morimae2017DQC1HardnessTVD,Bouland2018CCC,Hangleiter2018anticoncentration,Bouland2019RCSComplexity,Dalzell2020HowManyQubits,Morimae2019FGAdditive}. On the other hand, certain classical algorithms for simulating RQCs require anti-concentration in order to be efficient, for example, the algorithms discussed in Refs.~\cite{Bremner2017SparseNoisyIQP,GaoDuan2018NoisySimulation} for noisy circuit simulation and the algorithm in Ref.~\cite{Barak2020Spoofing} that spoofs the linear cross-entropy benchmarking metric introduced in Ref.~\cite{Arute2019GoogleQuantumSupremacy}.

In most previous work where RQC anti-concentration is needed, it has been asserted as an implication of the 2-design property (see, e.g., Refs.~\cite{Hangleiter2018anticoncentration,haferkamp2019closinggaps}).
However, the 2-design property is much stronger than what is required for anti-concentration. It was shown that $n$-qubit RQCs on a fully connected architecture form approximate 2-designs after roughly $O(n)$ depth \cite{HarrowLow2009RQC2design}, and this was later shown to also apply to geometrically local RQCs in 1D and improved to $O(n^{1/D})$ in $D$ spatial dimensions \cite{HarrowMehraban2018tdesign}. However, recent work by Barak, Chou, and Gao \cite{Barak2020Spoofing} --- using a similar method to the one presented here --- showed that for 1D RQCs the collision probability converges in depth $O(\log(n))$, much faster than the 2-design depth of $O(n)$. They also conjectured that 2D RQCs anti-concentrate in depth $O(\sqrt{\log(n)})$.

In this work, we prove sharp bounds on the number of gates needed for anti-concentration in two RQC architectures. For 1D RQCs, we confirm the $O(\log(n))$ upper bound on the anti-concentration depth in Ref.~\cite{Barak2020Spoofing}, and add a lower bound that matches the upper bound even up to the constant prefactor of the $\log(n)$. We also show that an $\Omega(\log(n))$ lower bound on the depth needed for anti-concentration holds regardless of which RQC architecture we use, which refutes the conjecture from Ref.~\cite{Barak2020Spoofing} that 2D RQCs anti-concentrate in $O(\sqrt{\log(n)})$ depth. We then consider a fully connected (i.e.~not geometrically local) RQC architecture, where each gate acts on a pair of qudits chosen randomly among all $n(n-1)/2$ possible such pairs. 
We show that, for qubits (local dimension $q=2$), $5n\log(n)/6$ gates are necessary and sufficient (up to subleading corrections) for anti-concentration to be achieved, which settles a conjecture in Ref.~\cite{HarrowMehraban2018tdesign}.  

Our method employs a technique for analyzing RQCs that converts the collision probability into a weighted sum over bit assignments to each location in the circuit diagram; this weighted sum can be viewed as a partition function for an Ising-like statistical mechanical model. The bit assignments can also be interpreted as a Markov chain, and the number of gates needed for anti-concentration ultimately translates into the time needed for certain expectation values to converge under the dynamics of the Markov chain. This method not only yields sharp quantitative bounds, it also produces an appealing qualitative explanation on how and why the collision probability reaches its limiting value, which allows for effective heuristic reasoning even in architectures that we have not explicitly considered here.

The main takeaways from our work are twofold. First, we show that anti-concentration is generally achieved much faster than the 2-design property. The fact that anti-concentration occurs in $\Theta(n\log(n))$ circuit size both in 1D and for the fully connected architecture --- these being two opposite extremes of geometric locality --- suggests that anti-concentration may require only $\Theta(n\log(n))$ size for \textit{any} reasonably well-connected architecture. This comes in sharp contrast to the situation for unitary designs, where the scaling of the size needed with $n$ is highly dependent on the architecture. Second, the fact that we can prove tight upper and lower bounds suggests a broader utility for our method based on the correspondence between RQCs and statistical mechanical partition functions. 

\section{Anti-concentration and the collision probability}

\begin{figure*}
\centering
\includegraphics[width=0.85\linewidth]{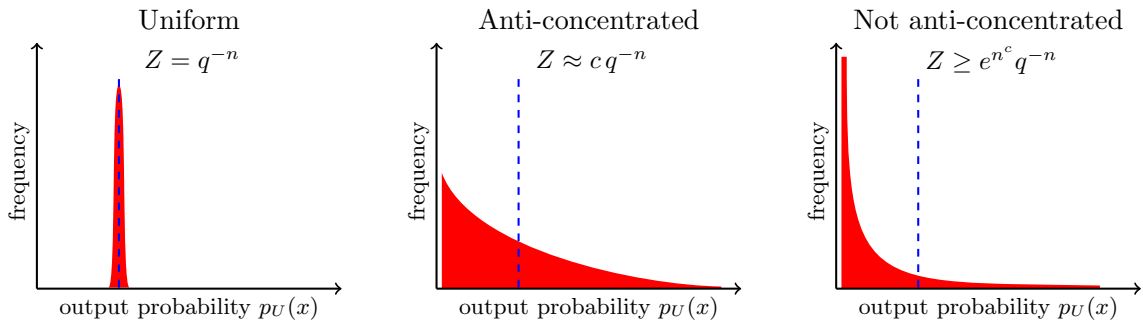}
\caption{\label{fig:ACcartoon}
A caricature of anti-concentration.  \hl{In the three examples, the fraction (frequency) of bit strings $x$ for which $p_U(x) = p$ is plotted against $p$. (This could be either for a fixed choice of $U$ or averaged over random choice of $U$.) Since there are $q^n$ bit strings, the mean of this distribution is $q^{-n}$ (dotted blue line).}
For the uniform distribution, which is completely anti-concentrated, all $q^n$ outcomes are allocated probability mass $q^{-n}$ and the collision probability is $Z=q^{-n}$. For globally Haar-random unitaries, the output probabilities are on average $q^{-n}$ but have some non-zero variance, and the collision probability is $Z\approx 2q^{-n}$. Whenever $Z \approx c q^{-n}$ for some $c$ independent of $n$, we call the distribution anti-concentrated. For low-depth RQCs, the mean output probability is $q^{-n}$, but the variance is much larger, and the collision probability is much larger than $q^{-n}$. Most of the probability mass is concentrated onto a few measurement outcomes, \hl{while the vast majority} of the outcomes \hl{are} assigned a very small amount of mass, leading to \hl{the divergence in the frequency that $p_U(x)$ is close to $0$} depicted in the plot.}
\end{figure*}

A RQC architecture is an instruction set on how to draw a circuit diagram given the number of qudits $n$ (each with local Hilbert space dimension $q$) and the size $s$ of the circuit. The two architectures we consider specifically are the \textit{1D architecture} (with periodic boundary conditions) and the \textit{complete-graph} architecture. The associated RQC ensemble for an RQC architecture is formed by following this instruction set and then choosing the value of each gate in the diagram independently and uniformly at random from the Haar measure. If we fix an instance $U$ from this ensemble, there is an associated output probability distribution $p_U$ over $q^n$ possible computational basis measurement outcomes $x \in [q]^n$, (where $[q] = \{1,2,\ldots,q\}$). Anti-concentration tries to capture the notion that the probability mass is well spread out over all the outcomes. The uniform distribution, where each output is allocated $q^{-n}$ fraction of the total probability mass, is the ultimate anti-concentrated distribution because the mass is exactly equally spread, but we say a distribution is still anti-concentrated as long as the average fluctuations from uniform are no larger than $O(q^{-n})$. This definition is captured precisely by the \textit{collision probability}, which is $\sum_{x}p_U(x)^2$. The collision probability gives the probability that measurement outcomes from two independent copies of the circuit are identical. It is also proportional to the second moment (and thus is related to the variance) of the output probability of a randomly chosen bit string. If $p_U$ is the uniform distribution, then the collision probability is $q^{-n}$, its minimal possible value. For a RQC architecture at a specified qubit number $n$ and circuit size $s$, we consider the collision probability averaged over the randomly chosen circuit instances $U$.
\begin{equation}
    Z := \EV_U\left[\sum_{x \in [q]^n} p_U(x)^2\right] = q^n \EV_U\left[p_U(1^n)^2\right]
\end{equation}
where the second equality holds because by symmetry each of the $q^n$ terms in the sum yields the same number under expectation as long as at least one Haar-random gate acts on each qudit. 

We say a RQC architecture with $n$ qudits and $s$ gates is anti-concentrated if there is a constant $\alpha$ (independent of $n$) with $0 < \alpha \leq 1$ for which $Z \leq \alpha^{-1} q^{-n}$, i.e.~that the collision probability is only a constant factor larger than its minimal value. In particular, our theorem statements roughly correspond to the choice $\alpha=1/4$, but other choices of $\alpha$ would yield the same results up to leading order. If desired, Markov's inequality can then be used to bound the fraction of the randomly chosen $U$ whose collision probability is larger than some constant multiple of $Z$. Moving forward, for convenience, when we say collision probability we will mean the average collision probability $Z$. 

Very shallow circuit architectures are not anti-concentrated: there are expected to be some output probabilities $x$ for which $p_U(x)$ is exponentially larger than the mean of $q^{-n}$. As the circuit gets deeper, we expect the probability distribution to become closer to uniform, but even at infinite depth, when the circuit unitary $U$ becomes a globally Haar-random $q^n \times q^n$ unitary, the output distribution still does not become completely uniform. In this case, the output distribution will typically follow a Porter-Thomas distribution\footnote{In the Porter-Thomas distribution, the frequency at which $p_U(x) = p$ is proportional to $\exp({-p/q^{-n}})$, illustrated roughly in the middle diagram of \autoref{fig:ACcartoon}.} and $Z$ can be exactly computed as
\begin{equation}
    \lim_{s \rightarrow \infty} Z = Z_H := \frac{2}{q^n+1},
\end{equation}
roughly twice as large as the minimal value of $q^{-n}$ associated with the uniform distribution. This statement is proved using the techniques described later. For a graphical illustration of these cases, see \autoref{fig:ACcartoon}.

While one could capture the notion of anti-concentration with a different definition, the definition we choose is useful and relevant because it has concrete ramifications in all of the previously mentioned applications of anti-concentration. For example, one implication of our definition (by application of the Paley-Zygmund inequality) is that if $Z \leq \alpha^{-1}q^{-n}$, then for any $0 \leq \beta \leq 1$
\begin{equation}\label{eq:prnotclosetozeroMAIN}
    \Pr_U[p_U(x) \geq \beta q^{-n}] \geq \hl{\alpha}(1-\beta)^2
\end{equation}
meaning that for at least a constant fraction of the circuit instances the probability of a given measurement outcome $x$ is at least a constant \hl{multiple} $\beta$ of the mean measurement probability $q^{-n}$. This sort of inequality is the relevant one for turning good additive approximations into good multiplicative approximations (with reasonable probability), employed in e.g.~\cite{Aaronson2011BosonSampling,Bremner2016AverageCaseIQP,Morimae2017DQC1HardnessTVD,Bouland2018CCC,Hangleiter2018anticoncentration,Bouland2019RCSComplexity,Dalzell2020HowManyQubits,Morimae2019FGAdditive} to argue\footnote{\hl{Note that, while Eq.~\eqref{eq:prnotclosetozeroMAIN} is an ingredient in these arguments, it is not alone sufficient to imply hardness of simulation as the arguments generally rely on additional unproven conjectures.}} that it is hard to classically sample output distributions for a large fraction of instances up to small total variation distance error (for more details, see \autoref{sec:ACinhardness}). In fact, equations like Eq.~\eqref{eq:prnotclosetozeroMAIN} are sometimes taken to be the definition of anti-concentration \cite{Hangleiter2018anticoncentration}, which is a weaker definition than ours since, in principle, \hl{Eq.~\eqref{eq:prnotclosetozeroMAIN} can hold even in cases where $Z$ exceeds any constant multiple of $q^{-n}$.} 

\section{Our results}

\begin{table*}[ht]
\centering
\normalsize
{\renewcommand{\arraystretch}{1.5}
\begin{tabular}{|c|c|c|}
\hline
Architecture & $s_{AC}$ upper bound & $s_{AC}$ lower bound\\
\hline\hline 
general & $O(n^2)$ & $ \Omega(n\log(n))$\\
\hline
\rule{0pt}{18pt} 1D & ~$\left(2\log\left(\frac{q^2+1}{2q}\right)\right)^{-1}n\log(n) + O(n)$~ & ~$\left(2\log\left(\frac{q^2+1}{2q}\right)\right)^{-1}n\log(n) - O(n)$~ \\[4pt]
\hline
\rule{0pt}{14pt} ~complete-graph~ & $\frac{q^2+1}{2(q^2-1)}\, n\log(n) + O(n)$ & $\frac{q^2+1}{2(q^2-1)}\, n\log(n) -O(n)$ \\[3pt]
\hline
\end{tabular}
}
\caption{\label{tab:results} Summary of results: upper and lower bounds on the circuit size $s_{AC}$ at which anti-concentration is achieved for different random circuit architectures.}
\end{table*}


We show that the collision probability is given by a discrete sum, which we interpret as the expectation value of a certain stochastic process. The correspondence between the collision probability and the discrete sum is described in \autoref{sec:overviewofmethod}, and a complete derivation is provided in \autorefapp{sec:analysisframework}. 

Analyzing our expression, we derive rigorous upper and lower bounds on the collision probability generally and for two specific architectures. These bounds are stated here and the proofs are provided in the appendices. These bounds are then used to form upper and lower bounds quoted in \autoref{tab:results} on the anti-concentration size $s_{AC}$, defined as the minimum circuit size required such that $Z \leq 2Z_H$.  The constant 2 in \hl{the definition of $s_{AC}$} is arbitrary but, \hl{since we will show that $Z$ approaches $Z_H$ as $Z_H(1+e^{-\Omega(s/n)})$}, a different choice of constant would only lead to linear-in-$n$ changes to $s_{AC}$, which would be subleading and would not affect any of the statements in \autoref{tab:results}. All logarithms in this paper are natural logarithms.

\subsubsection{Collision probability upper bounds}

Our upper bounds take the following form:
\begin{equation}\label{eq:upperboundform}
    Z \leq Z_H\left(1+e^{-\frac{2a}{n}(s-s^*)}\right)\,,
\end{equation}
where the constant $a$ is independent of $n$ and depends on the circuit architecture and $s^*$ is a function of $n$ that also depends on the architecture. Thus, if the anti-concentration size $s_{AC}$ is defined to be the minimum size $s$ such that $Z \leq 2Z_H$, then we have $s_{AC} \leq s^*$. Specifically, we have the following results, which are restated here as theorems, and proved rigorously in the Appendices.

First, we consider the \textit{1D architecture} with periodic boundary conditions, where the qudits are arranged on a ring and alternating layers of $n/2$ nearest-neighbor Haar-random gates are applied. 

\begin{theorem}\label{thm:1Dupperboundsummary}
    For the 1D architecture, Eq.~\eqref{eq:upperboundform} holds with
    \begin{align}
        a &= \log\left(\frac{q^2+1}{2q}\right)\label{eq:a1D}\\
        s^* &= \frac{1}{2a}n\log(n) + n\left(\frac{1}{2a}\log(e-1)+\frac{1}{2} \right)
        \label{eq:s*1D}
    \end{align}
    whenever $s \geq s^*$. 
\end{theorem}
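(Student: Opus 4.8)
The plan is to start from the statistical-mechanical representation of the collision probability developed in the appendix, which rewrites $q^n\,\EV_U[p_U(1^n)^2]$ as a partition function: a sum over assignments of a binary ``domain label'' (an element of $S_2$) to each wire segment of the 1D brick-layer circuit, with local weights given by the $t=2$ Weingarten data of a single two-qudit gate. Equivalently, reading the circuit layer by layer, this sum is the expectation of a boundary observable under an explicit Markov chain on $\{0,1\}^n$ --- the ``domain configuration'' on the ring after each layer --- whose per-gate update collapses a straddled domain wall with probability $\tfrac{2q}{q^2+1}$. I would first record that at infinite depth only the two uniform (ferromagnetic) configurations survive and that they evaluate to $Z_H q^n = 2q^n/(q^n+1)$, so that the finite-$s$ excess $Z - Z_H$ to be bounded is carried entirely by the non-uniform configurations.

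Next I would bound that excess, in absolute value, by the total statistical weight of all non-uniform configurations, organized by domain structure. The key local estimate is that each gate straddling a domain wall multiplies the weight, relative to the ferromagnetic reference, by a factor $e^{-a} = \tfrac{2q}{q^2+1}$; this is exactly the source of the constant in Eq.~\eqref{eq:a1D}. Because the alternating brick-layer geometry hits every bond once per pair of layers, a domain that persists through depth $d$ is suppressed by $\approx e^{-ad}$, and converting depth to gate number via $s = (n/2)d$ produces the exponent $e^{-2as/n}$ of Eq.~\eqref{eq:upperboundform}.

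The reason $s^*$ is $\Theta(n\log n)$ rather than $O(n)$ is entropic: there are $\Theta(n)$ essentially independent locations on the ring where a small defect can sit, and summing their individual $\sim e^{-2as/n}$ contributions --- with a further sum over the defect's size, which is what introduces a series such as $\sum_{k\ge 1}1/k! = e-1$ --- makes the excess of order $(\mathrm{const}\cdot n)\,e^{-2as/n}$. Requiring this to fall below the constant threshold in the definition of $s_{AC}$ forces $s \gtrsim \tfrac{1}{2a}\,n\log n$, and tracking the exact prefactor (including the $e-1$ and the one-layer, i.e.\ $n/2$-gate, offset forced by the alternation) upgrades this to the precise $s^*$ in Eq.~\eqref{eq:s*1D}; the hypothesis $s\ge s^*$ in the theorem simply marks the regime in which this combinatorial estimate is the binding one. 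A clean way to execute this step is to dominate the true process by a simpler, exactly solvable surrogate that ignores favorable correlations --- between distinct defects and between the two ends of a single domain --- check that the surrogate loses only in the $O(n)$ term, and evaluate it in closed form.

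The hardest part, I expect, is pinning down the correct decay rate of the excess: showing it is $\mathrm{poly}(n)\,e^{-2as/n}$ and not, say, $\mathrm{poly}(n)\,e^{-\Theta(s/n^{3})}$, which the naive picture ``domain walls perform a lazy random walk and must collide to annihilate'' would suggest (and which would give $\Omega(n^2)$ depth). The resolution is that the slow, diffusive modes of the transfer matrix --- eigenvalues $1-\Theta(1/n^2)$ --- couple to the collision-probability boundary conditions only with exponentially small weight, so the convergence of $Z$ is governed by the fast local-collapse rate $e^{-a}$ per layer. Making this rigorous --- cleanly separating the genuinely fast contributions from the heavily suppressed slow ones, uniformly over all defect configurations and all domain sizes --- is where the stochastic and combinatorial estimates do their real work, and it is the step I would budget the most effort for.
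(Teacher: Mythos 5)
Your combinatorial skeleton matches the paper's proof closely: the excess $Z-Z_H$ is attributed to trajectories with surviving domain walls, each surviving wall pays $e^{-a}=\tfrac{2q}{q^2+1}$ per layer (the $\tfrac{q}{q^2+1}$ weight reduction times the two choices of direction), the entropy over wall positions gives a factor $\binom{n}{k}$, and the sum over the number of surviving walls is linearized via $(1+c)^n\leq 1+nc(e-1)$ for $nc\leq 1$ --- which is where the $e-1$ actually enters (from the sum over the \emph{number} of defects, $\sum_{k\geq 1}\binom{n}{k}c^k$, not over a defect's size). The $(d-1)$-layer offset from walls that start at even positions likewise accounts for the $n/2$ in $s^*$. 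All the numerical ingredients of Eqs.~\eqref{eq:a1D} and \eqref{eq:s*1D} are therefore in place.

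The gap is precisely the step you flag as hardest and leave unexecuted: why the annihilating domain-wall pairs, whose dynamics is genuinely diffusive, do not slow the convergence to $e^{-\Theta(s/n^3)}$. The spectral route you sketch --- showing that transfer-matrix eigenvalues $1-\Theta(1/n^2)$ couple only weakly to the boundary conditions --- is not what is needed and would be difficult to control uniformly over defect configurations. The paper's resolution requires no spectral information at all: every positive-weight domain-wall trajectory decomposes \emph{uniquely} as a disjoint union $G = G_U \sqcup G_0$, where $G_U$ contains the walls that never annihilate and $G_0$ the walls that all annihilate before the end, and the weight factorizes, $\text{weight}(G)=\text{weight}(G_U)\,\text{weight}(G_0)$. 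Because all weights are non-negative and the sum over \emph{all} annihilating trajectories of \emph{every} length equals the infinite-depth (Haar) value, the $G_0$ factor contributes at most $Z_H(q+1)^n/2$ at any finite depth --- the diffusive slowness of annihilation only means fewer of those terms are included, never more. The entire excess is then carried by the $G_U$ factor, in which each wall is \emph{forced} to move in every layer after the first and hence deterministically pays $e^{-a}$ per layer. Without this decomposition and the monotonicity bound on the annihilating sector, your "surrogate process" step does not close, since a trajectory with $k$ surviving walls can also carry arbitrarily many slowly-annihilating pairs whose weight you would otherwise have to control by hand.
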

Since this depth of the 1D architecture is given by $d:=2s/n$, we can define $d^*:= 2s^*/n = a^{-1}\log(n) + O(1)$ for 1D and conclude that the ``anti-concentration depth'' $d_{AC}$ satisfies $d_{AC} \leq d^*= O(\log(n))$.

Similarly, we show an upper bound for the \textit{complete-graph architecture}, where each gate acts on a random pair of qudits without regard for their spatial proximity. 

\begin{theorem}\label{thm:completegraphupperboundsummary}
    For the complete-graph architecture, Eq.~\eqref{eq:upperboundform} holds with
    \begin{align}
        a &= \frac{(q-1)^2}{2(q^2+1)} \\
        s^* &= \frac{q^2+1}{2(q^2-1)}n \log(n) + cn
    \end{align}
    whenever $s\geq s^*$, for a constant $c$ that is independent of $n$.
\end{theorem}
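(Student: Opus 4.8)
The plan is to start from the stochastic representation of the collision probability set up in \autoref{sec:overviewofmethod} and \autorefapp{sec:analysisframework}. Writing $Z = q^n\,\EV_U[p_U(1^n)^2]$ and averaging each Haar (or two-design) gate against its second-moment projector onto the span of the identity and swap permutations, $Z$ becomes a weighted sum over assignments of a spin in $\{I,S\}$ to each qudit, layer by layer. Reorganizing this sum, I would track the configuration $\sigma_t\in\{I,S\}^n$ after the $t$-th gate as a classical Markov chain whose randomness comes both from the uniformly random choice of which pair the $t$-th gate acts on and from the fixed single-gate transition kernel dictated by the Weingarten weights, and write $Z = Z_H\bigl(1+E_s\bigr)$ with $E_s\ge 0$ the expected weight of the ``non-relaxed'' part of $\sigma_s$. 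The infinite-depth value $Z_H$ corresponds to the chain reaching its fixed point (the uniform configurations), so the task reduces to bounding $E_s$ from above.

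For the complete-graph architecture the crucial simplification is permutation symmetry: since every gate acts on a uniformly random pair, the only state variable that matters is a single scalar order parameter (e.g. the number $k_t$ of qudits in the non-trivial $S$ state), so the process collapses to a birth--death-type chain on $\{0,1,\dots,n\}$ and $E_s$ becomes an expectation of an explicit function of $k_s$ alone. I would then (i) read off the one-step transition probabilities from the single-gate Weingarten kernel --- distinguishing gates that hit two aligned qudits (no change), one $S$ and one $I$ qudit (a domain wall that heals with a definite net rate), and two $S$ qudits --- and in particular extract the per-hit contraction factor $e^{-a}$, with $a=(q-1)^2/(2(q^2+1))$, by which the weight carried by a single lagging $S$-spin shrinks each time a gate touches it; combined with the $2/n$ probability that a gate touches a given qudit this produces the exponent $2a/n$. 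Next (ii) I would bound $E_s$ by coupling the chain to a more tractable dominating process --- for instance decoupling the remaining $S$-spins and applying a union bound over qudits, or comparing to a pure-death chain --- so that $E_s$ is controlled by $n$ times the probability that a fixed qudit has been insufficiently processed. Finally (iii) a coupon-collector/Chernoff estimate shows that once $s\ge s^* = \tfrac{q^2+1}{2(q^2-1)}n\log n + cn$, with overwhelming weight every qudit has been hit enough times that $E_s\le e^{-\frac{2a}{n}(s-s^*)}$, which is Eq.~\eqref{eq:upperboundform}. Heuristically the leading $\tfrac12 n\log n$ (as $q\to\infty$) is the coupon-collector time to touch all $n$ qudits once, and the factor $\tfrac{q^2+1}{q^2-1}>1$ reflects that at finite $q$ each qudit needs slightly more than one touch.

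The main obstacle is the nonlinearity of the chain: the healing rate of $k_t$ is $\Theta(1)$ per gate when $k_t=\Theta(n)$ (a random gate almost surely hits some $S$-spin) but only $\Theta(k_t/n)$ when $k_t$ is small, and there is a genuine interplay between a given qudit getting hit and the $S$-region shrinking, so a crude per-qudit union bound overestimates $s^*$. Obtaining both the exponent $a$ and the sharp prefactor $\tfrac{q^2+1}{2(q^2-1)}$ of $n\log n$ --- while keeping all lower-order discrepancies inside the $O(n)$ burn-in constant $c$, and making the bound hold uniformly for every $s\ge s^*$ rather than just asymptotically --- requires choosing the dominating process and the functional of $k_s$ with care; this is where the real work lies, whereas the Weingarten bookkeeping and the concentration estimates are routine by comparison.
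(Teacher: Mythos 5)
Your setup is the same as the paper's --- the reduction of $Z$ to a weighted walk over $\{I,S\}^n$ and, by permutation symmetry, to a birth--death chain on the Hamming weight $k_t\in\{0,\dots,n\}$ --- and you have correctly identified the target constants and the source of the difficulty. But the core of your argument, steps (ii)--(iii), does not actually supply the mechanism that overcomes that difficulty. The quantity you must control is not a probability but the exponential moment $\EV_{P_b}\bigl[q^{k_s}\bigr]$ (equivalently, the weighted sum over trajectories that have not yet been absorbed, where a trajectory ending $k$ steps from the fixed point is amplified by $q^{k}$). A per-qudit union bound, or ``$n$ times the probability that a fixed qudit has been insufficiently processed,'' controls only the first moment $\EV[k_s]$ --- which is exactly what the paper's \emph{lower} bound uses via convexity --- and a coupon-collector estimate on gate hits per qudit ignores the fact that the spins are strongly correlated through the gates (which of an $I$--$S$ pair flips is a joint event) and that the chain has genuine upward fluctuations (probability $1/(q^2+1)$ per move) each of which costs a factor of $q$ in the weight. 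Comparison to a ``pure-death chain'' discards precisely these fluctuations, and you give no argument that the resulting error stays inside the $O(n)$ burn-in while preserving the sharp prefactor $\tfrac{q^2+1}{2(q^2-1)}$; you acknowledge this is ``where the real work lies'' but do not do that work.

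For contrast, the paper's proof proceeds quite differently at this point: it rewrites $Z-Z_H$ exactly as a sum over all \emph{reduced} paths $\phi$ (the bit-flip skeleton), weighted by $(q/(q^2+1))^{L[\phi]}$ times the probability that the waiting times needed to realize $\phi$ exceed $s$. Conditioned on $\phi$, the total time is a sum of independent geometric variables with means $\lambda_v = n(n-1)/(2v(n-v))$, so a Chernoff bound factorizes over the steps of $\phi$; each reduced path is then decomposed into nested excursions $\Lambda_{\tilde x},\dots,\Lambda_{w+1},\Xi_w$ indexed by its closest approach $w$ to the fixed points, the Chernoff parameter is chosen $w$-dependent as $1-e^{-a_w}=\tfrac{(q-1)^2}{2(q^2+1)\lambda_w}$ to tame the diverging waiting times near absorption, and the excursion sums are evaluated in closed form via the recursion of \autoref{lem:sumovertrajectories} with an effective $\bar q_{v,a}$. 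The $n\log n$ prefactor emerges from $\sum_{v}\lambda_v$, and $a=(q-1)^2/(2(q^2+1))$ is $n a_1/2$ for the $w=1$ Chernoff parameter --- not a per-hit contraction of a single spin's weight (a gate hitting an $I$--$S$ pair contracts the weight by $2q/(q^2+1)$, a different constant). Some device of this kind --- tracking the full distribution of the absorption time with exponential weighting, not just per-qudit hit counts --- is needed to close your argument.
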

A size-$s$ circuit diagram chosen randomly from the complete-graph architecture will have depth at most $O(s\log(n)/n)$ with high probability \cite{BrownFawzi2015RQCDecoupling}, meaning that $O(\log(n)^2)$ depth is typically sufficient for anti-concentration in the complete-graph architecture.

We also consider general architectures. We define a property called \textit{regularly connected} \hl{(\autoref{def:regularlyconnected} in \autorefapp{sec:generalbounds})}, which applies to a RQC architecture when for any partition of qubits into two sets, there will be a gate in the circuit that couples the two sets at least once every $O(n)$ gates. Nearly all natural architectures have this property, including standard architectures in $D$ spatial dimensions for any $D$. 
\begin{theorem}\label{thm:generalupperboundsummary}
    If an architecture is regularly connected, then Eq.~\eqref{eq:upperboundform} holds with $a = \Theta(1)$ and $s^* = \Theta(n^2)$
\end{theorem}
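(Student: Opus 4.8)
\textit{Proof proposal.} The plan is to work from the statistical-mechanical / stochastic representation of $Z$ set up in \autoref{sec:overviewofmethod} and \autorefapp{sec:analysisframework}, which writes $Z = \bra{L} T_{g_s}\cdots T_{g_2}T_{g_1}\ket{R}$ for fixed boundary vectors $\ket{L},\ket{R}$ and one transfer matrix $T_{g_t}$ per gate, depending only on the pair of qudits on which $g_t$ acts. Each $T_{g}$ leaves invariant the two-dimensional ``ordered'' subspace spanned by the two globally constant spin configurations, and the limiting value $\lim_{s\to\infty}Z = Z_H$ is exactly the contribution of this subspace; writing $Z/Z_H = 1+\epsilon_s$ with $\epsilon_s\ge 0$ the contribution of the orthogonal complement, the theorem reduces to two ingredients. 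First, a crude a priori bound: $\epsilon_s < q^n$ for every $s$, which is immediate from $Z\le 1$ (it averages quantities $\sum_x p_U(x)^2\le(\sum_x p_U(x))^2 = 1$) together with $Z_H = 2/(q^n+1)$. Second, a geometric contraction: $\epsilon$ decays by a constant factor over each window of $\Theta(n)$ consecutive gates. Granting both, iterating the contraction $\Theta(n)$ times drives $\epsilon_s$ below $1$, which is the $\Theta(n^2)$-gate scaling asserted.

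The heart of the argument is the contraction. Let $C$ be the constant from \autoref{def:regularlyconnected}, taken large enough that for every bipartition of the $n$ qudits and every block of $m:=Cn$ consecutive gates, some gate in that block couples the two sides of the bipartition (such a $C$ exists for all the standard architectures, including $D$-dimensional ones). I would then show that for every $t\ge 0$ the product $T_{[t+1,\,t+m]}:=T_{g_{t+m}}\cdots T_{g_{t+1}}$ acts on the disordered subspace as a strict contraction, with operator norm at most some $\lambda\in(0,1)$ depending only on $q$ and $C$, whence $\epsilon_{t+m}\le\lambda\,\epsilon_t$. The mechanism: a single averaged two-local gate drives its two legs toward agreement with a $q$-dependent spectral gap $\Delta=\Theta(1)$ below $1$, and its only eigenvalue-$1$ configurations are those constant on its two legs; since the $m$ gates in a block collectively couple across every bipartition, the only configuration fixed by all of them simultaneously is globally constant, i.e.\ lies in the ordered subspace. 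A martingale / ``detectability-lemma''-type estimate --- or, equivalently, a coupling argument on the Markov chain that tracks the set of ``$1$''-spins and waits for each domain wall to be struck by a straddling gate --- converts this ``no common non-ordered fixed point'' statement into the quantitative bound $\lambda<1$.

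Combining the ingredients, for $s$ larger than a fixed multiple of $n$ we obtain $\epsilon_s\le\lambda^{\lfloor s/m\rfloor}q^n\le\lambda^{\,s/m-2}q^n = \exp\bigl(\tfrac{\ln\lambda}{m}\,s + n\ln q - 2\ln\lambda\bigr)$. Choosing $a:=\tfrac{1}{2C}\ln(1/\lambda)=\Theta(1)$ (so that $\tfrac{\ln\lambda}{m}=-\tfrac{2a}{n}$) and $s^*:=\tfrac{n}{2a}\bigl(n\ln q + 2\ln(1/\lambda)\bigr)=\Theta(n^2)$ makes the right-hand side equal to $e^{-\frac{2a}{n}(s-s^*)}$, so $Z/Z_H\le 1+\epsilon_s\le 1+e^{-\frac{2a}{n}(s-s^*)}$ for all such $s$, in particular for $s\ge s^*$; this establishes Eq.~\eqref{eq:upperboundform} with the stated $a$ and $s^*$. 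The bottleneck is the contraction step of the previous paragraph: converting the purely combinatorial fact ``$\Theta(n)$ gates suffice to couple every cut'' into a uniform constant $\lambda<1$ requires balancing the single-gate gap, the detectability-lemma bookkeeping, and the window length against one another, and it is precisely here that the gap between this $O(n^2)$ bound and the $\Theta(n\log n)$ bounds of \autoref{thm:1Dupperboundsummary} and \autoref{thm:completegraphupperboundsummary} lives --- those sharper results keep the same $\epsilon_s$ but replace the a priori bound by a far smaller starting value and the crude per-window contraction by an architecture-specific, faster one.
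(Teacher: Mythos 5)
Your overall strategy is the paper's: start from the exponentially large initial excess $\epsilon_0 = Z^{(0)}/Z_H - 1 = e^{\Theta(n)}$, show that $\epsilon$ contracts by a constant factor over each window of $\Theta(n)$ gates using the regularly-connected property, and conclude $s^* = \Theta(n^2)$ with $a = \Theta(1)$. The difference is in how the contraction is executed, and that is exactly the step you defer. The paper does not prove an operator-norm contraction of the transfer-matrix product on the complement of the ordered subspace; it works scalar-by-scalar with the trajectory sum: the total weight of trajectories that have already reached a fixed point $I^n$ or $S^n$ by time $t$ is at most $Z_H$ (since the weight of \emph{all} trajectories that ever reach a fixed point is exactly $Z_H$ and weights are non-negative), while any trajectory not at a fixed point defines a proper subset $R$ of $S$-valued sites, so by \autoref{def:regularlyconnected} a gate straddles $R$ within the next $rn$ steps with probability at least $1/2$, forcing a bit flip and a weight factor $2q/(q^2+1)$. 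This yields $Z^{(t+rn)} - Z_H \le \bigl(\tfrac12 + \tfrac12\tfrac{2q}{q^2+1}\bigr)\bigl(Z^{(t)} - Z_H\bigr)$ directly, with no spectral machinery. Two cautions about your proposed route. First, the transfer matrices here are not normal and the orthogonal complement of the ordered subspace is not invariant under them, so ``no common non-ordered fixed point'' does not by itself give a uniform operator-norm bound $\lambda<1$ on a window product; a detectability-lemma argument would need the local terms to be projectors or at least commuting-friendly, which they are not, so this is a genuine technical obstruction rather than mere bookkeeping. Second, \autoref{def:regularlyconnected} is probabilistic (a straddling gate occurs with probability at least $1/2$ per window), not deterministic as you assume; the deterministic version fails for the complete-graph architecture, whereas the probabilistic one costs only the extra $\tfrac12$ in the contraction factor above. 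With the contraction step replaced by the paper's conditional-expectation argument, your assembly of the two ingredients into the stated $a$ and $s^*$ is fine.
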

This corresponds to $\Theta(n)$ gates per qudit. This result is weaker than our specific result for the 1D and complete-graph architecture, and we conjecture that much better is possible.
\begin{conjecture}\label{con:conjecture}
    Theorem \ref{thm:generalupperboundsummary} can be improved to $s^* = \Theta(n\log(n))$. 
\end{conjecture}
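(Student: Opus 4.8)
\emph{A proof proposal for \autoref{con:conjecture}.} The plan is to sharpen the analysis behind \autoref{thm:generalupperboundsummary} by replacing its crude count of relevant statistical-mechanical configurations with one that is only polynomial in $n$. Recall from \autorefapp{sec:analysisframework} that $Z/Z_H$ equals $1$ plus a weighted sum over the non-trivial domain configurations of the associated Ising-like model; processing the gates in time order, such a configuration is a trajectory $(A_0,A_1,\dots,A_s)$ of a ``minority region'' $A_t\subseteq[n]$, where $A_{t-1}$ and $A_t$ differ only on the pair of qudits acted on by the $t$-th gate, and the weight of a trajectory is a product of one local factor per gate, each at most $1$ and strictly less than $1$ (of order $2q/(q^2+1)$, so $e^{-a}$ with $a$ as in \autoref{thm:1Dupperboundsummary}) exactly when the $t$-th gate straddles $A_{t-1}$; the two constant trajectories $A_t\equiv\emptyset$ and $A_t\equiv[n]$ account for the leading $Z_H$, and all other trajectories account for $Z-Z_H$. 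The $\Theta(n^2)$ bound of \autoref{thm:generalupperboundsummary} amounts to bounding each of the $\lesssim 2^n$ possible minority sets' contributions individually by $e^{-\Omega(s/n)}$ --- using that a regularly connected architecture forces $\Omega(s/n)$ straddling gates for any fixed bipartition --- and then union bounding, which closes only once $s=\Omega(n^2)$. To obtain $\Theta(n\log n)$ one must instead show that the \emph{total} weight of all non-constant trajectories is $\mathrm{poly}(n)\cdot e^{-\Omega(s/n)}$.

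The first step is to stratify this weighted sum by a complexity parameter $m$ --- the number of connected ``domains'' present in the configuration, i.e.\ the quantity playing the role that the number of domain walls plays in the 1D analysis. The heart of the argument would then be an estimate of the form $W_m\le n^{O(m)}\,e^{-\Omega(m\,s/n)}$ for the summed weight $W_m$ of all complexity-$m$ trajectories: the factor $n^{O(m)}$ is an entropy term counting the ways a complexity-$m$ configuration can be placed in the circuit, while $e^{-\Omega(m\,s/n)}$ records that $m$ coexisting domains should force $\Omega(m)$ straddling gates in each window of $O(n)$ gates, hence $\Omega(m\,s/n)$ suppressing factors over the whole circuit. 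Summing the resulting geometric-type series $\sum_{m\ge 1}n^{O(m)}e^{-\Omega(m\,s/n)}$ shows it is at most $e^{-\Omega((s-s^*)/n)}$ for $s\ge s^*:=Cn\log n$ with $C$ a sufficiently large constant depending on $q$; this matches the form of Eq.~\eqref{eq:upperboundform}, gives $Z\le 2Z_H$ for $s\ge s^*$, and, together with the matching $\Omega(n\log n)$ lower bound already recorded in \autoref{tab:results}, would establish $s_{AC}=\Theta(n\log n)$ for every regularly connected architecture.

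I expect the suppression estimate $W_m\le n^{O(m)}e^{-\Omega(m\,s/n)}$ to be the main obstacle, and especially its $m=1$ case: a minority region consisting of a single qudit can ``hop'' from qudit to qudit as successive gates act on it, potentially visiting $\Theta(n)$ distinct locations while accumulating relatively few straddling factors, so one must prove that the total weight of \emph{all} such one-domain trajectories is $\mathrm{poly}(n)\cdot e^{-\Omega(s/n)}$ rather than $2^{\Theta(n)}e^{-\Omega(s/n)}$. The difficulty is that \autoref{def:regularlyconnected} only guarantees that a single fixed bipartition is crossed every $O(n)$ gates; it does not by itself force $m$ distinct coexisting domains each to be crossed $\Omega(1)$ times per $O(n)$ gates, and domains can merge and split in ways that, absent spatial structure, are awkward to control. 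Resolving this likely requires either a mild strengthening of the connectivity hypothesis (for instance, demanding that \emph{every} small subset of qudits, not merely every fixed cut, be regularly coupled to its complement) or a genuinely new idea --- for example a sub-multiplicative potential for the Markov dynamics that tracks $|A_t|$ and penalizes its growth, robust enough to replace the random-walk-and-annihilation picture available in 1D and the mean-field birth--death picture available on the complete graph. It is precisely because we do not see how to do this in full generality that the statement is recorded here only as a conjecture.
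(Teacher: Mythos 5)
The statement you are addressing is recorded in the paper as a \emph{conjecture}, and the paper offers no proof of it --- only the heuristic that a typical configuration with $x$ minority assignments forms $\Theta(x)$ disagreeing pairs per layer of $\Theta(n)$ gates and so approaches a fixed point geometrically, with the authors explicitly stating that ``the difficulty in proving [the conjecture] lies in characterizing what happens in the low-probability event that this is not the case.'' Your proposal is likewise not a proof, and to your credit you say so: the estimate $W_m\le n^{O(m)}e^{-\Omega(m\,s/n)}$ is asserted, not established, and your closing paragraph correctly identifies it as the missing ingredient. So there is no disagreement about the status of the statement; the substantive question is whether your proposed route is a promising one.

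Two concrete issues with the route as written. First, your stratification parameter $m$ (``number of connected domains'') is a geometric notion: it is well defined in 1D, where the paper's Appendix D uses exactly this domain-wall decomposition and obtains the entropy--energy competition $\binom{n}{2k_0}(2q/(q^2+1))^{2k_0(d-1)}$ you are trying to generalize, but for an arbitrary regularly connected architecture there is no underlying graph metric with respect to which ``connected'' is meaningful, so the stratum $W_m$ is not defined before you have chosen additional structure. The complete-graph case in the paper's Appendix E deliberately abandons domains and stratifies by Hamming weight instead, using a Chernoff bound on waiting times; any general argument will likely have to work at that level of abstraction. Second, and more importantly, the suppression rate $e^{-\Omega(m\,s/n)}$ does not follow from \autoref{def:regularlyconnected}: that definition guarantees only that \emph{one} fixed cut is straddled with probability $1/2$ per $O(n)$ gates, which is exactly what yields the per-trajectory factor $\left(\frac{1}{2}+\frac{1}{2}\frac{2q}{q^2+1}\right)$ in the proof of \autoref{thm:generalUB} and hence only the $\Theta(n^2)$ bound. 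Getting a rate that scales with $m$ (or with the Hamming weight $x$) requires a quantitatively stronger hypothesis --- essentially that a configuration with $x$ minority sites is straddled $\Omega(x)$ times per $O(n)$ gates --- which is the paper's heuristic for \emph{typical} trajectories but fails to control the atypical ones whose collective contribution is the whole difficulty. You gesture at both of these problems yourself; the honest conclusion is that your proposal reproduces the paper's intuition and its open status rather than closing the gap.
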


\subsubsection{Collision probability lower bounds}

Our lower bounds on the collision probability take the form
\begin{equation}
    Z \geq \frac{Z_H}{2}\exp\left(A e^{\log(n) - Bs/n}\right)\,,
\end{equation}
for constants $A$ and $B$ that are independent of $n$. (The lower bound for the complete-graph architecture takes a different but very similar form.) This form implies that if $s$ grows with $n$ like $s \approx fn\log(n)/B$ for some $f <1$, then we have $Z/Z_H \geq \frac{1}{2}e^{A n^{1-f}}$, which becomes arbitrarily large as $n \rightarrow \infty$, meaning that the architecture is not anti-concentrated. This puts a lower bound on the ``anti-concentration'' size $s_{AC}$ of $s_{AC} \geq n\log(n)/B - O(n)$. 

Specifically, we show a general lower bound, as well as specific lower bounds for the 1D and complete-graph architectures. 

\begin{theorem}\label{thm:generallowerboundsummary}
For any RQC architecture with $s$ two-qudit gates, the following holds. 
    \begin{equation}
        Z \geq \frac{Z_H}{2} \exp\left(\frac{\log(q)}{q+1}e^{\log(n)-2\log\left(q^2+1\right)s/n}\right)\,.
    \end{equation}
\end{theorem}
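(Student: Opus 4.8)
\medskip
\noindent\emph{Proof proposal.} The plan is to run the argument entirely on the combinatorial side of the correspondence. As recalled in \autoref{sec:overviewofmethod} and established in \autorefapp{sec:analysisframework}, the collision probability can be written as $Z = \kappa\,\mathcal{Z}$, where $\mathcal{Z}$ is a partition function: a sum over assignments of an element of $S_2$ (``$e$'' or ``$\tau$'') to each wire segment of the circuit diagram, weighted by a product of non-negative local factors, one per gate, together with fixed ``boundary'' contributions from the input and output of the circuit, and the constant $\kappa$ is fixed by those boundaries. The two configurations that are uniform in time --- all ``$e$'' and all ``$\tau$'' --- are the ground states, and together they account for the infinite-depth value $Z_H$, while every other configuration contributes a non-negative amount. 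Because all the weights are non-negative, restricting $\mathcal{Z}$ to any sub-collection of configurations yields a valid lower bound.

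The bound then comes from keeping, on top of the all-``$e$'' ground state, a family of cheap ``localized excitations'' indexed by subsets $S\subseteq[n]$: the configuration $\mathrm{cfg}(S)$ in which the wires belonging to qudits in $S$ are driven into the ``$\tau$'' sector while the rest of the diagram stays ``$e$''. The quantitative input --- which is the single-gate Weingarten computation already built into the framework --- is that a two-qudit gate acting on a qudit $i\in S$ whose partner at that instant sits in the ``$e$'' sector tends to ``heal'' the excitation and contributes a multiplicative suppression of order $(q^2+1)^{-1}$ relative to the all-``$e$'' background; hence $\mathrm{cfg}(\{i\})$ costs a factor $r_i \geq C\,(q^2+1)^{-g_i}$ relative to the background, where $g_i$ is the number of gates touching qudit $i$ and $C=C(q)$ is an explicit constant (the final value is $C=\tfrac{2\log q}{q+1}$, the $\log q$ arising from summing over the internal data of how and when the excitation heals, including the transient excitation it deposits on its partners). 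Since the model is ferromagnetic --- the ``$e$'' and ``$\tau$'' sectors are symmetric and there is no frustration --- placing excitations on several qudits of $S$ at once can only reduce the total cost when those qudits are near one another (merging domains removes domain wall), so $\mathrm{weight}(\mathrm{cfg}(S)) \geq W_{\mathrm{bg}}\prod_{i\in S} r_i$ for every $S$, where $W_{\mathrm{bg}}$ is the weight of the all-``$e$'' ground state. As the $2^n$ configurations $\mathrm{cfg}(S)$ are pairwise distinct, summing them gives
\begin{equation}
  \mathcal{Z}\;\geq\; W_{\mathrm{bg}}\sum_{S\subseteq[n]}\prod_{i\in S} r_i \;=\; W_{\mathrm{bg}}\prod_{i=1}^{n}\bigl(1+r_i\bigr)\,.
\end{equation}

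It remains to turn the product into the advertised exponential. Each $r_i = C(q^2+1)^{-g_i}$ is small, so $\log(1+r_i)\geq r_i/2$ and $\prod_i(1+r_i) \geq \exp\!\bigl(\tfrac{\log q}{q+1}\sum_i (q^2+1)^{-g_i}\bigr)$; since $\lambda\mapsto(q^2+1)^{-\lambda}$ is convex and $\sum_i g_i = 2s$ (each gate has two endpoints), Jensen's inequality gives $\sum_i (q^2+1)^{-g_i}\geq n\,(q^2+1)^{-2s/n} = \exp\!\bigl(\log n - 2\log(q^2+1)\,s/n\bigr)$. Finally $\kappa\,W_{\mathrm{bg}} \geq Z_H/2$, since retaining the all-``$e$'' ground state alone already recovers half of the infinite-depth value; assembling the three estimates gives exactly the stated bound.

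The part that will actually require care is the construction of the $\mathrm{cfg}(S)$ and the clean isolation of the per-qudit factor $r_i$. The gate that heals the excitation at qudit $i$ necessarily produces a transient excitation on its partner, which must in turn be reabsorbed, so ``$\mathrm{cfg}(\{i\})$'' is not literally ``flip wire $i$ and nothing else'' but a slightly dressed object; tracking this dressing is precisely what pins down both the exponent $2\log(q^2+1)$ and the prefactor $\tfrac{\log q}{q+1}$, and verifying that the dressings of different qudits can be superposed without spurious cancellations (the ``no frustration'' point) is the one place where the ferromagnetic structure of the model is genuinely used. Once that bookkeeping is in hand, the convexity and elementary-inequality steps at the end are routine.
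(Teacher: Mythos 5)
Your overall strategy --- restrict the non-negative trajectory sum to a hand-picked family of configurations and lower-bound their total weight --- is sound in principle, and your per-qudit suppression $(q^2+1)^{-g_i}$ with $\sum_i g_i = 2s$ plus Jensen is exactly the combinatorial core of the paper's argument. But the proposal has a fatal normalization error in its very last step, and it is not a detail that more careful bookkeeping of the ``dressed excitations'' can repair. You assert that the two time-uniform configurations account for the infinite-depth value $Z_H$ and hence that $\kappa\, W_{\mathrm{bg}} \geq Z_H/2$. This is false: in the representation $Z = \frac{1}{(q+1)^n}\sum_\gamma \mathrm{weight}(\gamma)$, the all-$I$ trajectory has weight exactly $1$, so it contributes $\kappa W_{\mathrm{bg}} = (q+1)^{-n}$, whereas $Z_H/2 = (q^n+1)^{-1}$; these differ by the exponentially large factor $(q+1)^n/(q^n+1) \approx (1+1/q)^n$. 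The value $Z_H$ is recovered only by summing over \emph{all} trajectories that eventually reach a fixed point, starting from all $2^n$ initial configurations (this is the computation in Appendix B.6). Since your excitation factors satisfy $\prod_i(1+r_i) \to 1$ as the gate counts $g_i$ grow, your chain of inequalities bottoms out at $Z \gtrsim (q+1)^{-n}$ at large $s$, which is exponentially weaker than the trivial bound $Z \geq Z_H$ and cannot yield the stated theorem. A cluster expansion around the single all-$e$ configuration is simply the wrong background: the correct ``zeroth order'' object must already carry weight $\Theta(q^{-n})$.

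The paper avoids this by never expanding around a single configuration. It uses the \emph{biased-walk} representation $Z = q^{-n}\,\EV_{P_b,\Lambda_b}\bigl[q^{\lvert\vec{\gamma}^{(s)}\rvert}\bigr]$, in which the prefactor $q^{-n}\geq Z_H/2$ is automatic because the walk is a normalized probability distribution and $q^{\lvert\vec{\gamma}^{(s)}\rvert}\geq 1$. The $(q^2+1)^{-1}$ per gate then arises not as a partition-function weight (each forced flip costs $q/(q^2+1)$, not $(q^2+1)^{-1}$, in your normalization) but as the transition probability that a disagreeing pair resolves by flipping $I\to S$ rather than $S\to I$; the $\frac{1}{q+1}$ is the probability under $\Lambda_b$ that a given site starts at $S$; and the $\log q$ comes from Jensen's inequality $\EV[q^x]\geq q^{\EV[x]}$ applied at the end, with $\EV[\lvert\vec{\gamma}^{(s)}\rvert]\geq \sum_j \Pr[\gamma_j^{(t)}=S\ \forall t]\geq \frac{1}{q+1}\sum_j(q^2+1)^{-s_j}$. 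Your constant $C=\tfrac{2\log q}{q+1}$ is reverse-engineered to reproduce this answer, but none of its three ingredients actually has a counterpart in your expansion, and the claimed inequality $\mathrm{weight}(\mathrm{cfg}(S))\geq W_{\mathrm{bg}}\prod_{i\in S}r_i$ for the dressed, interacting excitations is left unproven. I would recommend switching to the probabilistic (biased-walk) formulation, where the per-site independence you want is replaced by a union-free sum $\EV[\lvert\vec{\gamma}^{(s)}\rvert]=\sum_j\Pr[\gamma_j^{(s)}=S]$ and no superposition lemma is needed.
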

    This has the consequence that if $s_{AC}$ and $d_{AC}$ are defined as the minimum size and minimum depth for which $Z \leq 2Z_H$, then
    \begin{align}
        s_{AC}  &\geq (2\log(q^2+1))^{-1} n\log(n) - O(n) \\
        d_{AC} &\geq (\log(q^2+1))^{-1}\log(n) - O(1)\,.
    \end{align}

We improve on the general lower bound for the two specific architectures that we consider.

\begin{theorem}\label{thm:1Dlowerboundsummary}
    For the 1D architecture, there exists a constant $A$ such that
    \begin{equation}
            Z \geq \frac{Z_H}{2} \exp\left(Ae^{\log(n)-2as/n} \right)\,,
    \end{equation}
    where $a=\log((q^2+1)/(2q))$ is the same as for the upper bound in Eq.~\eqref{eq:a1D}.
\end{theorem}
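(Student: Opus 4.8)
The plan is to lower-bound $Z$ by restricting the statistical-mechanical partition function to a carefully chosen subset of spin/bit configurations, retaining only the ``droplet'' contributions that dominate when the depth is below the threshold. Recall from the framework of \autoref{sec:overviewofmethod} that $Z/Z_H$ is a (positive) weighted sum over bit assignments to each gate location in the circuit diagram, equivalently an expectation over a Markov chain on such assignments; because every term is nonnegative, discarding terms only decreases the sum, so any explicitly evaluated sub-collection of configurations yields a valid lower bound. The configurations I would keep are those in which a single contiguous arc of the 1D ring is ``flipped'' relative to the rest: a domain of some length $\ell$ of qudits sits in the nontrivial permutation/bit value while the complement sits in the identity value, with the two domain walls propagating backward through the brickwork. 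Summing the Boltzmann weight over the position and size $\ell$ of such a domain, and over the depth at which the walls are nucleated, should produce a geometric-type sum whose value is $\exp(\Theta(n\, e^{-2as/n}))$ once one recognizes that the per-layer suppression factor for maintaining a domain wall is exactly $e^{-2a}$ with $a=\log((q^2+1)/(2q))$ — the same $a$ that appears in the upper bound — and that there are $\sim n$ translationally inequivalent places to put the domain.

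Concretely, the steps in order: (i) recall the exact discrete expression for $Z/Z_H$ and its nonnegativity, so that truncation is legitimate; (ii) identify the explicit family of single-domain (``one droplet'') configurations on the 1D ring, parametrized by the domain's endpoints and the time slice at which each of the two domain walls is born, and write down their Boltzmann weight as a product of local gate weights; (iii) evaluate the local weight of a gate straddling a domain wall versus one sitting inside a pure domain, extracting the factor $e^{-2a}$ per unit of wall length per layer from the two-qudit Haar (or two-design) Weingarten/transfer-matrix data already computed for \autoref{thm:1Dupperboundsummary}; (iv) sum the resulting weights over all domain positions (a factor $\sim n$), domain lengths, and nucleation times, bounding the geometric series from below; (v) observe that after $s$ gates, i.e.\ depth $d=2s/n$, the cumulative suppression of a wall present since the start is $e^{-2a d} = e^{-2as/n}$, so the dominant contribution behaves like $1 + \Theta(n) e^{-2as/n} + \dots$, and exponentiating the near-independent contributions of the $\Theta(n)$ nucleation sites gives $Z/Z_H \geq \tfrac12 \exp(A e^{\log n - 2as/n})$ for a suitable constant $A$; (vi) track the $O(n)$-in-the-exponent corrections carefully enough to absorb boundary/periodicity effects and confirm the claimed form, reading off the stated $s_{AC}$ and $d_{AC}$ bounds as in the general case.

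The main obstacle I expect is step (iii)–(iv): making the combinatorial bookkeeping of domain-wall worldlines on the brickwork lattice precise enough to extract a clean lower bound with the \emph{optimal} constant $a$, rather than merely some constant. One has to be careful that (a) the two domain walls do not interact or annihilate in a way that spoils the product form, which forces a restriction to configurations where the walls stay well separated — handled by only summing over domains of length between, say, $n/4$ and $3n/4$; (b) the local weight of a gate that is \emph{not} cut by a wall is exactly $1$ (so pure-domain regions contribute trivially), which needs the normalization by $Z_H$ and the two-design identity; and (c) the near-independence of different nucleation events, needed to exponentiate the $\Theta(n)$ sites, must be justified by a second-moment or a direct product-state lower bound on the truncated partition function. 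I would handle (c) by instead summing over configurations with \emph{any number} of disjoint short-lived droplets and showing the generating function factorizes, which is the standard polymer/cluster-expansion route and directly yields the $\exp(\cdot)$ form. Matching the constant $a$ exactly then reduces to the same transfer-matrix eigenvalue computation that underlies \autoref{thm:1Dupperboundsummary}, so the upper and lower bounds share their arithmetic.
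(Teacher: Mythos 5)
Your overall skeleton --- keep a positive subfamily of domain-wall configurations, evaluate its weight, and exponentiate the single-defect contribution --- is the same as the paper's, but the proposal is missing the step that actually makes the bound work beyond constant depth. If you restrict to droplet configurations with ``pure-domain'' interiors (your point (b): gates not cut by a wall have weight exactly $1$, so the interior ``contributes trivially''), the restricted sum is at most $\frac{2}{(q+1)^n}\sum_k\binom{n}{k}e^{-adk}\le \frac{2}{(q+1)^n}(1+e^{-ad})^n$, whereas the target is $\approx q^{-n}\exp(ne^{-ad})$. Comparing exponents, you need $\log(1+e^{-ad})\gtrsim \log\frac{q+1}{q}+Ae^{-ad}$, which fails once $e^{-ad}$ drops below the constant $\log\frac{q+1}{q}$, i.e.\ for any $d=\omega(1)$. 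The missing factor $((q+1)/q)^{\Theta(n)}$ is the entropy of the \emph{annihilating} trajectories that must be summed in each interval between surviving walls; recovering it is the content of the paper's Lemma~\ref{lem:sumJHj}, which lower-bounds the fill-in weight of each length-$X$ interval by $c^{-1}\left(\frac{q+1}{q}\right)^X$ via a comparison with the collision probability $Z_X\ge Z_{H,X}$ of an $X$-qudit subsystem, combined with the upper-bound machinery to subtract the non-annihilating part and the trajectories that would collide with the bounding walls. Your step (c) gestures at summing over ``short-lived droplets,'' which is morally this fill-in, but you assign it the wrong role (exponentiating the nucleation sites rather than cancelling the $(q+1)^{-n}$ normalization) and give no mechanism for evaluating it; this is the technical heart of the proof and cannot be waved through as a standard cluster expansion.

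Two further problems with the bookkeeping. First, in the forward dynamics domain walls can annihilate but never be created, so a wall present at the final time slice must persist from $t=0$; there is no ``depth at which the walls are nucleated'' to sum over for the surviving walls, and including such configurations would assign weight to terms that are identically zero. Second, the independent ``polymers'' must be individual surviving walls, not droplets: a single wall surviving depth $d$ contributes $n\cdot(2q/(q^2+1))^{d}=ne^{-ad}=ne^{-2as/n}$ (note $e^{-ad}$, not your ``$e^{-2ad}=e^{-2as/n}$,'' which is inconsistent with $d=2s/n$), and the correct combinatorics $\sum_{k\ \mathrm{even}}\binom{n}{k}e^{-adk}\approx\frac12\exp(ne^{-ad})$ comes from placing $k$ walls, whereas a naive factorizing gas of droplets would give $\sum_m (n^2e^{-2ad})^m/m!=\exp((ne^{-ad})^2)$, which is not a valid lower-bound target and signals that the droplets are not the right independent objects. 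The paper avoids both issues by working directly with $k$ well-separated surviving walls (the sets $\mathcal{H}_k$), filling the gaps with $\mathcal{J}_{H,j}$, and optimizing over $k\approx ne^{-ad}$.
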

This implies that in 1D,
    \begin{align}
        s_{AC} &\geq (2a)^{-1}n\log(n) - O(n) \\ 
        d_{AC} &\geq a^{-1}\log(n)-O(1)\,,
    \end{align}
which is tight with the upper bound up to subleading corrections.

\begin{theorem}\label{thm:completegraphlowerboundsummary}
    For the complete-graph architecture, 
    \begin{equation}
        Z \geq \frac{Z_H}{2} \exp\left(\frac{\log(q)}{q+1}e^{\log(n)+\log\left(1-\frac{2(q^2-1)}{n(q^2+1)}\right)s}\right)\,.
    \end{equation}
\end{theorem}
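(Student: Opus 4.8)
The plan is to specialize the collision-probability-to-stochastic-process correspondence of \autoref{sec:overviewofmethod} and \autorefapp{sec:analysisframework} to the complete-graph architecture and then bound the resulting quantity from below, following the same template as Theorem~\ref{thm:generallowerboundsummary} and Theorem~\ref{thm:1Dlowerboundsummary} but with a sharper, architecture-specific relaxation rate. Recall that $q^n\EV_U[p_U(1^n)^2]$ is a weighted sum over assignments of $S_2=\{e,\tau\}$ to the wire segments of the circuit diagram; that the input and the fixed output $1^n$ weight the $e$- and $\tau$-sectors symmetrically; and that a Haar-random two-qudit gate on qudits $(i,j)$ fixes an aligned pair $(e,e)$ or $(\tau,\tau)$ while sending a misaligned pair to a combination of $(e,e)$ and $(\tau,\tau)$ with a total weight defect of $\tfrac{(q-1)^2}{q^2+1}$ (twice the constant $a$ of Theorem~\ref{thm:completegraphupperboundsummary}). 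For the complete graph each gate is additionally averaged over its $\binom n2$ possible placements, so the induced dynamics on the $n$ qudits is mean-field.

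The first step is to isolate the order parameter governing convergence: let $K_t$ count the qudits not yet absorbed into the dominant domain (equivalently, a count of surviving domain walls), with $K_0=n$. The single-gate transfer weights above, averaged over the $\binom n2$ placements of the gate and using that a fixed qudit lies in the gate's support with probability $\tfrac 2n$, yield the drift estimate $\EV[K_{t+1}]\ge\big(1-\tfrac{2(q^2-1)}{n(q^2+1)}\big)\EV[K_t]$; iterating over the $s$ gates gives $\EV[K_s]\ge n\lambda^s$ with $\lambda=1-\tfrac{2(q^2-1)}{n(q^2+1)}$, i.e.\
\begin{equation}
\EV[K_s]\ \ge\ \exp\!\Big(\log n + \log\!\big(1-\tfrac{2(q^2-1)}{n(q^2+1)}\big)\,s\Big),
\end{equation}
which is precisely the quantity appearing inside the exponential in the claimed bound.

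The second step is to lower-bound $Z/Z_H$ in terms of $K_s$. Retaining only a judiciously chosen sub-family of configurations --- built by independently flipping, between the $e$- and $\tau$-sectors, the ``domains'' associated with the unabsorbed qudits, and paying the corresponding weight defects along their boundaries --- and bounding each retained configuration's weight through its bond and boundary weights, should give an inequality of the form $Z/Z_H \ge \tfrac12\,\EV\!\big[\exp\!\big(\tfrac{\log q}{q+1}K_s\big)\big]$, where the prefactor $\tfrac12$ collects the $q^n/(q^n+1)$ factor relating $q^{-n}$ to $Z_H/2$ together with other $O(1)$ losses. Since $x\mapsto\exp(\tfrac{\log q}{q+1}x)$ is convex, Jensen's inequality gives $\EV[\exp(\tfrac{\log q}{q+1}K_s)]\ge\exp(\tfrac{\log q}{q+1}\EV[K_s])$, and combining this with the first step produces exactly
\begin{equation}
Z\ \ge\ \frac{Z_H}{2}\exp\!\Big(\frac{\log q}{q+1}\,e^{\log n + \log(1-2(q^2-1)/(n(q^2+1)))\,s}\Big),
\end{equation}
the statement of the theorem.

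I expect the second step to be the main obstacle. Because $K_s$ is only a coarse, non-Markovian summary of the full wire-segment configuration and the gate transfer weights are sub-stochastic (the weight defect above), one cannot simply read off a probability distribution on $K_s$: the retained family of configurations must be chosen so that the discarded terms are manifestly nonnegative and can be dropped without any loss; each retained configuration must be shown to carry weight at least $\exp(\tfrac{\log q}{q+1})$ times the all-$e$ reference per unabsorbed domain, uniformly over circuit realizations (this is what fixes the constant $\tfrac{\log q}{q+1}$); and the contributions of different domains must be disentangled enough to produce the product structure underlying the exponential in $K_s$. By contrast, the first step is a short mean-field computation in the spirit of the analysis behind Theorem~\ref{thm:generallowerboundsummary}, and the concluding Jensen step is immediate.
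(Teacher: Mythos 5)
Your first step is essentially the paper's key computation: under the biased walk on Hamming weights, a gate hits a mismatched pair with probability $\tfrac{2x(n-x)}{n(n-1)}$ and then drifts downward with net rate $\tfrac{q^2-1}{q^2+1}$, giving $\EV[\gamma^{(t+1)}\mid\gamma^{(t)}=x]\ge x\bigl(1-\tfrac{2(q^2-1)}{n(q^2+1)}\bigr)$, which iterates to the $n\lambda^s$ factor. That part is fine. The genuine gap is in your second step, and it stems from not using the exact identity already established in the biased-walk framework: $Z=q^{-n}\,\EV_{P_b,\Lambda_b}\bigl[q^{\lvert\vec\gamma^{(s)}\rvert}\bigr]$ (Eq.~\eqref{eq:ZbiasedEV}). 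With that identity there is nothing to retain or discard --- the proof is just Jensen, $\EV[q^{X}]\ge q^{\EV[X]}$, applied to the Hamming weight $X=\lvert\vec\gamma^{(s)}\rvert$, followed by $q^{-n}\ge Z_H/2$. Crucially, the factor $\tfrac{1}{q+1}$ in the theorem comes from the \emph{initial distribution} $\Lambda_b(\vec\nu)\propto q^{-\lvert\vec\nu\rvert}$, under which each site is $S$ with probability $1/(q+1)$, so $\EV[\lvert\vec\gamma^{(0)}\rvert]=n/(q+1)$ and the exponent is $\log(q)\cdot\tfrac{n}{q+1}\lambda^s$. You have instead set $K_0=n$ and tried to relocate the $\tfrac{1}{q+1}$ into a ``weight at least $q^{1/(q+1)}$ per unabsorbed domain'' claim. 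That decomposition is not how the weights factor: the per-site contribution to $q^{\lvert\vec\gamma^{(s)}\rvert}$ is $q$ (i.e.\ $\log q$ in the exponent), not $q^{1/(q+1)}$, and I see no route to a uniform per-domain bound of the form you propose. You correctly flag this second step as the main obstacle, but as written it would not go through; the fix is to take the order parameter to be the Hamming weight of the biased walk itself, with its non-uniform initial law, rather than a count of ``unabsorbed'' qudits starting at $n$.

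A secondary, smaller point: your drift estimate is stated for a coarse, non-Markovian quantity $K_t$ (``qudits not yet absorbed into the dominant domain''), for which the conditional expectation computation is not actually justified. For the Hamming weight of the biased walk the computation is exact and elementary, since the walk \emph{is} Markovian in the Hamming weight for the complete-graph architecture (all configurations of equal weight are equivalent by permutation symmetry). Once you make that substitution, your proposal collapses onto the paper's proof.
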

    Although a slightly different form than the other lower bounds, this still yields the conclusion
    \begin{equation}
        s_{AC} \geq \frac{q^2+1}{2(q^2-1)}n\log(n) - O(n)\,,
    \end{equation}
    which is tight with the upper bound up to subleading corrections. When $q=2$ (qubits), the prefactor of the $n\log(n)$ is $5/6$, settling a conjecture proposed in Ref.~\cite{HarrowMehraban2018tdesign}.

The upper and lower bounds together allow us to conclude that $s_{AC} = \Theta(n\log(n))$ for both the 1D architecture and the complete-graph architecture, and, in fact, we have matching upper and lower bounds on the constant prefactor of the $n\log(n)$. 

We note that, for $q\geq 5$, our results have the counter-intuitive implication that the 1D architecture anti-concentrates faster than the complete-graph architecture, even though it is geometrically local. We argue that this is an artifact of the definition of the models, and can be explained by the fact that the qudit pairs acted upon by the gates in the complete-graph architecture are chosen randomly, while the qudit pairs in the 1D architecture are not random; in fact, in the latter case they are optimally packed into layers of $n/2$ non-overlapping gates. As $q$ increases, anti-concentration becomes arbitrarily fast for the 1D architecture (the coefficient of the $n\log(n)$ decreases like $1/\log(q)$). Meanwhile, for the complete-graph architecture, no matter how large $q$ is, there will always be some minimum number of gates --- roughly $n \log(n)/2$ --- needed simply to guarantee that all the qudits have been involved in the circuit with high probability. We suspect that a parallelized version of the complete-graph architecture would anti-concentrate with a slightly better constant than the 1D architecture.

\section{Related work and Implications}

Here we highlight a few relevant previous works and emphasize how our results fit in.
\begin{list}{$\bullet$}{\leftmargin=1.2em \itemindent=0em} 
    \item Harrow and Mehraban \cite{HarrowMehraban2018tdesign} studied how quickly RQCs form approximate unitary $t$-designs and anti-concentrate for various architectures. For geometrically local circuits, they showed that the approximate $t$-design property is achieved after only $O(n^{1/D})$ depth in $D$ spatial dimensions, the first work to break the $O(n)$ barrier for designs. Since anti-concentration follows from the approximate 2-design property, their work implies an $O(n^{1/D})$ upper bound on the anti-concentration depth. We show that for $D=1$, the anti-concentration depth is actually $\Theta(\log(n))$ and we conjecture that this is also the case for $D \geq 2$, but we do not prove this, so the $O(n^{1/D})$ bound remains the best known for $D\geq 2$. 
    
    They also considered the question of anti-concentration in the complete-graph architecture and showed an upper bound on the anti-concentration size of $O(n\log(n)^2)$ and a lower bound of $\Omega(n\log(n))$. They used heuristic reasoning to conjecture that (for $q=2$) the anti-concentration size should be $5n\log(n)/3$, up to leading order.  We are able to show that to leading order the anti-concentration size for the complete-graph architecture is $5n\log(n)/6$. This is off by a factor of 2 from the conjecture stated in their paper, which we suspect is due to a minor error in their heuristic reasoning.
    
\item Barak, Chou, and Gao \cite{Barak2020Spoofing} developed a classical algorithm for shallow RQCs that achieves a non-negligible score on the Linear Cross-Entropy Benchmarking (XEB) metric despite not performing a full simulation of the RQCs. The Linear XEB metric was used by Google to verify its 2019 quantum computational supremacy experiment \cite{Arute2019GoogleQuantumSupremacy}. Barak, Chou, and Gao show that if a depth-$d$ RQC architecture in $D$ spatial dimensions has collision probability $Z$, their algorithm achieves a score of $\epsilon$ with high probability after a total runtime $(2^n Z) \cdot \exp(\epsilon \,15^{-d}) \cdot \text{poly}(n,2^{d^D})$ (here $q=2$). They prove that $Z = O(2^{-n})$ after $d = \Omega(\log(n))$ for 1D RQCs, which is equivalent to our \autoref{thm:1Dupperboundsummary}. This shows that their algorithm achieves a $\epsilon \geq 1/\text{poly}(n)$ score in polynomial time for logarithmic depth 1D RQCs. For 2D RQCs, they conjecture that $Z = O(2^{-n})$ after depth $d = O(\sqrt{\log(n)})$, which would imply their algorithm achieves $\epsilon \geq 1/\text{poly}(n)$ score in polynomial time at that depth. Our \autoref{thm:generallowerboundsummary} contradicts their conjecture by showing generally that $2^nZ \geq \exp(n^{1-o(1)})$ when $d$ is sublogarithmic. 

\item Our method performs expectations over individual gates in the RQC using formulas for Haar integration, a strategy that has also been used on similar problems in the past. Many works have used this strategy to form a random walk over Pauli strings with wide-ranging applications \cite{DahlstenOliveiraPlenio2007TypicalEntanglement,HarrowLow2009RQC2design,BrownFawzi2012RQCScrambling, BrownFawzi2015RQCDecoupling, BrownFawzi2013RandomCodes,onorati2017mixing, HarrowMehraban2018tdesign,Gharibyan2018RandomMatrix,nhj2018opgrowth}. Our analysis applies this strategy in a distinct way that more closely resembles a series of works that interpret the resulting expression as the partition function of classical statistical mechanical models \cite{Hayden2016HolographicDualityTN,NahumVijayHaah2018OperatorSpreading,vonKeyserlingk2018OperatorHydrodynamics,ZhouNahum2019EmergentStatMech,Hunter-Jones2019StatMechDesign,BaoChoiAltman2020TheoryPhaseTransition,JianYouVasseur2020MeasurementInduced,Napp2019SEBD,Lopez-Piqueres2020MeanField}. Here, we analyze those partition functions using a Markov chain analysis, but our Markov chain has different transition rules compared to the Pauli string Markov chain. 
\end{list}

\subsection{Connection to 2-design}

Anti-concentration for random quantum circuits (as well as some Hamiltonian models) is often established as a consequence of the convergence to approximate unitary 2-designs, where approximately reproducing the first two moments of the Haar measure allows one to bound the RQC collision probability. For both 1D and complete-graph RQCs, size $O(n^2)$ circuits (of linear depth) form approximate 2-designs and therefore anti-concentrate. 
There are a number of definitions of approximate unitary designs utilizing different norms, we briefly comment on the definitions and requirements for anti-concentration in this architecture. 

As we review in \autorefapp{app:2design}, defined in terms of the diamond norm, $\ep$-approximate 2-designs have a collision probability upper bounded by $Z_H$ up to additive error. In order to achieve anti-concentration, $\ep$ must be taken to be exponentially small (i.e.\ we require $\ep=1/q^{2n}$). Ref.~\cite{BHH2016RQCtdesign} introduced a stronger notion of approximate design in terms of the complete positivity of the difference in channels. Under this strong definition, 2-designs bound the collision probability up to relative error with respect to the Haar value and thus anti-concentrate. A much weaker definition of approximate design is the operator norm of the moment operators, often called the tensor product expander (TPE) condition. Interestingly, TPEs also bound the collision probability up to additive error, but again the error needs to be exponentially small to achieve anti-concentration. 

Random quantum circuits on the 1D architecture form $\ep$-approximate 2-designs, in both diamond norm and the stronger definition, when the circuit size is $O(n(n+\log(1/\ep)))$. 
Moreover, 1D random circuits actually form $\ep$-approximate TPEs in constant depth, when the circuit size is $O(n\log(1/\ep))$. But again, anti-concentration requires that $\ep$ be taken to be $\ep=1/q^{2n}$, thus mandating linear depth. So to establish that the collision probability is bounded up to a relative error, as in the definition of anti-concentration, using unitary 2-designs or a general bound on the moments necessitates linear depth. 
For non-local RQCs defined on a complete-graph, the best known upper bounds on the approximate 2-design depth are the same as for the 1D architecture. But it has been conjectured that this may be improved for non-local RQCs, which would close the gap between the 2-design time and the depth required for anti-concentration.

To further emphasize the distinction between anti-concentration and unitary 2-designs, we note that anti-concentration can be achieved for specific short-depth circuits without generating entanglement across the system (indeed, a circuit consisting of a single layer of single-qubit Hadamards suffices). Moreover, for an ensemble of random quantum circuits, anti-concentration can be equivalently phrased as the statement that certain matrix elements of second moment operator $\BE_U [U^{\otimes 2}\otimes U^*{}^{\otimes 2}]$ reach the Haar value of $2/q^{2n}$ after some depth. Whereas the approximate 2-design condition gives that $\BE_U[\vev{\psi|U^{\otimes 2}\otimes U^*{}^{\otimes 2}|\psi}]$ is small for all states $\ket\psi$, even those that are entangled across the tensor copies. As we show in \autorefapp{app:2design}, there are necessarily some states which require linear depth to equilibrate to the minimal Haar value, at least for RQCs on the 1D architecture.

\subsection{Implications for arguments on hardness of simulation}\label{sec:ACinhardness}

Anti-concentration is a key ingredient in hardness-of-simulation arguments \cite{Aaronson2011BosonSampling,Bremner2016AverageCaseIQP,Morimae2017DQC1HardnessTVD,Bouland2018CCC,Hangleiter2018anticoncentration,Bouland2019RCSComplexity,Dalzell2020HowManyQubits,Morimae2019FGAdditive, Movassagh2018RCSAverageCase,Movassagh2019RCSAverageCaseRobust,Bouland2021NoiseFrontier,Kondo2021ImprovedRobustness} that underlie quantum computational supremacy proposals. In this section we roughly explain its role in those arguments and the implications our results have in this context. 

The starting point for these hardness arguments is the long-known observation that the answer to a hard classical problem can be encoded into the output probability $p_U(x)$ of a quantum circuit $U$.\footnote{For example, given an $n$-bit efficiently computable Boolean function $f$ consider the following circuit $U$. First, perform a layer of Hadamard gates on every qubit, then perform the $2^n \times 2^n$ diagonal unitary operation $\sum_x(-1)^{f(x)}\ketbra{x}$, then perform another layer of Hadamard gates. It is straightforward to show that $p_U(0^n)$ is proportional to $\text{gap}(f)^2$ where $\text{gap}(f) = |\{x: f(x)=0\}| - |\{x: f(x) = 1\}|$, which is an extremely difficult quantity to compute classically; it is expected that there exist functions $f$ where the best classical algorithm is essentially a brute-force enumeration over all $2^n$ inputs $x$.} Thus, exactly computing $p_U(x)$ for arbitrary $U$ and $x$ should not be possible in classical polynomial time. This remains true even if one only needs to compute $p_U(x)$ up to some constant relative error. The ultimate goal in the context of quantum computational supremacy is to show that there is no polynomial-time classical algorithm that approximately simulates random circuits (or at least to give extremely convincing evidence in favor of this conclusion). More precisely, the approximate simulation task is to produce samples from a distribution $p'_U$ for which
\begin{equation}\label{eq:TVDerror}
    \| p_U-p'_U \|_1 := \sum_{x} | p_U(x) - p'_U(x)| = \varepsilon
\end{equation}
for some small $\varepsilon = O(1)$, and to do this for a large fraction of $U$ drawn randomly from some random ensemble. Turning the starting point into the ultimate goal requires a few steps (some of which rely on conjecture). Anti-concentration is one of these steps.

The primary role anti-concentration plays is to turn a small \textit{additive} difference $|p_U(x)-p'_U(x)|$ for most $x$ into a small \textit{relative} difference $r(x)$ for most $x$, where
\begin{equation}
    r(x) :=\frac{|p_U(x)-p'_U(x)|}{p_U(x)} \,.
\end{equation}
If Eq.~\eqref{eq:TVDerror} is obeyed then the value of $|p_U(x)-p'_U(x)|$ is on the order of $\varepsilon/q^n$ for most $x$. Meanwhile, the mean value of $p_U(x)$ for random $x$ is exactly $1/q^n$. If $p_U(x)$ is anti-concentrated, then for most $x$, $p_U(x)$ will be within a constant factor of the mean, as shown in the middle diagram of \autoref{fig:ACcartoon}, and $r(x) =O(\epsilon)$ will hold for most $x$. However, if $p_U(x)$ is not anti-concentrated, then $p_U(x)$ will be much smaller than the mean for most $x$, as depicted in the right diagram of \autoref{fig:ACcartoon}. This means that without anti-concentration, $r(x) \gg \epsilon$ for most $x$, which is problematic because the hard classical problems encoded into $p_U(x)$ are no longer hard when the relative error is extremely large, so anti-concentration appears to be necessary if there is any hope of completing the hardness argument using existing techniques.

Even if anti-concentration holds, more is needed to show hardness of approximately simulating RQCs. One must turn hardness of computing $p_U(x)$ into hardness of \textit{sampling} from $p_U$ and also turn hardness for arbitrary $U$ into hardness for a \textit{random} $U$. There are techniques that work for each of these steps individually, but currently they do not work together simultaneously, and thus an additional conjecture must be made.

Our work puts sharp bounds on the number of gates needed for anti-concentration to hold in multiple RQC architectures, which constrains when these hardness arguments have the potential to work. Our finding that the number of gates per qudit needed for anti-concentration grows only like $O(\log(n))$ in the 1D and complete-graph architectures implies that perhaps RQC-based quantum computational supremacy could be achieved at a shallower circuit depth than previously believed. For example, Google's 2019 quantum computational supremacy experiment was based on  2D RQC's of depth exceeding the $\sqrt{n}$ diameter of the qubit array \cite{Boixo2018,Arute2019GoogleQuantumSupremacy}. The fact that 1D circuits anti-concentrate in $\Theta(\log(n))$ depth is evidence that 2D circuits should have the same scaling (if anything, anti-concentration should happen faster in 2D). Thus a similar quantum computational supremacy experiment might be equally defensible at $\Theta(\log(n))$ depth instead of $\Omega(\sqrt{n})$ depth. We note, however, that there are other reasons to want to go to larger depth (e.g.~classical simulation via tensor network methods becomes harder at larger depths). 

Without anti-concentration, the hardness-of-simulation arguments appear to break down, but this does not generally imply that simulation is easy. On this topic, a subset of these authors and others described an algorithm for solving the approximate simulation problem for 2D RQCs \cite{Napp2019SEBD}. The algorithm is proved to be efficient for a certain constant-depth (and thus not anti-concentrated) 2D RQC architecture, but it is conjectured to become inefficient once the depth exceeds a larger constant threshold. Thus, the complexity of the algorithm transitions to inefficient before the circuits become anti-concentrated, suggesting that in 2D there could be a regime where the RQCs are too shallow to be anti-concentrated but classical simulation is still hard.

\section{Collision probability as a sum over bit string trajectories}\label{sec:overviewofmethod}

The main technical contribution of our work is to derive a correspondence between the collision probability and a discrete sum (which can be interpreted as the partition function of a classical statistical mechanical model or as the expectation of a Markov chain) and then to derive rigorous upper and lower bounds on the sum. Here we describe the correspondence along with a brief example for a simple random quantum circuit in \autoref{fig:trajectoryexamples}. We also explain why this correspondence leads us to expect anti-concentration to be achieved after $\Theta(n\log(n))$ gates in most architectures. In \autorefapp{sec:analysisframework}, we give a more careful derivation of this correspondence, and \hl{in} the other appendices, we use it to rigorously prove the upper and lower bounds quoted in \autoref{tab:results}.

Recall that we wish to compute the collision probability 
\begin{align}
    Z :={}& q^n\EV_{U}\left[p_U(1^n)^2\right] \\
    ={}& q^n \EV_{U}\left[\bra{1^n}^{\otimes 2} U^{\otimes 2} \ketbra{1^n}^{\otimes 2} {U^{\dagger}}^{\otimes 2} \ket{1^n}^{\otimes 2}\right]
\end{align}
where $U$ is the unitary enacted by the random quantum circuit. The Haar measure uniformly covers the unitary group so, intuitively speaking, taking the expectation over application of a Haar-random gate removes much of the bias in the quantum state; we use a technique that allows us to effectively keep track of only $n$ bits of information about the $n$-qudit state after the application of (two copies of) each Haar-random gate. Instead of 0 or 1, our bits take values $I$ or $S$, because they are associated with the identity and swap operations on two qudit copies. 

In particular, if $V$ is a $q \times q$ Haar-random matrix and $\sigma$ is an operator on two copies of a $q$-dimensional Hilbert space, then the quantity $\EV_V[V^{\otimes 2} \sigma {V^{\dagger}}^{\otimes 2}]$ is equal to a linear combination of the identity operation $I$ on the two copies of the Hilbert space, and the swap operation $S$ on the two copies of the Hilbert space. Specifically, it is given by
\begin{align}\label{eq:secondmomentformula}
   \left(\frac{\Tr(\sigma) - q^{-1}\Tr(\sigma S)}{q^2-1}\right)I + \left(\frac{\Tr(\sigma S) - q^{-1}\Tr(\sigma)}{q^2-1}\right)S\,.
\end{align}
This well-known formula is derived in \autorefapp{sec:analysisframework}.

By applying the formula to each of the two-qudit Haar-random gates sequentially, the state (which begins in $\ketbra{1^n}^{\otimes 2}$) evolves as a sum over $n$-fold tensor products of identity and swap operations. Each of these $n$-fold tensor products is labeled by an $n$-bit vector that we call a \textit{configuration} $\vec{\nu} \in \{I,S\}^n$. For a circuit with $s$ two-qudit gates, each term in the resulting sum is then associated with a length-$(s+1)$ sequence of configurations $\gamma = (\vec{\gamma}^{(0)},\ldots,\vec{\gamma}^{(s)})$, which we call a \textit{trajectory}. Each trajectory $\gamma$ has a certain non-negative coefficient in the sum, allowing us to write
\begin{equation}\label{eq:introZweight}
    Z = \frac{1}{(q+1)^n}\sum_\gamma \text{weight}(\gamma)
\end{equation}
for a fairly simple weighting function, described as follows and derived more carefully in \autorefapp{sec:analysisframework}. 

First of all, the weight for most trajectories is simply 0. In order for a trajectory to have positive weight it must obey the following rules. If the gate at time step $t$ acts on qudits $a$ and $b$, then the configuration values $\gamma^{(t)}_a,\gamma^{(t)}_b \in \{I,S\}$ at positions $a$ and $b$ must be equal, either both $I$ or both $S$.  Thus if the values disagreed at the previous time step, i.e.~$\gamma^{(t-1)}_{a} \neq \gamma^{(t-1)}_b$, one of the bits must be flipped during the transition from $\vec{\gamma}^{(t-1)}$ to $\vec{\gamma}^{(t)}$. If the values at positions $a$ and $b$ already agreed at time step $t-1$, they must remain unchanged from time step $t-1$ to time step $t$. Moreover, the bit values at the other $n-2$ positions must also remain unchanged from time step $t-1$ to time step $t$.

For trajectories that obey these rules, the weight is reduced by a factor of $2/5$ for qubits, or $q/(q^2+1)$ for general local dimensions. Thus, the most significant terms in the weighted sum are the terms with the fewest bit flips along the trajectory. 
The expression for $Z$ as a weighted sum can alternatively be interpreted as a partition function for an Ising-like classical statistical mechanical model since it is a weighted sum over ``spin'' configurations for spins with two possible values, or it can be interpreted as the expectation of a certain quantity over a simple Markov chain that generates the sequence $(\vec{\gamma}^{(0)},\ldots,\vec{\gamma}^{(s)})$. We take the latter approach in our application of the method to prove our upper and lower bounds. See \autoref{fig:trajectoryexamples} for an example of two trajectories for a simple RQC, along with a calculation of their weight.

\begin{figure*}
\centering
\includegraphics[width=0.9\linewidth]{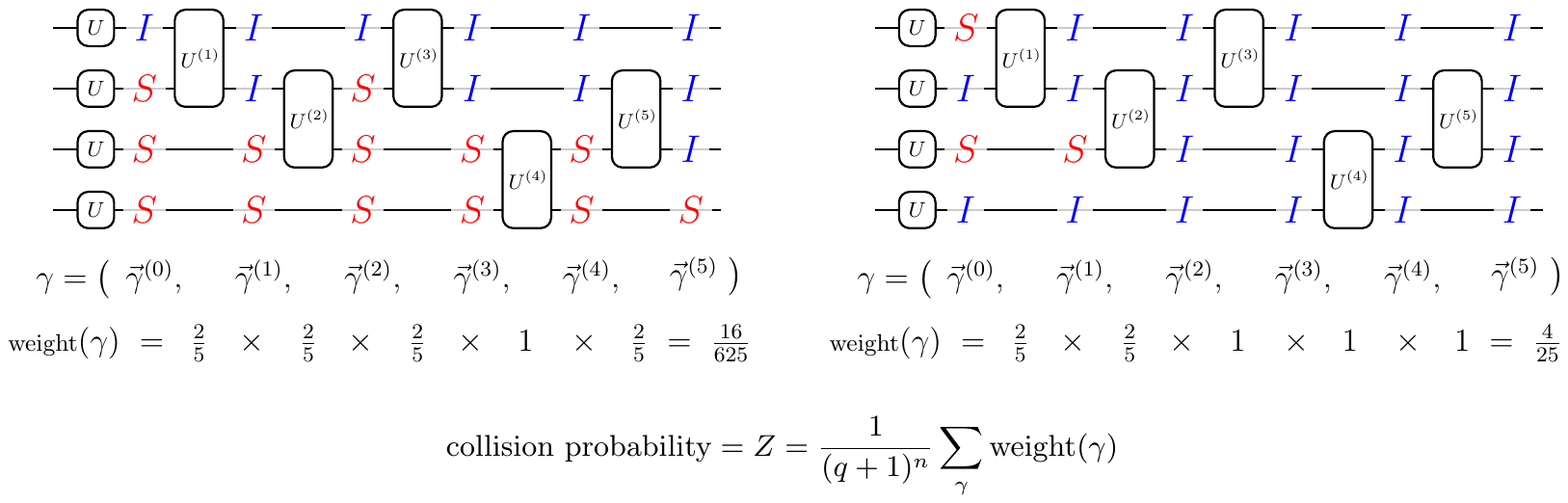}\\[-8pt]
\caption{\label{fig:trajectoryexamples} Two example trajectories for a quantum circuit diagram with $n=4$ qubits and $s=5$ gates. Each gate displayed is chosen randomly from the Haar measure over single or two-qubit unitaries. The collision probability $Z$ is expressed as a weighted sum over trajectories $\gamma = (\vec{\gamma}^{(0)},\ldots,\vec{\gamma}^{(s)})$, which are length-$(s+1)$ sequences of assignments (``configurations") of $I$ or $S$ to each of the $n$ qudits. When the input bits to a gate are assigned opposite values, one must be switched at the next configuration in the sequence. These bit flips happen at gates 1, 2, 3, and 5 in the first example, and at gates 1 and 2 in the second example. Each bit flip results in a reduction of the weight by a factor $2/5$ (when $q=2$). In the second example, the trajectory reaches one of the fixed point configurations where all $n$ values agree; this is not the case in the first example. Trajectories that quickly reach a fixed point generally have larger weights and make up most of the contribution to the collision probability. }
\end{figure*}

The correspondence given in Eq.~\eqref{eq:introZweight} is powerful because we have a good sense of what to expect from the weighted sum over trajectories, and we can draw conclusions that were not obvious from the definition of the collision probability itself. For example, we can straightforwardly analyze the infinite circuit-size limit. In this limit, each positive-weight trajectory $\gamma$ will be forced to keep flipping bits (each time a two-qudit gate acts on a disagreeing pair of bits) until it reaches a fixed point, either $I^n$ or $S^n$, in which case bits can no longer be flipped since all the bits agree.  Let $Q(x)$ be the total weight of all trajectories that begin at a configuration with $x$ $S$ assignments and $n-x$ $I$ assignments. At some point in the circuit, a disagreeing pair of bits will be acted upon by a gate, and one of the bits must flip, sending the number of $S$ assignments either to $x-1$ or $x+1$ and reducing the weight by $q/(q^2+1)$. Since there are an infinite number of gates, the following recursion relation must be obeyed
\begin{equation}
    Q(x) = \frac{q}{q^2+1}\left(Q(x-1) + Q(x+1)\right)\,,
\end{equation}
which, by imposing the boundary conditions $Q(0) = Q(n)=1$, has the unique solution
\begin{equation}\label{eq:Q(x)recursion}
    Q(x) = \frac{q^x + q^{-x}}{q^n+1} \,.
\end{equation}
Moreover, for each $x$, there are $\binom{n}{x}$ configurations each contributing weight $Q(x)$, so
\begin{equation}
    \lim_{s\rightarrow \infty} Z = \frac{\sum_{x=0}^n \binom{n}{x}(q^x + q^{-x})}{(q+1)^n(q^n+1)} = \frac{2}{q^n+1} = Z_H
\end{equation}
reproducing the value $Z_H$ that would be obtained if the random quantum circuit were one large $q^n \times q^n$ Haar-random transformation instead of a series of $q^2 \times q^2$ two-qudit gates. (The fact that a $q^n \times q^n$ Haar-random transformation yields $Z_H$ is a direct consequence of Eq.~\eqref{eq:secondmomentformula} with the substitution $q \rightarrow q^n$.) This conclusion makes sense since a random circuit with an infinite number of $2$-local Haar-random gates should enact a global Haar-random transformation. 

When the circuit size is a finite number $s$, we have $Z > Z_H$, corresponding to the fact that many trajectories have not yet reached a fixed point and are overweighted compared to their contribution to $Z_H$. As the circuit size increases, more of the trajectories get closer to the fixed point and $Z$ approaches $Z_H$. The point at which anti-concentration is achieved is intimately connected with the point at which most of the weight can be accounted for by trajectories that have reached a fixed point. A depiction of this process at $n=60$ is given in \autoref{fig:typicaltrajectories}.

Our quantitative challenge is to understand, for a certain RQC architecture, how quickly these trajectories approach the fixed points, and consequently how quickly $Z$ approaches $Z_H$, as the circuit size increases. Recall that we define the \textit{anti-concentration size} $s_{AC}$ to be the circuit size (as a function of the number of qudits $n$) needed for $Z$ to be only a constant factor larger than $Z_H$. Perhaps surprisingly, we find in multiple architectures that $s_{AC} = \Theta(n\log(n))$, corresponding to only $\Theta(\log(n))$ gates per qudit. We can explain this observation heuristically by generating trajectories $\gamma$ at random with probability proportional to $\text{weight}(\gamma)$ (in the statistical mechanical interpretation, this corresponds to drawing samples from the thermal distribution). For typical trajectories generated in this fashion, each additional layer of $\Theta(n)$ gates will cause the trajectory to move a constant fraction of the way closer to terminating at a fixed point. Since trajectories typically begin on the order of $n$ bit flips away from the fixed point (i.e.~the initial configuration typically has $\Theta(n)$ $I$ assignments and $\Theta(n)$ $S$ assignments), $\Omega(\log(n))$ layers are necessary and sufficient for typical trajectories to get within a constant distance from the fixed point. 

This heuristic statement is perhaps confirmed most clearly in the complete-graph architecture, where qudit pairs are chosen uniformly at random. Here let $x \ll n$, and suppose the current configuration at time step $t$ has value $S$ at $x$ of the $n$ positions and value $I$ at the other $n-x$ positions. If we perform gates on $n/2$ random pairs of qudits, we will expect roughly $x$ of those pairs to couple an $I$ value with an $S$ value. Each time this happens, a bit must be flipped and there is an opportunity for the trajectory to move closer to the fixed point $I^n$. Thus, we expect the number of $S$ values in the configuration at time step $t+n/2$ to have decreased by an amount proportional to $x$. After $\Theta(n\log(n))$ gates, we expect the trajectory to be at (or very close to) the fixed point $I^n$ with high probability. Fewer gates would leave most trajectories too far from the fixed point for anti-concentration to have been reached. In \autoref{fig:typicaltrajectories}, we illustrate the convergence of typical trajectories and the correspondent convergence of $Z$ for the complete-graph architecture at $n=60$.  

\begin{figure}[h]
\centering
\includegraphics[width=0.48\textwidth]{./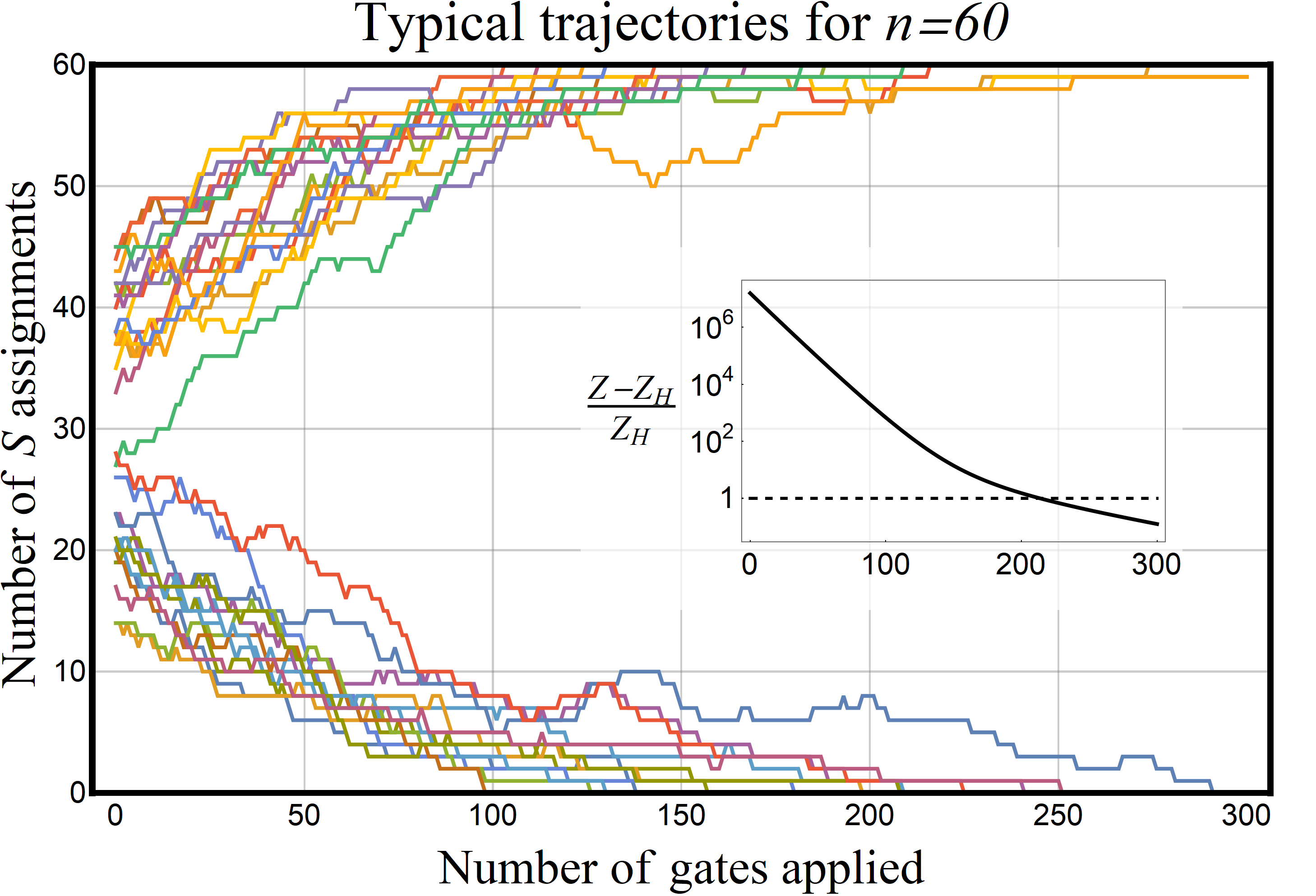}
\caption{\label{fig:typicaltrajectories} Thirty trajectories generated randomly for the complete-graph architecture at $n=60$. A trajectory $\gamma$ is chosen with probability proportional to $\text{weight}(\gamma)$ in the $s \rightarrow \infty$ limit, and then the number of $S$ assignments (out of 60) are plotted for the first 300 time steps. The trajectories rapidly approach either the fixed point $I^n$ with 0 $S$ assignments, or the fixed point $S^n$ with 60 $S$ assignments, but not all have reached the fixed point within 300 time steps. The distance of a typical trajectory from the nearest fixed point decays exponentially with time, with characteristic time scale $\Theta(n)$. Thus, it takes $\Theta(n\log(n))$ gates for most typical trajectories to have reached the fixed point. Inset: As trajectories approach the fixed points, the collision probability $Z$ (which can be efficiently numerically calculated for the complete-graph architecture) approaches $Z_H$. Anti-concentration is defined as the point where it falls beneath $2Z_H$ (dashed line), which occurs at $s=214$ for $n=60$.  }
\end{figure}

We prove that a similar situation occurs even if the gates are arranged in a 1D fashion, and we fully expect that this situation applies for nearly all natural\footnote{One can construct contrived architectures that do not quickly anti-concentrate by partitioning the qudits into many subsets and only rarely choosing a gate that couples qudits from different subsets. We define a property we call \textit{regularly connected} \autoref{def:regularlyconnected} to rule out this kind of situation. We prove that it implies anti-concentration in $O(n^2)$ gates and conjecture this can be improved to $O(n\log(n))$.} architectures, including circuits on $D$-dimensional lattices for $D>1$. We formalize this in \autoref{con:conjecture}. We believe \autoref{con:conjecture} firstly because anti-concentration should intuitively only be faster when the circuit becomes more connected, and the 1D architecture is perhaps the least connected a natural architecture can be, as it takes $\Omega(n^2)$ gates for information to travel across the diameter of the qudit array. Secondly, the above intuitive argument about the convergence of typical trajectories to a fixed point in $O(n\log(n))$ gates should apply to any natural architecture. Specifically, if you choose a configuration with $x$ $S$ assignments at random, and you apply a layer of $\Theta(n)$ two-qudit gates, with high probability you will have formed $\Theta(x)$ disagreeing pairs and moved the trajectory a constant fraction of the way to the nearest fixed point. The difficulty in proving \autoref{con:conjecture} lies in characterizing what happens in the low-probability event that this is not the case. 

Indeed, our rigorous proofs for the 1D and complete-graph architectures have to deal with the fact that it is not sufficient to examine only typical trajectory behavior. In particular, the collective contribution of trajectories at the tails of what is allowed are tricky to bound.  Nonetheless, heuristic reasoning about typical trajectory behavior ultimately gives accurate predictions about the collision probability in these cases.  

The rigorous bounds are provided in the appendices. \hl{In the 1D case, the proof associates each trajectory with a configuration of domain walls on a 2D lattice (of size $n \times (d+1)$, where $d$ is the circuit depth) and bounds their total contribution combinatorially, similar to the method employed in Refs.~\cite{NahumVijayHaah2018OperatorSpreading,vonKeyserlingk2018OperatorHydrodynamics,Hunter-Jones2019StatMechDesign,Barak2020Spoofing}. The main intuition is that trajectories that have not reached a fixed point must have domain walls that penetrate through the entire depth of the circuit and thus receive weight that decreases exponentially with the depth as $(q/(q^2+1))^d$. Accounting for the total number of possible domain walls of this type, which scales like $n2^d$, one finds that $d= O(\log(n))$ is sufficient for the overall contribution to be small. We use a similar domain wall counting method to produce a tight lower bound. 

In the complete-graph case, we present a much different and novel approach. For each trajectory $\gamma$, we define $R[\gamma]$ as the ``reduced'' trajectory that results from removing consecutive duplicates of the same configuration. Note that the weight of $\gamma$ depends only on the length of $R[\gamma]$ (the number of bit flips). Long subsequences of consecutive duplicates occur when the randomly chosen gates repeatedly act on pairs of qudits that are already assigned the same value by the configuration, an outcome that becomes more likely as the trajectory approaches a fixed point. For each $\psi$, we can condition on $R[\gamma] = \psi$, and examine the probability distribution over the length of $\gamma$ (i.e.~the number of consecutive duplicates plus the number of bit flips). We use a Chernoff bound to upper bound the probability that the length of $\gamma$ is greater than a certain quantity. We then use a generalization of the recursive method that yielded Eq.~\eqref{eq:Q(x)recursion} to perform the weighted sum over all reduced trajectories $\psi$. In the appendix, we provide a more detailed proof summary prior to the full proof.}

\section{Outlook}

In a quantum computer, quantum information is ultimately accessed by making measurements of the output state and obtaining samples from the associated output distribution over measurement outcomes. In many applications, it is desirable to choose our quantum computation completely at random, the only constraint being the arrangement of the different gates, and thus it is important to characterize the output distribution over measurement outcomes in random quantum circuits, and how it depends on the underlying circuit architecture. 

One feature of the output distribution is that, for very shallow circuits,  there are a relatively small number of very ``heavy'' measurement outcomes that are exponentially more likely than average to be obtained, a fact that inhibits the design of certain classical simulation algorithms, but also in other cases prevents potential proofs that no good simulation algorithms exist. As the circuit gets deeper, the probability mass gradually anti-concentrates and eventually becomes fairly well spread out over all possible measurement outcomes. We have developed a framework to quantitatively understand this situation; we map the anti-concentration process to the equilibration of a simple stochastic process (an alternative interpretation of the stochastic process is the partition function of a statistical mechanical model). The stochastic process allows for effective qualitative reasoning, but also produces sharp quantitative anti-concentration upper and lower bounds. 

Both sides of our bounds have meaningful and surprising takeaways. On the one hand, the fact that only $O(n\log(n))$ gates are needed to achieve anti-concentration in geometrically local and non-local architectures contradicts the intuition that anti-concentration should not occur until information has had time to spread across the entire system. In fact, up to a constant factor, the anti-concentration time does not appear to be sensitive to exact connectivity structure of the circuit. While we only rigorously consider two architectures, our work gives strong evidence that any natural architecture anti-concentrates in $O(n\log(n))$ gates (which typically corresponds to $O(\log(n))$ depth). In cases where anti-concentration is a desirable property, our work gives explicit bounds on how many gates are needed, and the fact that this number is relatively small will come as welcome news in practical situations where the gates are noisy or otherwise costly to implement. 

On the other hand, by showing that $\Omega(n\log(n))$ gates are necessary for anti-concentration (and computing the optimal constant pre-factor in our two specific scenarios), we have cleared up some confusion about very shallow circuits. Increasing the depth causes the anti-concentration process to begin, but our lower bound implies that the phenomenon of very heavy measurement outcomes will remain for any architecture of constant depth. Even the 2D circuits of depth $O(\sqrt{\log(n)})$ (for which the lightcone volume is $O(\log(n))$) considered in Ref.~\cite{Barak2020Spoofing} cannot be anti-concentrated, as had been speculated in that work. 

We conclude with some other specific open problems inspired by our work. 
\begin{list}{$\bullet$}{\leftmargin=1.2em \itemindent=0em} 
    \item We have proved that the anti-concentration size is $\Theta(n\log(n))$ for the 1D and complete-graph architectures. We believe this is true for most other natural architectures and formally conjecture in \autoref{con:conjecture} that this follows from our ``regularly connected'' definition.  
    
    \item A sharp anti-concentration analysis for 2D and higher dimensional geometrically local architectures would be particularly valuable since, unlike in 1D, $\Theta(\log(n))$-depth 2D circuits can perform universal quantum computation (indeed, $\Omega(1)$-depth is sufficient \cite{TerhalDiVincenzo2004ConstantDepthHardness}), and 2D circuits form the basis for Google's 2019 quantum computational supremacy experiment \cite{Arute2019GoogleQuantumSupremacy}. 
    
    \item We suspect the constant prefactor of $(2\log(q^2+1))^{-1}$ in the general lower bound in \autoref{thm:generallowerboundsummary} could be improved. What is its optimal value? That is, can we show an improved general lower bound and then find an RQC architecture $\mathcal{A}_{\text{fast}}$ that has a matching upper bound. This would show that $\mathcal{A}_{\text{fast}}$ is the fastest anti-concentrator. A candidate for $\mathcal{A}_{\text{fast}}$ is the architecture where each layer of $n/2$ gates is formed by choosing a random partition of the $n$ qudits into $n/2$ pairs. 
    
    \item Are there other problems involving second moment calculations over RQCs where our techniques would produce sharp upper and lower bounds? One such problem could be the $2$-design time for RQCs in various architectures.
\end{list}

\subsection*{Acknowledgments}
We thank Hrant Gharibyan, Jonas Haferkamp, Aram Harrow, Richard Kueng, Saeed Mehraban, John Napp, Sepehr Nezami, and John Preskill for discussions and helpful comments on the draft.
AD and FB acknowledge funding provided by the Institute for Quantum Information and Matter, an NSF Physics Frontiers Center (NSF Grant {PHY}-{1733907}). This material is also based upon work supported by the NSF Graduate Research Fellowship under Grant No.~DGE‐1745301. 
NHJ would like to thank the IQIM at Caltech for its hospitality during the completion of part of this work. Research at Perimeter Institute is supported by the Government of Canada through the Department of Innovation, Science and Industry Canada and by the Province of Ontario through the Ministry of Colleges and Universities.

\pagebreak

\onecolumngrid\par
\appendix

\section{Definitions: Random quantum circuit architectures, anti-concentration, and the collision probability}\label{sec:preliminaries}

\subsection{Random quantum circuits (RQCs)}

Here we establish some precise definitions for the terms in this paper. Throughout, we consider systems of $n$ qudits of local Hilbert space dimension $q$, with basis states $\{\ket{1},\ket{2},\ldots,\ket{q}\}$. Loosely speaking, a quantum circuit is a sequence of unitary transformations called gates, which each typically involve only a few of the $n$ qudits, acting on the initial state $\ket{1}^{\otimes n} \equiv \ket{1^n}$. Formally, we let a \textit{quantum circuit diagram} of circuit size $s$ be specified by a length-$s$ sequence $(A^{(1)},\ldots, A^{(s)})$ of non-empty subsets of $[n]:= \{1,2,\ldots,n\}$, indicating for each gate which qudits participate in that gate.  Since we consider circuits consisting only of two-qudit gates, we require $\lvert A^{(t)} \rvert = 2$ for all $t$. We also make the assumption that the circuit begins with a single-qudit gate on each of the $n$ qudits at the beginning of the circuit, without counting these $n$ gates toward the circuit size. This sequence can be turned into a diagram as in \autoref{fig:trajectoryexamples} (ignoring the overlaid $I$ and $S$), where the gate sequence is $(\{1,2\},\{2,3\},\{1,2\},\{3,4\},\{2,3\})$. Note that the single-qudit unitaries are each displayed with the symbol $U$ but will not necessarily be the same unitary. The circuit \textit{depth} $d$ of a circuit diagram is the minimum number of layers of non-overlapping gates needed to implement all $s$ gates in the circuit, or formally, the smallest integer such that there exists a sequence $0=s_0 < s_1 < s_2 <\ldots < s_d=s$ where $A^{(t)}\cap A^{(t')} = \emptyset$ whenever $s_j < t< t' \leq s_{j+1}$. 

Once a circuit diagram has been chosen, a \textit{quantum circuit instance} is generated by additionally specifying a length-$s+n$ sequence of unitary matrices $(U^{(-n)},\ldots,U^{(-1)}, U^{(1)},\ldots, U^{(s)})$ where $U^{(-j)}$ is a $q \times q$ (single-qudit) matrix for each $j=1,\ldots, n$ and $U^{(t)}$ is a $q^2\times q^2$ (two-qudit) matrix for each $t=1,\ldots, s$. We denote the global $q^n \times q^n$ unitary operator implemented by the circuit by $U$, where
\begin{equation}\label{eq:U}
    U := U_{A^{(s)}}^{(s)} U_{A^{(s-1)}}^{(s-1)} \ldots U_{A^{(2)}}^{(2)} U_{A^{(1)}}^{(1)}U_{\{1\}}^{(-1)}\ldots U_{\{n\}}^{(-n)}\,,
    \end{equation}
with $V_X$ indicating the action of the $q^{\lvert X \rvert} \times q^{\lvert X \rvert}$ unitary $V$ on the qudits in subregion $X \subset [n]$ tensored with the identity operation on the qudits in the complement of $X$. 

In this work, we will always assume that projective computational basis measurements are performed on all $n$ qudits at the end of the circuit. Thus, a quantum circuit instance $U$ has a corresponding classical probability distribution $p_U$ over possible measurement outcomes $x \in [q]^n$, as follows:
\begin{equation}
    p_U(x) = \lvert \bra{x} U \ket{1^n}\rvert^2\,.
\end{equation}

Random quantum circuits will refer to situations when, once a circuit diagram has been fixed, the actual unitary gates $U^{(t)}$ that determine the circuit instance are each randomly chosen independently from some distribution over the unitary group. In this paper, we always take this distribution to be the Haar measure, but since our techniques rely on calculating expectations over quantities with only two copies of each $U^{(t)}$, our results also apply when the gates are drawn from any 2-design, such as the Clifford group. Note that Google's quantum computational supremacy experiment \cite{Arute2019GoogleQuantumSupremacy} drew gates from another distribution that is not a 2-design. Heuristically speaking, as long as the distribution lacks any bias or symmetries, we expect properties like anti-concentration to be the same as in the Haar-random case.

\subsection{Random quantum circuit architectures}

An \textit{architecture} for random quantum circuits is simply a procedure for choosing a circuit diagram. Formally, we define it to be a (possibly randomized) classical algorithm that, given parameters $n$ and $s$, computes a circuit diagram of size $s$ on $n$ qudits. Given an architecture and parameters $n$ and $s$, we let the expectation of some quantity $Q$, denoted $\EV_{U}[Q]$, refer to the expectation over the process of first choosing a circuit diagram according to the architecture, and then choosing a circuit instance by randomly generating each gate in the circuit diagram independently from the Haar measure. Next, we define the two architectures that we consider.

\begin{definition}[Complete-graph architecture]\label{def:CGarchitecture}
    Circuit diagrams of size $s$ on $n$ qudits are generated by choosing $s$ gates each uniformly at random from the set of all two-qudit gates, i.e.~$A^{(t)}$ is chosen uniformly from $\{\{a,b\}: a,b\in [n], a \neq b\}$.
\end{definition}

Note that if it could be guaranteed that every qudit would eventually participate in at least one gate, the distribution over circuit instances would be equivalent if we omitted the first layer of $n$ single-qudit gates (defined to be part of every architecture), a fact that follows from the invariance of the Haar measure; the single-qudit gates could be absorbed into the two-qudit Haar-random gates that act directly before or after without changing the distribution over the two-qudit gates. However, in the complete-graph architecture there is a chance that a qudit does not participate in any two-qudit gates, although for sufficiently large circuit size the probability of this vanishes.  

\begin{definition}[1D architecture]\label{def:1Darchitecture}
    Assume $n$ is even and $d:=2s/n$ is an integer. The circuit diagram of size $s$ on $n$ qudits is generated by alternating between the two types of layers of $n/2$ non-overlapping nearest-neighbor two-qudit gates on a ring. That is, for each $t = 1,\ldots, n/2$, if $j$ is even, then $A^{(t+jn/2)} = \{2t-1,2t\}$, and if $j$ is odd, then $A^{(t+jn/2)} = \{2t,2t+1\}$, where index $n$ is identified with index $0$ to enforce periodic boundary conditions. 
\end{definition}

\subsection{Collision probability and anti-concentration}

Anti-concentration is a concept that describes a classical probability distribution for which the probability mass is not too concentrated onto a small number of outcomes of the random variable. The uniform distribution is the ultimate anti-concentrated distribution, as the probability mass is allocated evenly over every possible outcome, but we would still like the term anti-concentrated to apply to some non-uniform distributions if the probability mass is fairly well spread over many of the outcomes. There are multiple ways to make this quantitative. For the purposes of this paper, we choose one way --- the collision probability --- that mirrors previous work on anti-concentration of quantum circuit outputs and suffices for the applications we discuss in the introduction.

Let $X$ be a discrete random variable and let $M$ be the set of outcomes of $X$. We can form another random variable $p$, where $p$ is equal to $\Pr[X=x]$ for an $x$ chosen uniformly at random from $M$. Since $\sum_{x}\Pr[X=x] = 1$, we have $\EV[p] = 1/\lvert M \rvert$ no matter how $X$ is distributed. We define the collision probability for $X$ to be
\begin{align}
    Z :={}& \sum_{x \in M}\Pr[X=x]^2 \\
    ={}&  \EV[p^2] \cdot \lvert M \rvert \\
    ={}& \var(p) \lvert M \rvert+ \lvert M \rvert^{-1}\,,
\end{align}
which is the probability that two identical independent copies of $X$ will be equal to each other --- hence \textit{collision} probability. If the distribution over $X$ is the uniform distribution, then the distribution over $p$ is the point distribution on the value $\lvert M \rvert^{-1}$, the collision probability takes its minimal value $Z = \lvert M \rvert^{-1}$, and $\var(p) = 0$. If $X$ is non-uniform but still somewhat anti-concentrated, then $p$ won't always be $\lvert M \rvert^{-1}$ but it will usually be close, and this will be reflected by a collision probability that is greater, but not too much greater than $\lvert M \rvert^{-1}$. Formally, we make the following definition.

\begin{definition}[Anti-concentrated]
We say that a random variable $X$ over a set $M$ of outcomes is \textit{$\alpha$-anti-concentrated} for $0 < \alpha \leq 1$ if
\begin{equation}
Z:= \sum_{x \in M}\Pr[X=x]^2 \leq \frac{1}{\lvert M \rvert \alpha}\,.
\end{equation}
\end{definition}
Thus a distribution is $1$-anti-concentrated if and only if it is the uniform distribution. 

In our setting, the random variable $X$ is the measurement outcome of a random quantum circuit instance, which is distributed according to the distribution $p_U$ over the outcome set $[q]^n$. Example distributions of $p_U$ for RQC outputs in the uniform, the non-uniform but still anti-concentrated, and the not anti-concentrated case are shown in the caricature in \autoref{fig:ACcartoon}.  A random quantum circuit architecture for specified $n$ and $s$ is understood as an ensemble over many different $U$, only some of which will have output distributions $p_U$ that are $\alpha$-anti-concentrated for a certain choice of $\alpha$. We would like to say that the architecture as a whole is anti-concentrated if typical circuit instances drawn from the architecture are anti-concentrated, acknowledging that not every instance will be. We also require this to hold for the same constant $\alpha$ as $n$ increases, with $s$ increasing like some function $s(n)$. Formally, we accomplish this by averaging the collision probability over the random circuit instance, as follows. 

\begin{definition}[Anti-concentrated RQC architecture]\label{def:anticoncentratedArch}
    We say that a random quantum circuit architecture is \textit{$\alpha$-anti-concentrated} for $0 < \alpha\leq 1$ at circuit size $s(n)$ if there exists $n_0$ such that whenever $n\geq n_0$
    \begin{equation}\label{eq:anticoncARCH}
        Z:=\EV_U\left[\sum_{x\in[q]^n} p_U(x)^2 \right] \leq (\alpha q^n)^{-1}\,,
    \end{equation}
    where $\EV_U$ denotes drawing circuit instances according to the architecture over $n$ qudits with circuit size $s(n)$. Generally, we say that the architecture is \textit{anti-concentrated} at size $s(n)$ if there exists a constant $\alpha > 0$ independent of $n$ for which it is $\alpha$-anti-concentrated at that size.
\end{definition}

RQC architectures for which every qudit experiences at least one gate, which includes all the architectures introduced above, will have a symmetry over the $q^n$ measurement outcomes in the sense that the quantity $p_U(x)$ is distributed identically (over circuit instances) for every $x$. In this case each term in the sum in Eq.~\eqref{eq:anticoncARCH} will have the same contribution and we can write simply 
\begin{equation}
        Z = q^n\EV_U\left[ p_U(1^n)^2 \right]\,.
\end{equation}

The anti-concentration of an architecture implies that most of the instances drawn from that architecture have good anti-concentration properties: Given an architecture at a certain size and a bound on its collision probability $Z \leq \alpha^{-1} q^{-n}$, we can use Markov's inequality to assert that at least a $1-\beta$ fraction of instances have collision probability at most $q^{-n}\left(1+(\alpha^{-1}-1)\beta^{-1}\right)$. In practice, we expect the collision probability of individual instances to be even more clustered near the mean collision probability than this analysis indicates, but proving that this is the case would seem to require computing higher moments like $\EV_U[p_U(1^n)^{k}]$ for $k > 2$. 

As discussed in the main text, an important implication of an $\alpha$-anti-concentrated architecture is that for any $\beta$ with $0 \leq \beta \leq 1$ and sufficiently large $n$ 
\begin{equation}\label{eq:prnotclosetozero}
    \Pr_{U}[p_U(x) \geq \beta q^{-n}] \geq (1-\beta)^2\hl{\alpha}\,,
\end{equation}
which follows directly from the Paley-Zygmund inequality. This inequality indicates that whenever an architecture is anti-concentrated, at least a constant fraction of the outcomes will be allocated an amount of mass that is within a constant factor $\beta$ of the mean mass; it cannot be the case that all but a vanishing fraction of the outcomes are allocated a vanishing fraction of the mean mass. 

\section{Framework for analysis: Random quantum circuits as a stochastic process}\label{sec:analysisframework}

This appendix gives more details on the correspondence discussed in \autoref{sec:overviewofmethod} from the main text. The key idea in our analysis of the collision probability of RQCs is to perform the Haar expectation over each local unitary individually. This is possible due to explicit formulas for expectations under action by a Haar-random unitary. We use these formulas to re-express the collision probability, originally an integral over many continuously varying unitary matrices drawn from the Haar measure, as a weighted discrete sum, which is then analyzed using combinatorial and stochastic methods. This weighted sum can also be interpreted as the partition function of a classical statistical mechanical Ising-like model or as the expectation value of a simple stochastic process. \autoref{fig:statmech} depicts these equivalent representations of the problem. In this appendix, we explain this method and derive the important formulas that will apply generally for any RQC architecture, which are then used in later sections to prove our main results.

\begin{figure}
\centering
\includegraphics[width=0.98\linewidth]{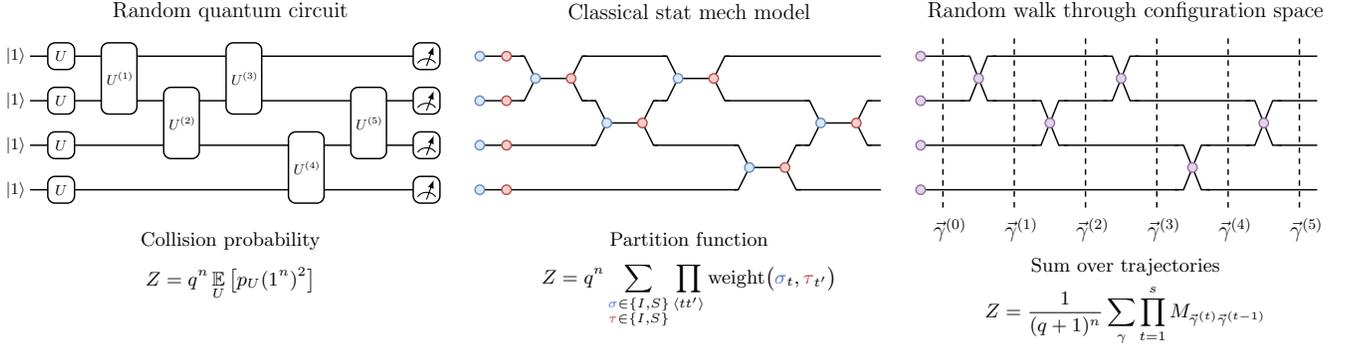}
\caption{A diagram depicting the equivalent ways to interpret the expected value of the collision probability for random quantum circuits. Left: a random quantum circuit of size five. Middle: the reinterpretation as the partition function of a classical statistical mechanics model with local Ising-like spins. Right: another interpretation as a stochastic process of evolving configurations.}
\label{fig:statmech}
\end{figure}

\subsection{Averaging individual unitaries over the Haar measure}

The quantity of interest for anti-concentration is the expected collision probability, which is proportional to a second moment over choice of unitary operator $U$, as illustrated in the following equation, where we recall that $\ketbra{1^n}^{\otimes 2}$ is two copies of the circuit input state
\begin{align}
    Z:={}&q^n\EV_U\left[\left(\bra{1^n} U \ket{1^n} \bra{1^n} U^\dagger \ket{1^n}\right)^2\right] \\
    ={}& q^n \Tr\left[\EV_U \left[ U^{\otimes 2} \ketbra{1^n}^{\otimes 2} U^{\dagger}{}^{\otimes 2} \right]\ketbra{1^n}^{\otimes 2}  \right]\label{eq:collisionprobcopies}\,.
\end{align}
Moreover, for a fixed quantum circuit diagram, the unitary $U$ is given by Eq.~\eqref{eq:U} as a product of single-qudit unitaries $U^{(-j)}$ acting on qudit $j$ for $j=1,\ldots,n$ and two-qudit unitaries $U^{(t)}$ acting on some pair of qudits $A^{(t)}\subset [n]$ for $t=1,\ldots, s$. Each unitary is independently chosen according to the Haar measure, and its expectation can be evaluated separately. Let
\begin{equation}
M^{(t)}[\rho] := \EV_{U^{(t)}}\left[{U^{(t)}_{A^{(t)}}}^{\otimes 2} \rho\, {{U^{(t)}_{A^{(t)}}}^\dagger}^{\otimes 2}\right]\,.
\end{equation}
Then we can write
\begin{align}\label{eq:explocalunitaries}
    \EV_{U}\left[U^{\otimes 2} \ketbra{1^n}^{\otimes 2} {U^\dagger}^{\otimes 2} \right] &= M^{(s)}\circ M^{(s-1)} \circ \cdots \circ M^{(1)} \circ M^{(-1)} \circ \ldots \circ M^{(-n)}\left[\ketbra{1^n}^{\otimes 2}\right]\,.
\end{align}
When an architecture is itself a mixture over randomly chosen circuit diagrams, such as the complete-graph architecture, the overall quantity $\EV_U[U^{\otimes 2} \ketbra{1^n}^{\otimes 2} {U^\dagger}^{\otimes 2}]$ is a mixture over terms of the above form. 

The remainder of this subsection illustrates how the action of $M^{(t)}$ can be evaluated, ultimately allowing us to arrive at the expression for $Z$ given in Eq.~\eqref{eq:partitionfunctiontrajectories}. In the other subsections of this section, we explain how that equation can be interpreted as a partition function of a classical statistical mechanical model or as the expectation over simple stochastic process. 

When the local unitaries are drawn from the Haar measure (or any exact 2-design), the expression $M^{(t)}[\rho]$ can be evaluated in a simple way. Generally, for $\sigma$ a $q^2 \times q^2$ Hermitian operator, and with $V$ chosen from the Haar measure over the set of $q \times q$ unitaries, we define
\begin{equation}
    M[\sigma] := \EV_{V}\left[V^{\otimes 2} \sigma {V^{\dagger}}^{\otimes 2}\right] \\
\end{equation}
and observe that, for any unitary $W$ and any $\sigma$,
\begin{equation}
    M[\sigma]W^{\otimes 2} = \EV_{V}\left[V^{\otimes 2} \sigma {{(W^{\dagger}V)}^{\dagger}}^{\otimes 2}\right] = \EV_{V}\left[{(WV)}^{\otimes 2} \sigma {{V}^{\dagger}}^{\otimes 2}\right] = W^{\otimes 2} M[\sigma]\,,
\end{equation}
where the second equality follows from the invariance of the Haar measure under the substitution $V \rightarrow 
WV$.
A mathematical fact from Schur-Weyl duality (see Ref.~\cite{GoodmanWallach2000Representations}) is that any operator on $k$ copies of a system that commutes with $W^{\otimes k}$ for any unitary $W$ must be a linear combination of permutation operators over the $k$ systems. Here, we have $k=2$ and thus the only permutation operators are the identity operation $I$ and the swap operation $S$, which can be defined as the operator satisfying $S\ket{\psi} \otimes \ket{\phi} = \ket{\phi} \otimes \ket{\psi}$ for any $\ket{\psi}, \ket{\phi}$. Letting $M[\sigma] = \alpha I + \beta S$, we make the following calculations:
\begin{align}
    \Tr[M[\sigma]] &= \Tr[\sigma] = \alpha q^2 + \beta q \\
    \Tr[M[\sigma]S] &= \Tr[\sigma S] = \alpha q + \beta q^2\,,
\end{align}
which determine $\alpha$ and $\beta$ and allow us to write
\begin{equation}\label{eq:Midentityswap}
    M[\sigma] = \frac{\Tr(\sigma) - q^{-1}\Tr(\sigma S)}{q^2-1}I + \frac{\Tr(\sigma S) - q^{-1}\Tr(\sigma)}{q^2-1}S\,.
\end{equation}

The unitaries $U^{(-j)}$ are $q \times q$ (single-qudit) that act on qudit $j$. Two copies of the input state on qudit $j$ is $\ketbra{1}_{\{j\}}^{\otimes 2}$. Denote two copies of the input state on the other $n-1$ qudits by $\rho_{[n]\setminus\{j\}}$. Using Eq.~\eqref{eq:Midentityswap}, we then find that
\begin{equation}
    M^{(-j)}\left[\rho_{[n]\setminus{j}} \otimes \ketbra{1}_{\{j\}}^{\otimes 2}\right] = \rho_{[n]\setminus\{j\}} \otimes \frac{1}{q(q+1)}\left(I + S\right)_{\{j\}}\,,
\end{equation}
meaning that $M^{(-j)}$ simply replaces the state on qudit $j$ as a uniform sum over operators $I$ and $S$. Hence
\begin{equation}
    M^{(-1)}\circ \cdots \circ M^{(-n)}\left[\ketbra{1^n}^{\otimes 2}\right] = \bigotimes_{j=1}^n \left(\frac{1}{q(q+1)}\left(I + S\right)_{\{j\}}\right) =  \frac{1}{q^n(q+1)^n}\sum_{\vec{\gamma}\in \{I,S\}^{n}} \bigotimes_{j=1}^n \gamma_j\,.
\end{equation}
We call each $\vec{\gamma} \in \{I,S\}^n$ a \textit{configuration}. The above equation states that the expected value of two copies of the state after application of all the single-qudit unitaries is precisely a uniform sum over all identity/swap configurations of the $n$ sites.

Now, we need to examine the action of $M^{(t)}$ for $t>0$. In this case, the unitaries are $q^2 \times q^2$ and act on the qudit pair $A^{(t)}$. We can use Eq.~\eqref{eq:Midentityswap} by replacing $q \rightarrow q^2$ and sending $I\rightarrow I\otimes I$, the identity operation on two copies of two qudits, and $S \rightarrow S\otimes S$, the swap operation on two copies of two qudits. We assume that the input state is a product state $\rho_{[n]\setminus A^{(t)}} \otimes \rho_{A^{(t)}}$ and see that
\begin{align}
    M^{(t)}[&\rho_{[n]\setminus A^{(t)}} \otimes \rho_{A^{(t)}}] \\
    ={}& \rho_{[n]\setminus A^{(t)}} \otimes \left(\frac{\Tr(\rho_{A^{(t)}})-q^{-2}\Tr\left(\rho_{A^{(t)}} (S \otimes S)\right)}{q^{4}-1}(I \otimes I)_{A^{(t)}} + \frac{\Tr\left(\rho_{A^{(t)}} (S \otimes S)\right)-q^{-2}\Tr(\rho_{A^{(t)}})}{q^{4}-1}(S\otimes S)_{A^{(t)}}\right)\,.
\end{align}

Since the two qudit gates act after the single-qudit gates, the input state to $M^{(t)}$ will always be a sum of tensor products of $I$ and $S$, so we only need to evaluate the above expression when $\rho_{A^{(t)}}$ is either $I\otimes I$, $I \otimes S$, $S\otimes I$, or $S\otimes S$. Doing so, we arrive at
\begin{align}\label{eq:Mtevaluated1}
    M^{(t)}\left[\rho_{[n]
\setminus A^{(t)}} \otimes (I\otimes I)_{A^{(t)}}\right] &= \rho_{[n]
\setminus A^{(t)}} \otimes (I\otimes I)_{A^{(t)}} \\
    M^{(t)}\left[\rho_{[n]
\setminus A^{(t)}} \otimes (S\otimes S)_{A^{(t)}}\right] &=\rho_{[n]
\setminus A^{(t)}} \otimes (S\otimes S)_{A^{(t)}}\label{eq:Mtevaluated2} \\
    M^{(t)}\left[\rho_{[n]
\setminus A^{(t)}}\otimes (I\otimes S)_{A^{(t)}}\right]= M^{(t)}\left[\rho_{[n]
\setminus A^{(t)}} \otimes (S\otimes I)_{A^{(t)}}\right]&= \rho_{[n]
\setminus A^{(t)}} \otimes  \left(\frac{q}{q^2+1}(I \otimes I)_{A^{(t)}} +\frac{q}{q^2+1}(S \otimes S)_{A^{(t)}}\right) \label{eq:Mtevaluated3}\,.
\end{align}

Thus, if $\rho$ is a linear combination of configurations in $\{I,S\}^n$, $M^{(t)}[\rho]$ will also be a linear combination of configurations, with coefficients that transform linearly under application of $M^{(t)}$.  For configurations $\vec{\gamma}, \vec{\nu} \in \{I,S\}^n$, we let $M^{(t)}_{\vec{\nu}\vec{\gamma}}$ be the matrix element of this linear transformation defined such that
\begin{equation}
    M^{(t)}\left[\bigotimes_{j=1}^n \gamma_j\right] = \sum_{\vec{\nu} \in \{I,S\}^n} M^{(t)}_{\vec{\nu}\vec{\gamma}} \bigotimes_{j=1}^n \nu_j\,.
\end{equation}

Suppose that $U^{(t)}$ acts on qudits $A^{(t)} =\{a,b\} \subset [n]$. Then from Eqs.~\eqref{eq:Mtevaluated1}, \eqref{eq:Mtevaluated2}, \eqref{eq:Mtevaluated3}, we have
\begin{equation}
    M^{(t)}_{\vec{\nu}\vec{\gamma}}=\begin{cases}
    1 & \text{if } \gamma_a = \gamma_b \text{ and } \vec{\gamma} = \vec{\nu} \\
    \frac{q}{q^2+1} & \text{if } \gamma_a \neq \gamma_b \text{ and } \nu_a = \nu_b \text{ and } \gamma_c = \nu_c \;\forall c \in [n]\setminus\{a,b\} \\
    0 & \text{otherwise}
    \end{cases}
\end{equation}
Importantly, $M^{(t)}_{\vec{\nu}\vec{\gamma}}$ is always non-negative. The way to think about the above equation is to notice three things. First, the input configuration $\vec{\gamma}$ and the output configuration $\vec{\nu}$ must agree on all indices that are not involved in the gate, i.e.,~for all indices $c \not\in \{a,b\}$; otherwise the matrix element is 0. Second, if the two input values involved in the gate agree, i.e.,~if $\gamma_a = \gamma_b$ then $\nu_a = \nu_b = \gamma_a = \gamma_b$ must hold (in which case the matrix element is 1); otherwise, it is 0. Third, if the two input values disagree, then one of them must be flipped so that the two output values agree (in which case the matrix element is reduced to $q/(q^2+1)$); otherwise, it is 0.

Note also that 
\begin{equation}
    \Tr\left[\left(\bigotimes_{j=1}^n \gamma_j\right) \ketbra{1^n}^{\otimes 2}\right] = 1
\end{equation}
for all $\vec{\gamma} \in \{I,S\}^n$. Thus, from Eq.~\eqref{eq:collisionprobcopies}, we find
\begin{align}\label{eq:partitionfunctiontrajectories}
    Z &= \frac{1}{(q+1)^n}\sum_{\gamma \in \{I,S\}^{n \times (s+1)}} \;\;\prod_{t=1}^{s}M^{(t)}_{\vec{\gamma}^{(t)}\vec{\gamma}^{(t-1)}} \\
    &=: \frac{1}{(q+1)^n}\sum_{\gamma}\;\;\text{weight}(\gamma) \,,
\end{align}
which is the expression quoted in Eq.~\eqref{eq:introZweight} from the main text. In the above equation, the sum is over length-$(s+1)$ sequences of configurations, which we call a \textit{trajectory} $\gamma = (\vec{\gamma}^{(0)},\ldots,\vec{\gamma}^{(s)})$ and the weight of each term is given by the product of the matrix elements for each step in the trajectory. This final equation is depicted graphically in the right-hand part of \autoref{fig:statmech}.

\subsection{Collision probability as statistical mechanical partition function}

The expression for the collision probability in Eq.~\eqref{eq:partitionfunctiontrajectories} can be interpreted as a partition function for a classical statistical mechanical model by thinking of each $\gamma^{(t)}_j$ as an Ising spin variable with the association $\{I,S\} \leftrightarrow \{+1,-1\}$.
A trajectory $\gamma$ is then a configuration of the Ising spins, and $Z$ is a weighted sum over all the spin configurations. Moreover, the weight is always non-negative and is given by a product of factors $M^{(t)}_{\vec{\gamma}^{(t)}\vec{\gamma}^{(t-1)}}$ that can be determined by examining a small number of the spin values. This means that the energy functional over spin configurations of the classical Ising-like model is always real and can be broken up into local terms that depend on the local dimension $q$ and which qudits are acted upon at each step in the circuit.

The statistical mechanics interpretation has been a useful one for similar problems in the past, where certain RQC moment quantities can be exactly rewritten as the partition sum over spin configurations of a lattice model, as depicted in the central diagram in \autoref{fig:statmech}. We can arrive at the formulation as in Eq.~\eqref{eq:partitionfunctiontrajectories} from the lattice model by summing over a subset of the spins and reinterpreting the resulting nodes as 4-body interaction vertices.

This exact rewriting of RQC moment quantities has been used to compute, for instance, correlation functions \cite{NahumVijayHaah2018OperatorSpreading}, R{\'e}nyi entropies \cite{ZhouNahum2019EmergentStatMech}, and the distance to forming an approximate design \cite{Hunter-Jones2019StatMechDesign}.
Moreover, thermal phase transitions in the classical model can be related to phase transitions of entanglement-entropy-like quantities for the output state of the RQC \cite{JianYouVasseur2020MeasurementInduced,BaoChoiAltman2020TheoryPhaseTransition,Napp2019SEBD}. The interpretation is particularly intriguing when considering analogous quantities to $Z$ for higher moments. The collision probability is a second moment quantity, and the resulting stat mech model has Ising-like variables with two possible values. Quantities related to the $k$th moment will map to classical stat mech models that have $k!$ possible values, one for each element of the symmetric group $\mathcal{S}_k$. However, one challenge of computing higher-moment quantities is that the weights in the partition function can be negative (corresponding to non-real values of the energy for certain spin configurations), complicating many strategies for bounding its behavior, including the strategies employed in the rest of this paper. 

\subsection{Unbiased random walk}

We can build from the formula for $Z$ in Eq.~\eqref{eq:partitionfunctiontrajectories} and re-express it in terms of a length-$s$ unbiased random walk through configuration space $\{I,S\}^n$, which we denote $P_u$. At time step 0, a configuration $\vec{\gamma}^{(0)}$ is chosen uniformly at random, i.e.,~the initial distribution is the uniform distribution in configuration space, denoted $\Lambda_u$.
Then configuration $\vec{\gamma}^{(t+1)}$ at time step $t+1$ is generated from the configuration $\vec{\gamma}^{(t)}$ at time step $t$ as follows: letting $A^{(t)} = \{a,b\}$, if the $a$th and $b$th bits of $\vec{\gamma}^{(t)}$ agree, then the configuration is left unchanged at time step $t+1$; if they disagree, either the value at $a$ or the value at $b$ is flipped each with probability $1/2$ to form $\vec{\gamma}^{(t+1)}$. The weight is reduced each time a bit is flipped. Thus we can write
\begin{equation}\label{eq:ZunbiasedEV}
    Z = \frac{2^n}{(q+1)^n} \EV_{P_u,\Lambda_u}\left[ \left(\frac{2q}{q^2+1} \right)^{\left(\text{\# of bit flips during walk}\right)}\right]\,,
\end{equation}
where $\EV_{P_u,\Lambda_u}$ indicates the expectation over the choosing a length-$s$ walk as described above, where the initial distribution is $\Lambda_u$. This is seen to be equivalent to Eq.~\eqref{eq:partitionfunctiontrajectories} since the probability of a certain trajectory occurring is given by $q^{-n}(1/2)^{\text{\# of bit flips}}$ and thus each trajectory contributes exactly the same amount toward $Z$, once the probability of observing the trajectory is accounted for. 

\subsection{Biased random walk}

A potential problem with the unbiased random walk picture is that the weight of a particular walk is related to the number of bit flips that occur during that walk; it depends not only where the walk begins and ends but also on how it got there. To fix this issue, we can form an equivalent \textit{biased} random walk denoted $P_b$. In this case, the initial distribution $\Lambda_b$ is not uniform over $\{I,S\}^n$, rather, the probability of choosing $\vec{\gamma}^{(0)} = \vec{\nu}$ is proportional to $q^{-\lvert \vec{\nu} \rvert}$, where $\lvert \vec{\nu} \rvert$ is the Hamming weight of $\vec{\nu}$ (number of $S$ entries). Specifically, we have
\begin{equation}
    \Lambda_b(\vec{\nu}) = \frac{q^n}{(q+1)^n}q^{-\lvert \vec{\nu} \rvert}\,.
\end{equation}
The dynamics of $P_b$ are the same as $P_u$, except that when the two bits involved in a gate disagree, it chooses to flip the $S$ to $I$ with probability $q^2/(q^2+1)$ and $I$ to $S$ with probability $1/(q^2+1)$. Thus, it is biased in the $I$ direction. Then, we can express
\begin{equation}\label{eq:ZbiasedEV}
    Z = \frac{1}{q^{n}} \EV_{P_b,\Lambda_b}\left[ q^{\lvert \vec{\gamma}^{(s)} \rvert}\right]\,.
\end{equation}
Note that the quantity being averaged is exponentially large in the Hamming weight of its final ending point, making the quantity sensitive to the probability that the biased walk stays far from the all $I$ configuration. The biased walk is observed to be equivalent to the unbiased walk simply by noting that once the probability of observing a certain trajectory is included, every trajectory contributes the same amount to $Z$ for both walks. The exponential weighting underneath the expectation in the biased walk exactly cancels the bias in the probability of observing a certain walk.

\subsection{Computing sums over trajectories}

Throughout our analysis, we will need to compute weighted sums over various trajectories, or, relatedly, compute probabilities that the biased and unbiased walks end in a certain place. We use the following lemma. The key takeaway is that (perhaps surprisingly), in the limit of infinite size, the contribution of all trajectories originating from a certain initial configuration \textit{depends only on the Hamming weight} of that initial configuration, and not the configuration itself. Moreover, this contribution can be calculated. This lemma is a more precise and generalized version of the recursive calculation of $Q(x)$ in \autoref{sec:overviewofmethod} in the main text. 

\begin{lemma}\label{lem:sumovertrajectories}
    Fix an infinite-size circuit diagram, that is, an infinite sequence of qudit pairs $A=(A^{(1)},A^{(2)},\ldots)$. Also fix integers $0\leq x,y,m \leq n$ such that $y\leq x<y+m$, as well as an initial configuration $\vec{\gamma}^{(0)}$ such that $\lvert \vec{\gamma}^{(0)}\rvert = x$. For each $s\geq 0$, let $\mathcal{T}_s$ be the set of length-$s$ trajectories that
    \begin{enumerate}[(1)]
    \item begin at configuration $\vec{\gamma}^{(0)}$
    \item have a non-zero contribution to $Z$ for the circuit diagram $(A^{(1)},\ldots,A^{(s)})$ formed by truncating $A$ to length $s$
    \item end at any configuration $\vec{\gamma}^{(s)}$ for which $\lvert \vec{\gamma}^{(s)} \rvert = y$, and
    \item satisfy $y < \lvert \vec{\gamma}^{(t)} \rvert < y+m$ for all $t=0,1,2,\ldots,s-1$. 
    \end{enumerate}
    Let $\mathcal{T}=\bigcup_{s=0}^\infty \mathcal{T}_s$. Then
    \begin{equation}
        \sum_{\gamma \in \mathcal{T}} \left( \frac{q}{q^2+1} \right)^{(\text{\# of bit flips during } \gamma)} = \frac{1}{1-q^{-2m}}\left(q^{-(x-y)}-q^{-2m+x-y}\right)\,.
    \end{equation}
\end{lemma}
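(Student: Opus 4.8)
The plan is to convert the weighted sum into a single exit probability for the biased random walk $P_b$ introduced earlier, and then evaluate that probability with the classical gambler's-ruin formula. The key to the conversion is a per-trajectory identity between the combinatorial weight of $\gamma$ and the probability that $P_b$ — run with the transition rules from before but started from a point mass at $\vec\gamma^{(0)}$ instead of $\Lambda_b$ — generates exactly $\gamma$. Write a positive-weight trajectory $\gamma$ in terms of its number of ``down'' bit flips $d$ (an $S$ becoming an $I$, decreasing the Hamming weight) and ``up'' bit flips $u$ (an $I$ becoming an $S$). Then $\mathrm{weight}(\gamma) = (q/(q^2+1))^{d+u}$, while the $P_b$-probability of generating exactly $\gamma$ is $(q^2/(q^2+1))^{d}(1/(q^2+1))^{u}$, since inactive steps carry probability $1$. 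Hence $\mathrm{weight}(\gamma) = q^{\,u-d}\,\Pr_{P_b}[\gamma]$, and for every $\gamma\in\mathcal{T}$ the net displacement satisfies $u-d = \lvert\vec\gamma^{(s)}\rvert - \lvert\vec\gamma^{(0)}\rvert = y-x$, so this ratio is the \emph{same} constant $q^{-(x-y)}$ across all of $\mathcal{T}$.

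Next I would identify $\sum_{\gamma\in\mathcal{T}}\Pr_{P_b}[\gamma]$ with one probability. Conditions (1)--(4) say exactly that $\mathcal{T}$ consists of the prefixes of $P_b$-sample paths, taken up to the first time $\tau$ the Hamming-weight process exits the open interval $(y,y+m)$, restricted to the event that this first exit lands at the bottom value $y$. These prefix-events, over $\gamma\in\mathcal{T}$, are pairwise disjoint and — assuming $\tau<\infty$ almost surely, discussed below — exhaust the event $\{\text{weight process exits }(y,y+m)\text{ at }y\}$. Therefore $\sum_{\gamma\in\mathcal{T}}\mathrm{weight}(\gamma) = q^{-(x-y)}\,\Pr_{P_b,\,\vec\gamma^{(0)}}[\text{exit at }y]$.

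To compute the exit probability I would pass to the embedded walk $\hat W_0 = x,\,\hat W_1,\,\hat W_2,\ldots$ obtained by recording the Hamming weight only after the ``active'' steps (those whose gate acts on a disagreeing pair, forcing a flip). Under $P_b$ the direction of each active step is an independent Bernoulli with $\Pr[\text{down}] = q^2/(q^2+1)$, independent of the past and of which steps were active, so $\hat W$ is a nearest-neighbor biased walk on $\{y,y+1,\ldots,y+m\}$ absorbed at the two endpoints, with step-probability ratio $\Pr[\text{up}]/\Pr[\text{down}] = 1/q^2$. The event that the original weight process exits $(y,y+m)$ at $y$ coincides with $\{\hat W\text{ absorbed at }y\}$, so with $z := x-y\in\{0,\ldots,m-1\}$ and $\rho := q^2$, the gambler's-ruin formula gives $\Pr[\hat W\text{ absorbed at }y] = (\rho^{m}-\rho^{z})/(\rho^{m}-1)$. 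Multiplying by $q^{-(x-y)} = q^{-z}$ and dividing numerator and denominator by $q^{2m}$ yields exactly $\frac{1}{1-q^{-2m}}\bigl(q^{-(x-y)}-q^{-2m+x-y}\bigr)$; the degenerate case $x=y$ (where $\mathcal{T}$ is the single length-zero trajectory) checks out, both sides equalling $1$.

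The main point that needs care is the almost-sure finiteness of $\tau$, i.e.\ that the embedded walk really takes infinitely many steps while the weight remains in $(y,y+m)$. Every weight in the open interval corresponds to a configuration that is neither all-$I$ nor all-$S$ (this uses $0\le y$ and $y+m\le n$, which I take to be implicit in the hypotheses), so a disagreeing qudit pair is always present; as long as the circuit diagram $A$ eventually acts on every pair of qudits — which holds for the 1D and complete-graph diagrams of interest and is the natural standing assumption — an active step always occurs in finite time, forcing $\tau<\infty$ almost surely, and a biased walk on a finite interval is then absorbed with probability one. The remaining ingredients (the weight-to-probability identity, the independence used to view $\hat W$ as a plain biased walk, and the gambler's-ruin computation) are routine. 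As an independent consistency check, one verifies that $f(z) := \frac{1}{1-q^{-2m}}(q^{-z}-q^{-2m+z})$ satisfies $f(z) = \frac{q}{q^2+1}\bigl(f(z-1)+f(z+1)\bigr)$ for $0<z<m$ with $f(0)=1$ and $f(m)=0$ — precisely the recursion obtained by conditioning the weighted sum on its first active gate, generalizing the computation of $Q(x)$ in the main text.
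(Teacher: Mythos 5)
Your proof is correct, but it runs the paper's logic in the opposite direction. The paper proves the lemma directly on the weighted sum: it argues that the sum depends only on the Hamming weights $(x,y,m)$ via a bijection with no-consecutive-duplicate weight sequences, writes down the recursion $Q(x)=\frac{q}{q^2+1}(Q(x-1)+Q(x+1))$ by conditioning on the first flip, and solves it with the boundary conditions $Q(y)=1$, $Q(y+m)=0$. It then \emph{deduces} the biased-walk exit probability (its Corollary on the biased walk) from the lemma using exactly the per-trajectory identity $\mathrm{weight}(\gamma)=q^{-(x-y)}\Pr_{P_b}[\gamma]$ that you establish. You instead prove that corollary first --- reducing the exit probability to the embedded nearest-neighbor walk and quoting the classical gambler's-ruin formula --- and then recover the lemma by multiplying by $q^{-(x-y)}$. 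Since gambler's ruin is itself the solution of the same two-point boundary-value recursion, the underlying mathematics is identical; what your route buys is the ability to cite a standard result rather than re-derive it, at the cost of having to justify the probabilistic bookkeeping (disjointness and exhaustiveness of the prefix events, almost-sure finiteness of the exit time). On that last point you are actually more careful than the paper: the lemma as stated is for an arbitrary infinite gate sequence $A$, and both proofs implicitly need $A$ to keep acting on disagreeing pairs (and need $y+m\leq n$ so that no interior weight is a fixed point); the paper's claimed bijection with weight sequences silently assumes this, whereas you flag it as a standing hypothesis satisfied by the architectures of interest. Your closing consistency check via the recursion is, in effect, the paper's entire proof.
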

\begin{proof}
    First, we claim that the sum should depend only on $x$, $y$, and $m$, and not on $\vec{\gamma}^{(0)}$ (other than through its dependence on $x$). To see this, note that there is a one-to-one correspondence between trajectories in $\mathcal{T}$ and sequences of Hamming weights $(x,x_1,\ldots,x_{s'-1},y)$ with the property that either $x_t = x_{t+1}+1$ or $x_t = x_{t+1}-1$ for every $t$ (no consecutive duplicates). This is seen by (1) the fact that given a trajectory in $\mathcal{T}$, one can generate such a sequence by taking the Hamming weight of each configuration in the sequence and removing consecutive duplicates and (2) the fact that given such a Hamming weight sequence one can generate a unique trajectory by starting with $\vec{\gamma}^{(0)}$, evolving the trajectory according to the circuit diagram $A$, and always choosing whether to flip $I$ to $S$ or $S$ to $I$ so that the order of Hamming weights prescribed by the sequence is followed. Thus, the sum over trajectories in $\mathcal{T}$ may be replaced by a sum over Hamming weight sequences, which does not depend on $\vec{\gamma}^{(0)}$, except through its Hamming weight $x$. 
    
    For each $x$ in the interval $[y,y+m]$, let the expression on the left-hand-side of the lemma be given by $Q(x)$. Then for each $x$ in $[y+1,y+m-1]$, we have the recursion relation
    \begin{equation}
        Q(x) = \frac{q}{q^2+1}(Q(x-1)+Q(x+1))\,,
    \end{equation}
    since the first bit flip will either send $x$ to $x-1$ or to $x+1$ and in either case a factor of $q/(q^2+1)$ is incurred. The recursion relation gives rise to a general solution of the form
    \begin{equation}
        Q(x) = F q^{x} + G q^{-x}
    \end{equation}
    for some constants $F$ and $G$. This is a unique solution since all values can be generated once two consecutive values are specified, and the specification of two consecutive values also uniquely specifies $F$ and $G$. To find $F$ and $G$ in this case, we must also impose the boundary conditions $Q(y)=1$ and $Q(y+m)=0$, since if $x=y$ the only trajectory in $\mathcal{T}$ is the length-0 trajectory $(\vec{\gamma}^{(0)})$, and if $x=y+m$, $\mathcal{T}$ is the empty set. By specifying these boundary conditions we can solve for $F$ and $G$ and verify the statement of the lemma. 
\end{proof}

\begin{corollary}\label{cor:biasedprob}
    Fix non-negative integers $x,y,m$ such that $y \leq x < y+m$. For the biased walk, if the starting configuration has Hamming weight $x$, the probability that the walk reaches a configuration with Hamming weight $y$ before it reaches a configuration with Hamming weight $y+m$ is given by 
    \begin{equation}
        \frac{q^{x-y}}{1-q^{-2m}}\left(q^{-(x-y)}-q^{-2m+x-y}\right)\,.
    \end{equation}
\end{corollary}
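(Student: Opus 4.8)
The plan is to obtain \autoref{cor:biasedprob} as an immediate consequence of \autoref{lem:sumovertrajectories}, the only new ingredient being a formula for the $P_b$-probability of an individual trajectory in terms of its bit-flip count. First I would record the transition rule of the biased walk: at a gate acting on a disagreeing pair the walk flips $S\to I$ with probability $q^2/(q^2+1)$ and $I\to S$ with probability $1/(q^2+1)$, whereas a gate acting on an agreeing pair does nothing and contributes a factor $1$. Hence a fixed trajectory $\gamma$ starting at a configuration of Hamming weight $x$, ending at Hamming weight $y$, and undergoing $k$ bit flips has
\[
\Pr_{P_b}[\gamma] = \left(\frac{q^2}{q^2+1}\right)^{k_-}\left(\frac{1}{q^2+1}\right)^{k_+},
\]
where $k_-$ (resp.\ $k_+$) counts the $S\to I$ (resp.\ $I\to S$) flips. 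Since $k_-+k_+=k$ and $k_--k_+=x-y$ (the net decrease in Hamming weight), we get $2k_-=k+(x-y)$ and therefore $\Pr_{P_b}[\gamma]=q^{x-y}\left(q/(q^2+1)\right)^{k}$; in particular this depends on $\gamma$ only through $x$, $y$, and the number of bit flips.

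Next I would observe that the event ``the walk reaches Hamming weight $y$ before Hamming weight $y+m$'' decomposes as a disjoint union over exactly the trajectory set $\mathcal{T}$ of \autoref{lem:sumovertrajectories} (with $\vec\gamma^{(0)}$ any configuration of Hamming weight $x$): conditions (1)--(4) of the lemma say precisely that $\gamma$ has positive weight, begins at $\vec\gamma^{(0)}$, stays strictly between $y$ and $y+m$ until the final step, and then lands on $y$ --- i.e.\ that the first visit of the walk to $\{y,y+m\}$ occurs at $y$. Summing the per-trajectory probabilities,
\[
\Pr_{P_b}\big[\text{reach }y\text{ before }y+m\big] \;=\; \sum_{\gamma\in\mathcal{T}}\Pr_{P_b}[\gamma] \;=\; q^{x-y}\sum_{\gamma\in\mathcal{T}}\left(\frac{q}{q^2+1}\right)^{(\#\text{ of bit flips during }\gamma)},
\]
and substituting the value of the last sum from \autoref{lem:sumovertrajectories} yields $q^{x-y}\cdot\frac{1}{1-q^{-2m}}\left(q^{-(x-y)}-q^{-2m+x-y}\right)$, which is the claimed expression. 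A pleasant byproduct, inherited from the lemma, is that this probability does not depend on the particular starting configuration (only on its Hamming weight) nor even on the fixed infinite circuit diagram.

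There is no genuinely hard step: all the work is already in \autoref{lem:sumovertrajectories}. The two places to be careful are the elementary identity $2k_-=k+(x-y)$, which is what collapses the product of transition probabilities into the clean form $q^{x-y}(q/(q^2+1))^{k}$, and the bookkeeping check that the constraints defining $\mathcal{T}$ coincide exactly with the first-hitting event (in particular that condition (4), which only restricts the configurations at times $0,\dots,s-1$, still permits the last step to move onto weight $y$). It is also worth noting that the remaining probability mass --- trajectories that hit $y+m$ first, or that never hit either boundary --- simply does not appear in this sum, so no normalization subtlety arises.
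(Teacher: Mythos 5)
Your proposal is correct and follows essentially the same route as the paper: the paper likewise notes that a biased-walk path from Hamming weight $x$ to $y$ with $b$ bit flips has probability exactly $q^{x-y}(q/(q^2+1))^{b}$ and then identifies the first-hitting event with the trajectory set of \autoref{lem:sumovertrajectories}, so the corollary is the lemma's sum scaled by $q^{x-y}$. Your write-up merely makes the $2k_-=k+(x-y)$ bookkeeping explicit, which the paper leaves implicit.
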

\begin{proof}
    The transition rules of the biased walk prescribe that transitions upward in Hamming weight occur with probability $1/(q^2+1)$, and transitions downward in Hamming weight occur with probability $q^2/(q^2+1)$. Thus the probability of a series of transitions in which the initial Hamming weight is $x$, the final Hamming weight is $y$, and the number of times a bit flip occurs is $b$ is precisely $q^{x-y}(q/(q^2+1))^{b}$. The sum over all paths weighted by their probability is then precisely the sum in the left-hand-side of \autoref{lem:sumovertrajectories} scaled by $q^{x-y}$, yielding the corollary. 
\end{proof}
\begin{corollary}\label{cor:biasedPIPS}
    If we begin at a trajectory $\vec{\gamma}^{(0)}$ with $\lvert \vec{\gamma}^{(0)} \rvert = x $ and allow the biased walk to evolve until it ends at one of the fixed points $I^n$ or $S^n$, then the probability that the trajectory ends at $I^n$ is given by
    \begin{align}
        P_I(x) &= \frac{1}{1-q^{-2n}}\left(1-q^{-2n+2x}\right)
    \end{align}
    and the probability that it ends at $S^n$ is given by
    \begin{align}
        P_S(x) &=  \frac{q^{-2n+2x}}{1-q^{-2n}}\left(1-q^{-2x}\right)\,.
    \end{align}
\end{corollary}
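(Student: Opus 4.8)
The plan is to deduce both formulas as immediate specializations of Corollary~\ref{cor:biasedprob}. The fixed points $I^n$ and $S^n$ are precisely the unique configurations of Hamming weight $0$ and Hamming weight $n$ respectively, and the biased walk can only stop upon reaching one of them. Hence the event ``the walk ends at $I^n$'' coincides with the event ``the walk reaches Hamming weight $0$ before Hamming weight $n$,'' and ``the walk ends at $S^n$'' is its complement. This is the entire conceptual content; the rest is bookkeeping.

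First I would apply Corollary~\ref{cor:biasedprob} with $y = 0$ and $m = n$, which is legal exactly when $0 \leq x < n$. It yields the probability of reaching weight $0$ before weight $n$ as $\frac{q^{x}}{1-q^{-2n}}\bigl(q^{-x} - q^{-2n+x}\bigr)$, and expanding the product collapses this to $\frac{1}{1-q^{-2n}}\bigl(1 - q^{-2n+2x}\bigr) = P_I(x)$. The boundary case $x = n$ is handled trivially and separately: the walk starts already at $S^n$, so $P_I(n) = 0$ and $P_S(n) = 1$, which agrees with substituting $x = n$ into the claimed closed forms, so those forms are in fact valid on the full range $0 \leq x \leq n$.

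For $P_S(x)$ I would take $P_S(x) = 1 - P_I(x)$, valid once we know the biased walk reaches a fixed point with probability $1$. Subtracting and simplifying gives $1 - P_I(x) = \frac{(1-q^{-2n}) - (1 - q^{-2n+2x})}{1-q^{-2n}} = \frac{q^{-2n+2x} - q^{-2n}}{1-q^{-2n}} = \frac{q^{-2n+2x}}{1-q^{-2n}}\bigl(1 - q^{-2x}\bigr) = P_S(x)$, as claimed. As a consistency check I would also verify directly that the two claimed expressions sum to $1$, which they do after clearing the common denominator $1 - q^{-2n}$.

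The only point requiring care --- and hence the ``main obstacle,'' such as it is --- is justifying that the walk is absorbed at a fixed point almost surely, so that $P_I(x) + P_S(x) = 1$ and the complement step is legitimate. For the infinite circuit diagrams under consideration this follows from the same one-to-one correspondence between trajectories and finite Hamming-weight sequences used in the proof of Lemma~\ref{lem:sumovertrajectories}, together with the bias that keeps the absorbing boundary at weight $0$ reachable. Alternatively, one can sidestep the issue entirely by first establishing a mirror-image version of Lemma~\ref{lem:sumovertrajectories} that sums the weights of trajectories terminating at the \emph{top} endpoint $y+m$ of the interval rather than the bottom endpoint $y$ --- the recursion is identical, only the boundary conditions $Q(y)=1$, $Q(y+m)=0$ are swapped --- and then specializing it with $y = 0$, $m = n$ to obtain $P_S(x)$ directly, converting the weighted sum to a probability sum by the same factor as in the proof of Corollary~\ref{cor:biasedprob}.
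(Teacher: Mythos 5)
Your proposal is correct and follows essentially the same route as the paper: specialize Corollary~\ref{cor:biasedprob} with $y=0$, $m=n$ to get $P_I(x)$, then take $P_S(x)=1-P_I(x)$ using absorption at a fixed point. The extra care you give to the boundary case $x=n$ and to almost-sure absorption is reasonable but not part of the paper's (terser) argument.
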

\begin{proof}
    Termination at $I^n$ corresponds to the cases where Hamming weight 0 is hit before Hamming weight $n$. Thus the equation for $P_I(x)$ follows from \autoref{cor:biasedprob} with $y=0$ and $m=n$. We have $P_S(x) = 1-P_I(x)$, since the trajectory must terminate at one fixed point or the other. 
\end{proof}

\subsection{Sanity check: Infinite circuit size convergence to Haar value}\label{sec:sanitycheckZH}

The Markov chain has two stationary distributions, at configurations $I^n$ and $S^n$. In the infinite circuit size limit, the biased walk will converge to a mixture of these two fixed-point configurations, where the amount of mass at each fixed-point depends only on the Hamming weight of the initial configuration, as described in \autoref{cor:biasedPIPS}. Using the expressions for $P_I$ and $P_S$, we find that, in the infinite circuit size limit,
\begin{align}
    Z &= \frac{1}{q^n} \sum_{\vec{\gamma}^{(0)}}\Lambda_b(\vec{\gamma}^{(0)})\EV_{P_b,\vec{\gamma}^{(0)}}\left[q^{\lvert\vec{\gamma}^{(s)}\rvert}\right]\\
    &= \frac{1}{(q+1)^n} \sum_{\vec{\gamma}^{(0)}}q^{-\lvert \vec{\gamma}^{(0)} \rvert}(P_I(\lvert \vec{\gamma}^{(0)} \rvert) + q^n P_S(\lvert \vec{\gamma}^{(0)} \rvert)) \\
    &= \frac{1}{(q+1)^n(1-q^{-2n})}\sum_{x=0}^n \binom{n}{x}q^{-x}(1-q^{-2n+2x} + q^{-n+2x}-q^{-n}) \\
    &= \frac{1}{(q+1)^n(1-q^{-2n})}\left(\left(\frac{q+1}{q}\right)^n-q^{-2n}(q+1)^n+q^{-n}(q+1)^n-q^{-n}\left(\frac{q+1}{q}\right)^n \right) \\
    &= \frac{(2q^{-n}-2q^{-2n})(q+1)^n}{(q+1)^n(1-q^{-2n})} \\
    &= \frac{2}{q^n+1} = Z_H\,,
\end{align}
where $Z_H$ is the Haar value. This outcome is expected since in the infinite circuit size limit the distribution over random unitaries formed from Haar-random local components will approach the distribution over $n$-qudit Haar random unitaries.

\section{Bounds for general architectures}\label{sec:generalbounds}

\subsection{Upper bound on collision probability}

In order to have a meaningful upper bound, we need the architecture to satisfy basic connectivity requirements; for example, if the architecture performs gates on the same pair of qudits over and over again, $Z$ will never decrease and the output distribution never become anti-concentrated. We need to rule out this sort of architecture.

Recall that an RQC architecture is a (possibly randomized) procedure for choosing a length-$s$ sequence $(A^{(1)},\ldots, A^{(s)})$ of pairs of qudit indices on which to perform a Haar-random gate. 

\begin{definition}[Regularly connected]\label{def:regularlyconnected}
    We say an RQC architecture is $r$-\textit{regularly connected} if for any $n$, any $t$, any subsequence $A=(A^{(1)},\ldots,A^{(t)})$ and any proper subset $R \subset [n]$ of qudit indices, there is at least a $1/2$ probability that, conditioned on the first $t$ gates in the gate sequence being $A$, there exists some index $t'$ for which $t< t' \leq t+rn$, $A^{(t')} \cap R \neq \emptyset$, and $A^{(t')} \not\subset R$.
\end{definition}

The above definition requires that given any partition of the qudits into two sets, we should expect at least one gate to couple a qudit from one set with a qudit from the other set after only a linear number of gates. Note that both the 1D and the complete-graph architecture have this property. In 1D, it only takes two layers, or $n$ gates, to guarantee having performed a gate that crosses any partition one might choose. Similarly, in the complete-graph architecture, the probability that a randomly chosen gate crosses a partition is at least $1/n$ (which happens if the partition splits the indices into a set with one index and a set with the other $n-1$ indices), and the probability of having crossed the partition becomes large after $\Theta(n)$ gates. Most natural architectures we might consider have this property. One architecture that is not regularly connected is the hypercube architecture, where $n=2^{D}$ qudits lie at the vertices of a $D$-dimensional hypercube, and $D$ layers of gates are performed by cycling through each set of parallel edges. In this architecture, it would take $nD/2 = \Theta(n\log(n))$ gates to guarantee that any partition has been crossed.

Assuming the regularly connected property, we can show a weak upper bound on the collision probability. 

\begin{theorem}[\autoref{thm:generalupperboundsummary} from main text]\label{thm:generalUB}
    If an RQC architecture is $r$-regularly connected, then the collision probability satisfies 
    \begin{equation}
        Z \leq Z_H\left(1+e^{-\frac{2a}{n}(s-s^*)}\right)\,,
    \end{equation}
    where
    \begin{align}
    a&:= (2r)^{-1}\log\left(\frac{2(q^2+1)}{(q+1)^2}\right)\\
    s^*&:= (2a)^{-1}\log\left(\frac{2q}{q+1}\right)n^2 + O(n)\,.
\end{align}
\end{theorem}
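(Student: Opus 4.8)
\textit{Proof proposal.} The plan is to work from the unbiased-random-walk representation $Z = \tfrac{2^n}{(q+1)^n}\,\EV_{P_u,\Lambda_u}\big[w^{B_s}\big]$ of Eq.~\eqref{eq:ZunbiasedEV}, writing $w := 2q/(q^2+1)\in(0,1)$ and letting $B_s$ denote the number of bit flips in the first $s$ steps of the walk. Because the connectivity hypothesis forces bit flips to keep occurring whenever the walk is away from a fixed point, the walk is absorbed at $I^n$ or $S^n$ almost surely as $s\to\infty$, so $Z_H = \tfrac{2^n}{(q+1)^n}\,\EV[w^{B_\infty}]$, where $B_\infty\ge B_s$ is the total number of flips before absorption (this value of $Z_H$ is in any case computed in \autoref{sec:sanitycheckZH}). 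On the event $\{\tau\le s\}$ that the absorption time is at most $s$ one has $B_\infty=B_s$, whereas on $\{\tau>s\}$ one has $w^{B_s}-w^{B_\infty}\le w^{B_s}$; hence
\begin{equation}
Z - Z_H \;=\; \frac{2^n}{(q+1)^n}\,\EV\big[w^{B_s}-w^{B_\infty}\big] \;\le\; \frac{2^n}{(q+1)^n}\,\EV\big[w^{B_s}\,\mathbf{1}[\tau>s]\big]\,,
\end{equation}
and everything reduces to showing that the right-hand expectation decays exponentially in $s/n$.

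To bound it I would chop the $s$ gates into $k:=\lfloor s/(rn)\rfloor$ consecutive windows of $rn$ gates. The central estimate is: conditioned on all history up to the start of a window and on the walk not sitting at a fixed point there, the probability that at least one bit flip occurs during that window is at least $1/2$. This is where $r$-regular-connectedness enters --- the circuit diagram is chosen independently of the coin flips that drive the walk, so with probability at least $1/2$ some gate in the window acts across the (proper, nonempty) partition of $[n]$ into the current $S$-positions and $I$-positions, and at the first such gate either a flip has already occurred during the window or the configuration is still unchanged, in which case that gate acts on a disagreeing pair and a flip is forced. Let $F_i$ be the event that a flip occurs in window $i$, $N:=\sum_{i=1}^k\mathbf{1}[F_i]$, and $D_i$ the indicator that the walk is not yet absorbed at the start of window $i$, so that the $D_i$ are non-increasing in $i$. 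Since $\{\tau>s\}\subseteq\{D_1=\cdots=D_k=1\}$ and $B_s\ge N$ always, one has $w^{B_s}\mathbf{1}[\tau>s]\le w^{N}D_k=\prod_{i=1}^k\big(w^{\mathbf{1}[F_i]}D_i\big)$; peeling the windows off one at a time from $i=k$ downward, using that $D_i$ is measurable with respect to the history before window $i$ and that $\EV[w^{\mathbf{1}[F_i]}\mid\mathcal{H}_{i-1}]\le 1-\tfrac12(1-w)=\tfrac{1+w}{2}$ on $\{D_i=1\}$, gives $\EV[w^{B_s}\mathbf{1}[\tau>s]]\le\big(\tfrac{1+w}{2}\big)^{k}$.

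Finally I would convert the constants. One computes $\tfrac{1+w}{2}=\tfrac{(q+1)^2}{2(q^2+1)}=e^{-2ra}$ with $a$ as in the statement, and $k\ge s/(rn)-1$, so $\big(\tfrac{1+w}{2}\big)^k\le e^{2ra}e^{-2as/n}$; moreover $\tfrac{2^n}{(q+1)^n}\le\big(\tfrac{2q}{q+1}\big)^n Z_H = e^{n\log(2q/(q+1))}Z_H$, using $Z_H=2/(q^n+1)$ and $q\ge2$ (equivalently $q^n+1\le 2q^n$). Multiplying these bounds yields $Z-Z_H\le Z_H\,e^{\,2ra+n\log(2q/(q+1))-2as/n}=Z_H\,e^{-\frac{2a}{n}(s-s^*)}$ with $s^* = (2a)^{-1}\log\!\big(\tfrac{2q}{q+1}\big)n^2 + rn$, i.e.\ $(2a)^{-1}\log(2q/(q+1))\,n^2+O(n)$, as claimed. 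The step I expect to be the main obstacle is making the window argument fully rigorous: correctly invoking the independence of the circuit diagram from the walk in the ``at least one flip per window with probability $\ge 1/2$'' claim, and tracking $\tau$, $B_s$, $N$ and the $D_i$ simultaneously --- in particular the fact that a flip can itself trigger absorption is precisely why one works with the telescoped product $\prod_i(w^{\mathbf{1}[F_i]}D_i)$ rather than with $N$ and $\tau$ separately. Minor routine care is also needed when $rn\notin\mathbb{Z}$ (use windows of length $\lceil rn\rceil$) and for the degenerate case $q=1$.
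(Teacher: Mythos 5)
Your proposal is correct and follows essentially the same route as the paper's proof: both start from the unbiased-walk formula of Eq.~\eqref{eq:ZunbiasedEV}, bound $Z-Z_H$ by the contribution of trajectories not yet absorbed, and use the $r$-regularly-connected property to extract a contraction factor $\tfrac12+\tfrac12\cdot\tfrac{2q}{q^2+1}$ per window of $rn$ gates before converting constants in the identical way. The only difference is presentational --- the paper phrases the window argument as a recursion $Z^{(t+rn)}-Z_H\leq\big(\tfrac12+\tfrac12\tfrac{2q}{q^2+1}\big)(Z^{(t)}-Z_H)$ rather than as your telescoped product of conditional expectations.
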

\begin{proof}
We use the expression given to us by the unbiased walk in Eq.~\eqref{eq:ZunbiasedEV}
\begin{equation}
    Z = \frac{2^n}{(q+1)^n} \EV_{P_u,\Lambda_u}\left[ \left(\frac{2q}{q^2+1} \right)^{\left(\text{\# of bit flips during walk}\right)}\right]\,.
\end{equation}
Define $Z^{(t)}$ to be the value of the collision probability, given above via the biased walk, after $t$ time steps, so $Z = Z^{(s)}$ and $Z^{(0)} = 2^n/(q+1)^n$. 

Consider a given trajectory produced by the unbiased walk up to time step $t$, $\gamma = (\vec{\gamma}^{(0)},\ldots,\vec{\gamma}^{(t)})$. If $\vec{\gamma}^{(t)} = I^n$ or $\vec{\gamma}^{(t)} = S^n$ then the walk has reached a fixed point and will never again change. From the calculation in \autoref{sec:sanitycheckZH}, we know that the sum of all the weights of all walks \textit{of any length} that reach a fixed point is precisely $Z_H$. Since the weights are non-negative, this implies that the sum over walks that have reached it before time step $t$ is less than $Z_H$, and hence the combined weight of trajectories that have \textit{not} reached a fixed point by time step $t$ is at least $Z^{(t)}-Z_H$. Meanwhile, if $\vec{\gamma}^{(t)}$ is not at a fixed point, then we can consider the proper subset $R \subset [n]$ of sites with value $S$. By the 
$r$-regularly connected property, there is at least a 1/2 chance that one of the gates between time step $t+1$ and $t+rn$ matches an index in $R$ with one in the complement of $R$. When this happens, a bit must be flipped and the weight of that trajectory is reduced by factor $2q/(q^2+1)$. Thus, the following must hold
\begin{equation}
    Z^{(t+rn)}-Z_H \leq \left(\frac{1}{2} + \frac{1}{2}\frac{2q}{q^2+1}\right)\left(Z^{(t)}-Z_H\right)\,.
\end{equation}

Moreover, we know that $Z^{(0)} = 2^n/(q+1)^n$, so
\begin{align}
    Z^{(s)} &\leq Z_H + \left(\frac{(q+1)^2}{2(q^2+1)}\right)^{s/(rn)}\left(\frac{2^n}{(q+1)^n} - Z_H\right) \\
    &\leq Z_H + \left(\frac{(q+1)^2}{2(q^2+1)}\right)^{s/(rn)}\left(\frac{2^n}{(q+1)^n}\right) \\
    &= Z_H\left(1+\frac{2^{n}(q^n+1)}{2(q+1)^n}\left(\frac{(q+1)^2}{2(q^2+1)}\right)^{s/(rn)}\right) \\
    &\leq Z_H\left(1+\frac{2^nq^{n}}{(q+1)^n}\left(\frac{(q+1)^2}{2(q^2+1)}\right)^{s/(rn)}\right) \\
    &\leq Z_H(1+e^{-\frac{2a}{n}(s-s^*)})\,,
\end{align}
where
\begin{align}
    a&:= (2r)^{-1}\log\left(\frac{2(q^2+1)}{(q+1)^2}\right) = \Theta(1)\\
    s^*&:= (2a)^{-1}\log\left(\frac{2q}{q+1}\right)n^2 = \Theta(n^2)\,.
\end{align}
\end{proof}
Note that we have made no attempt to optimize the constant prefactor of the $\Theta(n^2)$ or the value of $a$. Indeed, we conjecture that \autoref{thm:generalUB} could be improved so that $s^* = \Theta(n\log(n))$, which would be a dramatic improvement that implies the fundamental scaling of the anti-concentration size is independent of the architecture's connectivity, so long as it satisfies the regularly connected property. 

\subsection{Lower bound on collision probability}

In this section, we prove an $\Omega(n\log(n))$ lower bound on the circuit size needed for anti-concentration in general circuit architectures. This also implies an $\Omega(\log(n))$ lower bound on the anti-concentration depth.

\begin{theorem}[\autoref{thm:generallowerboundsummary} from main text]\label{thm:generalLB}
    For any RQC architecture of size $s$ on $n$ qudits with local dimension $q$, the collision probability satisfies
    \begin{equation}\label{eq:generalLB}
        Z \geq \frac{Z_H}{2} \exp\left(\frac{\log(q)}{q+1}\exp\left(\log(n)-\frac{2s}{n}\log\left(q^2+1\right)\right)\right)\,.
    \end{equation}
\end{theorem}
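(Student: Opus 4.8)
The plan is to start from the biased-walk formula $Z = q^{-n}\,\EV_{P_b,\Lambda_b}\!\left[q^{\lvert\vec\gamma^{(s)}\rvert}\right]$ of Eq.~\eqref{eq:ZbiasedEV} and exploit the factorization $q^{\lvert\vec\gamma^{(s)}\rvert} = \prod_{j=1}^{n} q^{X_j}$, where $X_j\in\{0,1\}$ indicates whether site $j$ carries the value $S$ at the final time. Heuristically, the excess of $Z$ over $Z_H$ is produced by each individual qudit having a small-but-non-negligible chance of being \emph{stuck} at $S$, and these chances compound multiplicatively. So I would (i) prove a correlation inequality $\EV\!\left[\prod_j q^{X_j}\right] \ge \prod_j \EV\!\left[q^{X_j}\right]$, (ii) lower bound each factor $\EV[q^{X_j}] = 1 + (q-1)\Pr[X_j=1]$ using only the number $\deg(j)$ of two-qudit gates touching qudit $j$, and (iii) combine and use convexity to replace the per-qudit degrees by their average $2s/n$. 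This argument uses no connectivity hypothesis, so it applies to every architecture.

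For step (i): I claim the biased walk admits a monotone coupling for the coordinatewise order $I<S$. One step of $P_b$ on a pair $\{a,b\}$ leaves agreeing pairs unchanged and sends a disagreeing pair to $II$ with probability $q^2/(q^2+1)$ and to $SS$ with probability $1/(q^2+1)$; if the same Bernoulli ``coin'' is fed into every step --- with the convention that the outcome sending a pair to $II$ precedes the one sending it to $SS$ --- then a short case check shows that $\vec\gamma^{(t-1)}\le\tilde{\vec\gamma}^{(t-1)}$ forces $\vec\gamma^{(t)}\le\tilde{\vec\gamma}^{(t)}$, and also that raising any one coin can only raise $\vec\gamma^{(s)}$. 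Since $\Lambda_b$ is a product measure whose $j$th marginal places weight $1/(q+1)$ on $S$, it follows that $\vec\gamma^{(s)}$ is a coordinatewise-nondecreasing function of independent inputs (the initial bits and the coins), so the FKG/Harris inequality for product measures gives $\EV[\phi(\vec\gamma^{(s)})\psi(\vec\gamma^{(s)})] \ge \EV[\phi(\vec\gamma^{(s)})]\,\EV[\psi(\vec\gamma^{(s)})]$ for all nondecreasing $\phi,\psi$. Iterating this with $\phi = q^{X_1}$, $\psi = \prod_{j\ge 2} q^{X_j}$, and so on, yields $Z \ge q^{-n}\prod_{j=1}^{n}\bigl(1+(q-1)\Pr[X_j=1]\bigr)$. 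I expect this monotone-coupling / FKG step to be the only part that needs real care; the remaining steps are elementary estimates.

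For step (ii): conditioning on $\vec\gamma^{(0)}_j = S$ (probability $1/(q+1)$), the value at site $j$ can change only at one of the $\deg(j)$ two-qudit gates touching $j$ --- the initial single-qudit gates never flip a bit --- and at such a gate, given that site $j$ currently holds $S$, the conditional probability that it is \emph{not} flipped is at least $1/(q^2+1)$ (it equals $1$ if the partner already holds $S$, and $1/(q^2+1)$ otherwise). By the chain rule, $\Pr[X_j=1] \ge (q+1)^{-1}(q^2+1)^{-\deg(j)}$. Concavity of the logarithm then yields $\log\bigl(1+(q-1)\Pr[X_j=1]\bigr) \ge \log\bigl(1+\tfrac{q-1}{q+1}(q^2+1)^{-\deg(j)}\bigr) \ge \tfrac{\log q}{q+1}(q^2+1)^{-\deg(j)}$; the last inequality is the bound $\log(1+\tfrac{q-1}{q+1}u)\ge\tfrac{\log q}{q+1}u$ for $u\in[0,1]$, which, since its left side minus right side is concave and vanishes at $u=0$, reduces to checking it at $u=1$, i.e.\ to $\bigl(\tfrac{2q}{q+1}\bigr)^{q+1}\ge q$.

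Combining the pieces with $Z_H = 2/(q^n+1)$ and $(q^n+1)/q^n\ge 1$ gives
\[
\frac{Z}{Z_H} \;\ge\; \frac12\exp\!\left(\frac{\log q}{q+1}\sum_{j=1}^{n}(q^2+1)^{-\deg(j)}\right),
\]
and, since $\sum_j\deg(j) = 2s$ and $t\mapsto(q^2+1)^{-t}$ is convex, Jensen's inequality gives $\sum_j(q^2+1)^{-\deg(j)} \ge n(q^2+1)^{-2s/n} = \exp\!\bigl(\log n - \tfrac{2s}{n}\log(q^2+1)\bigr)$, which is exactly the claimed bound. For a randomized architecture (such as the complete graph), where $\deg(j)$ is itself a random variable, I would establish the displayed inequality conditioned on the circuit diagram and then take the expectation over diagrams, using convexity of $\exp$ and of $t\mapsto(q^2+1)^{-t}$ together with $\sum_j\EV[\deg(j)]=2s$ to move both layers of expectation inside.
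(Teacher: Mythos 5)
Your proposal is correct and rests on exactly the same core estimate as the paper: the per-site survival bound $\Pr[\gamma_j^{(t)}=S\ \forall t]\ge \frac{1}{q+1}(q^2+1)^{-s_j}$ (your $\deg(j)$ is the paper's $s_j$), followed by convexity of $t\mapsto (q^2+1)^{-t}$ over the degree sequence with $\sum_j s_j = 2s$. The one genuine difference is the aggregation step. The paper never factorizes $q^{|\vec\gamma^{(s)}|}$ over sites: it simply writes $\EV[|\vec\gamma^{(s)}|]=\sum_j \Pr[X_j=1]$ by linearity and then applies a single global Jensen inequality $\EV[q^{|\vec\gamma^{(s)}|}]\ge q^{\EV[|\vec\gamma^{(s)}|]}$, which makes your entire FKG/monotone-coupling apparatus unnecessary. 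Your route is nonetheless valid --- the coupling you describe is indeed monotone for the coordinatewise order, $\Lambda_b$ is a product measure with marginal $1/(q+1)$ on $S$, and Harris applies to the nondecreasing functions $q^{X_j}$ --- and it produces a strictly stronger intermediate bound, since $\prod_j\bigl(1+(q-1)p_j\bigr)\ge q^{\sum_j p_j}$. But your final elementary inequality $\log\bigl(1+\tfrac{q-1}{q+1}u\bigr)\ge \tfrac{\log q}{q+1}u$ (which does hold, as $\bigl(\tfrac{2q}{q+1}\bigr)^{q+1}\ge q$ for all $q\ge 2$) discards that gain, so both arguments land on the identical final bound. In short: correct, but you can delete step (i) entirely and replace it with one application of Jensen; your closing remark about randomized architectures is likewise superfluous, since the per-diagram bound depends only on $s$ and $n$ and so survives averaging over diagrams trivially.
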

\begin{corollary}
    For a given RQC architecture, let $s_{AC}$ be the minimum circuit size, as a function of $n$, such that $Z \leq 2Z_H$. Then, it must hold that
    \begin{equation}
        s_{AC} \geq \left(2\log\left(q^2+1\right)\right)^{-1}n\log(n) -  O(n)\,.
    \end{equation} 
\end{corollary}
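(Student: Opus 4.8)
The plan is to work from the biased random walk representation $Z = q^{-n}\,\EV_{P_b,\Lambda_b}\!\big[q^{\lvert\vec{\gamma}^{(s)}\rvert}\big]$ of Eq.~\eqref{eq:ZbiasedEV}, and to show that $\EV_{P_b,\Lambda_b}[q^{\lvert\vec{\gamma}^{(s)}\rvert}] \ge \exp\!\big(\tfrac{\log q}{q+1}\, n\,(q^2+1)^{-2s/n}\big)$; since $q^{-n} \ge (q^n+1)^{-1} = Z_H/2$, this gives the theorem directly, with a little room to spare. It suffices to prove the bound for each fixed circuit diagram and then average over the architecture's randomness, because $\sum_{j=1}^n d_j = 2s$ always holds, where $d_j$ denotes the number of two-qudit gates that touch qudit $j$. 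The starting point is the multilinear identity $q^{\lvert\vec{\gamma}^{(s)}\rvert} = \prod_{j=1}^n\big(1 + (q-1)\,\mathbf{1}[\gamma^{(s)}_j = S]\big) = \sum_{T\subseteq[n]} (q-1)^{\lvert T\rvert}\prod_{j\in T}\mathbf{1}[\gamma^{(s)}_j = S]$, which yields $\EV_{P_b,\Lambda_b}[q^{\lvert\vec{\gamma}^{(s)}\rvert}] = \sum_{T\subseteq[n]}(q-1)^{\lvert T\rvert}\,\Pr_{P_b,\Lambda_b}\!\big[\gamma^{(s)}_j = S \ \forall j\in T\big]$.

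Next I would lower-bound each term $\Pr[\gamma^{(s)}_j = S\ \forall j\in T]$ by restricting to the event that (i) the initial configuration assigns the value $S$ to every qudit in $T$, and (ii) at every gate that has exactly one endpoint in $T$ and whose other endpoint currently holds value $I$, the biased walk flips that $I$ endpoint rather than the $S$ endpoint lying in $T$. Under (i) and (ii), every qudit of $T$ retains the value $S$ throughout the walk, in particular at time $s$. Because the qudits are independent under $\Lambda_b$ with $\Pr_{\Lambda_b}[\gamma^{(0)}_j = S] = 1/(q+1)$, event (i) has probability $(q+1)^{-\lvert T\rvert}$. Conditioned on the history, the probability that a given gate does not cause a $T$-qudit to flip is $\ge 1/(q^2+1)$ --- it is exactly $1/(q^2+1)$ when the other endpoint holds $I$, and equal to $1$ when it holds $S$ (then the pair agrees and nothing flips) --- and the number of gates that can ever trigger such a forced flip is at most $\sum_{j\in T} d_j$ (gates internal to $T$ never force a flip, since both endpoints then hold $S$, and this count is a property of the diagram, not of the walk). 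Hence $\Pr[\gamma^{(s)}_j = S\ \forall j\in T] \ge (q+1)^{-\lvert T\rvert}(q^2+1)^{-\sum_{j\in T} d_j}$, and resumming the multilinear sum gives $\EV_{P_b,\Lambda_b}[q^{\lvert\vec{\gamma}^{(s)}\rvert}] \ge \sum_{T}\prod_{j\in T}\tfrac{q-1}{q+1}(q^2+1)^{-d_j} = \prod_{j=1}^n\big(1 + \tfrac{q-1}{q+1}(q^2+1)^{-d_j}\big)$.

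To conclude I would invoke two elementary inequalities. First, $\log\!\big(1 + \tfrac{q-1}{q+1}u\big) \ge \tfrac{\log q}{q+1}\,u$ for $u\in[0,1]$ and integer $q\ge 2$: the left side is concave in $u$, vanishes at $u=0$, and at $u=1$ equals $\log\tfrac{2q}{q+1}$, which exceeds $\tfrac{\log q}{q+1}$ (equivalently, $(q+1)\log\tfrac{2q}{q+1} - \log q$ has derivative $\log\tfrac{2q}{q+1}\ge 0$ and vanishes at $q=1$); hence it dominates its own chord $u\mapsto u\log\tfrac{2q}{q+1}$, which in turn dominates $u\mapsto\tfrac{\log q}{q+1}u$. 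Applying this with $u = (q^2+1)^{-d_j}\le 1$ gives $\EV_{P_b,\Lambda_b}[q^{\lvert\vec{\gamma}^{(s)}\rvert}] \ge \exp\!\big(\tfrac{\log q}{q+1}\sum_{j} (q^2+1)^{-d_j}\big)$. Second, $u\mapsto (q^2+1)^{-u}$ is convex, so Jensen's inequality together with $\tfrac1n\sum_j d_j = 2s/n$ gives $\sum_{j}(q^2+1)^{-d_j} \ge n\,(q^2+1)^{-2s/n}$. Combining these completes the lower bound on $\EV_{P_b,\Lambda_b}[q^{\lvert\vec{\gamma}^{(s)}\rvert}]$, and hence on $Z$, which is Eq.~\eqref{eq:generalLB}. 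The corollary then follows immediately by demanding $Z \le 2Z_H$ in Eq.~\eqref{eq:generalLB}, taking logarithms twice, and solving for $s$, which gives $s \ge (2\log(q^2+1))^{-1} n\log n - O(n)$.

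I expect the crux to be step (ii) of the second paragraph: making rigorous that, for an arbitrary (possibly adversarial, possibly randomized) architecture and any realization of the walk so far, the conditional probability of the ``good'' choice at each relevant gate is genuinely at least $1/(q^2+1)$, and that the number of gates that can ever trigger a forced flip is controlled by $\sum_{j\in T} d_j$ rather than something larger. The remaining ingredients --- the multilinear expansion and its resummation into a product over qudits, together with the two convexity/concavity estimates --- are routine.
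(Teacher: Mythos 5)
Your argument is correct, and it reproduces exactly the bound of Theorem~\ref{thm:generalLB} (Eq.~\eqref{eq:generalLB}), from which the corollary follows as you say. The core mechanism is the same as the paper's: under the biased walk, each qudit in a chosen set starts at $S$ with probability $1/(q+1)$ (the initial distribution $\Lambda_b$ is a product measure), survives each gate touching it with conditional probability at least $1/(q^2+1)$, and convexity of $u\mapsto(q^2+1)^{-u}$ together with $\sum_j d_j=2s$ equalizes the per-qudit gate counts. Where you diverge is in how $\EV_{P_b,\Lambda_b}[q^{|\vec{\gamma}^{(s)}|}]$ is extracted from these survival probabilities. The paper needs only the single-site marginals $\Pr[\gamma_j^{(s)}=S]$: it lower-bounds $\EV[|\vec{\gamma}^{(s)}|]$ by summing them and then applies Jensen's inequality in the form $\EV[q^X]\geq q^{\EV[X]}$. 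You instead expand $q^{|\vec{\gamma}^{(s)}|}$ multilinearly over subsets $T\subseteq[n]$, which forces you to bound the \emph{joint} survival probability for every $T$ (your step (ii), which does go through: conditioned on all of $T$ holding $S$ so far, only gates with exactly one endpoint in $T$ and an $I$ on the other endpoint threaten the event, each surviving with probability exactly $1/(q^2+1)$, and there are at most $\sum_{j\in T}d_j$ such gates). Resumming gives the product $\prod_j(1+\tfrac{q-1}{q+1}(q^2+1)^{-d_j})$, which your inequality $\log(1+\tfrac{q-1}{q+1}u)\geq\tfrac{\log q}{q+1}u$ then reduces to the paper's expression $q^{\frac{1}{q+1}\sum_j(q^2+1)^{-d_j}}$. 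So your intermediate product bound is marginally sharper than the paper's Jensen step, but the gain is discarded at the last line; the price is the extra work of controlling joint rather than marginal probabilities. Both routes are sound, and the trade-off is a matter of taste: the paper's is shorter, yours makes explicit that the per-qudit survival events can be treated as if independent (in the direction of the inequality that matters).
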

\begin{proof}
    This statement follows directly from the bound in Eq.~\eqref{eq:generalLB}.
\end{proof}
\begin{corollary}
    For a given RQC architecture, let $d_{AC}$ be the minimum circuit depth, as a function of $n$, such that $Z \leq 2 Z_H$. Then, it must hold that
    \begin{equation}
        d_{AC} \geq \left(\log\left(q^2+1\right)\right)^{-1}\log(n) - O(1)\,.
    \end{equation} 
\end{corollary}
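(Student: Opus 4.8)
The plan is to deduce this depth lower bound directly from the size lower bound of \autoref{thm:generalLB}, using only the elementary observation that a shallow circuit cannot pack in very many gates. First I would invoke the definition of circuit depth: a depth-$d$ diagram splits its gates into $d$ layers, and within a single layer the gates act on pairwise-disjoint qudit pairs. Since there are $n$ qudits and each two-qudit gate occupies two of them, each layer holds at most $\lfloor n/2\rfloor \leq n/2$ gates, so the total size obeys $s \leq dn/2$, i.e.\ $2s/n \leq d$. This packing bound is the only structural input; the rest is substitution.

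Next I would feed this into the general bound of \autoref{thm:generalLB},
\begin{equation}
    Z \geq \frac{Z_H}{2} \exp\left(\frac{\log(q)}{q+1}\exp\left(\log(n)-\frac{2s}{n}\log\left(q^2+1\right)\right)\right)\,.
\end{equation}
The right-hand side is a strictly decreasing function of $s$ (increasing $s$ decreases the inner argument $\log(n)-(2s/n)\log(q^2+1)$, and both nested exponentials are increasing). Because $2s/n \leq d$, replacing $2s/n$ by the larger quantity $d$ only weakens the bound, so for \emph{any} circuit of depth $d$ on \emph{any} architecture we obtain the depth-only statement
\begin{equation}\label{eq:depthonlyLB}
    Z \geq \frac{Z_H}{2} \exp\left(\frac{\log(q)}{q+1}\exp\left(\log(n)-d\log\left(q^2+1\right)\right)\right)\,.
\end{equation}

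Finally I would determine the depth threshold below which \eqref{eq:depthonlyLB} forces $Z > 2Z_H$. The requirement that the right-hand side not exceed $2Z_H$ is equivalent to $\frac{\log(q)}{q+1}\exp(\log(n)-d\log(q^2+1)) \leq \log(4)$, which rearranges to $d \geq (\log(q^2+1))^{-1}\log(n) - (\log(q^2+1))^{-1}\log\left((q+1)\log(4)/\log(q)\right)$. The subtracted term is a constant independent of $n$, so this is precisely $d \geq (\log(q^2+1))^{-1}\log(n) - O(1)$. Taking the contrapositive, any circuit whose depth falls below this threshold has $Z > 2Z_H$ and is therefore not anti-concentrated; hence $d_{AC}$ must satisfy the claimed bound.

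I do not expect a genuine obstacle, since the corollary is essentially a one-line consequence of \autoref{thm:generalLB}. The only points demanding care are keeping the monotonicity directions of the two nested exponentials straight so that the substitution $2s/n \to d$ genuinely weakens the inequality, and noting that the odd-$n$ correction $s \leq d\lfloor n/2\rfloor$ is harmlessly absorbed into the $O(1)$ term.
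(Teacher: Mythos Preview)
Your proof is correct and follows essentially the same route as the paper's: both rely on \autoref{thm:generalLB} together with the packing bound $s \leq dn/2$. The paper's version is just more terse, routing through the previous corollary on $s_{AC}$ (writing $d_{AC} \geq 2s_{AC}/n$ and then substituting the $s_{AC}$ bound), whereas you substitute $2s/n \leq d$ directly into the $Z$ lower bound and solve; the underlying logic is identical.
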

\begin{proof}
    Each layer can have at most $n/2$ gates, so it must hold that $d_{AC} \geq 2s_{AC}/n$. 
\end{proof}
\begin{proof}[Proof of \autoref{thm:generalLB}]
    We use the framework of the biased random walk, given by the expression for $Z$ in Eq.~\eqref{eq:ZbiasedEV}. For each of the $n$ sites, there is some initial probability that it starts with value $S$, and then each gate involving that site has some chance of flipping it to value $I$. However, there will always be some minimum probability that even after many gates, the value has not yet been flipped to $I$. This constitutes the idea behind our lower bound.
    
    Given an index $j \in [n]$, we compute a lower bound on the probability that $\gamma_j^{(t)}=S$ for all $t = 0,1,\ldots, s$, (i.e.~the $j$th bit begins with value $S$ and is never flipped to $I$), as a function of the number of gates $s_j$ that act on qudit $j$
    \begin{equation}
        \Pr_{P_b,\Lambda_b}[\gamma_j^{(t)}=S\;\; \forall t \in \{0,\ldots,s\}] \geq \frac{1}{q+1}\left(\frac{1}{q^2+1}\right)^{s_j}\,,
    \end{equation}
    since there is a $1/(q+1)$ chance that $\gamma_j^{(0)}=S$ when we draw $\vec{\gamma}^{(0)}$ from $\Lambda_b$, and the probability it does not flip after each gate is at least $1/(q^2+1)$. This holds for each $j$, and thus we have
    \begin{align}
        \EV_{P_b,\Lambda_b}[\lvert \vec{\gamma}^{(s)}\rvert] &= \sum_{j=1}^n \Pr_{P_b,\Lambda_b}[\gamma^{(s)}_j = S] \\
        &\geq \sum_{j=1}^n \Pr_{P_b,\Lambda_b}[\gamma_j^{(t)}=S \;\; \forall t \in \{0,\ldots,s\}] \\
        &\geq \frac{1}{q+1}\sum_{j=1}^n\left(\frac{1}{q^2+1}\right)^{s_j}\,.
    \end{align}
    Since each of the $s$ gates in the circuit diagram acts on two indices, it must hold that $\sum_j s_j = 2s$, and given this constraint, the minimum of the final expression above occurs when all the  $s_j$ are equal, and thus
    \begin{equation}
        \EV_{P_b,\Lambda_b}[\lvert \vec{\gamma}^{(s)}\rvert] \geq \frac{1}{q+1}n\left(\frac{1}{q^2+1}\right)^{2s/n}\,.
    \end{equation}
    By convexity of the exponential function, we have $\EV[q^x] \geq q^{\EV[x]}$, and hence
    \begin{align}
        Z &= \frac{1}{q^n}\EV_{P_b,\Lambda_b}[q^{\lvert \vec{\gamma}^{(s)}\rvert}] \\
        &\geq \frac{1}{q^n}\exp\left(\log(q)\frac{n}{q+1}\left(\frac{1}{q^2+1}\right)^{2s/n}\right) \\
        &\geq \frac{Z_H}{2} \exp\left(\frac{\log(q)}{q+1}\exp\left(\log(n)-\frac{2s}{n}\log\left(q^2+1\right)\right)\right)\,.
    \end{align}
\end{proof}

\section{Bounds for the 1D architecture}\label{sec:1DBounds}

We now focus specifically on the 1D architecture defined formally in \autoref{def:1Darchitecture}. We assume periodic boundary conditions, although it would be possible to consider open boundary conditions as well. In 1D, the qudits are arranged in a geometrically local fashion and it is fruitful to think of a configuration $\vec{\gamma} \in \{I,S\}^n$ as being composed of contiguous \textit{domains}, consecutive sites where all the values are $I$ or all the values are $S$. We then identify \textit{domain walls} as locations where one domain ends and another begins. Gates that couple qudits in different domains cause one of the values to flip, which moves the domain wall separating those domains one unit to the left or one unit to the right.  The notation for talking formally about this is discussed in the next subsection, and then the upper and lower bounds on $Z$ are proved. 

\subsection{Domain walls and notation}

In 1D, configurations $\vec{\gamma}\in \{I,S\}^n$ are associated with a set of domain wall locations. We let
\begin{equation}
    DW(\vec{\gamma}) := \{e \in \{0,1,2,\ldots,n-1\}: \gamma_e \neq \gamma_{e+1}\}
\end{equation}
be the set of domain wall positions for a configuration $\vec{\gamma}$, where $\gamma_0$ is identified with $\gamma_n$ when there are periodic boundary conditions. For each set of domain wall locations there are exactly two configurations that map to it, since choosing $\gamma_0 = I$ or $\gamma_0 = S$ determines the value of all other sites. 

A configuration trajectory $\gamma = (\vec{\gamma}^{(0)},\ldots,\vec{\gamma}^{(s)})$ is then associated with a sequence of sets of domain wall locations $G = (g^{(0)},\ldots,g^{(s)})$ where $g^{(t)} = DW(\vec{\gamma}^{(t)})$. We call $G$ a \textit{domain wall trajectory}. Domain wall trajectories with non-zero contribution to the collision probability $Z$ obey the following rules: when there is a domain wall at position $e$ and a gate acts on qudits $\{e,e+1\}$, the domain wall must move to position $e-1$ or $e+1$ (at the cost of a reduction in the weight) and may annihilate with another domain wall if there is already a domain wall at the new position. However, pairs of domain walls cannot be created; the number of domain walls that exist throughout the domain wall trajectory is non-increasing, and a particular domain wall can be uniquely tracked throughout each step of the trajectory (either until the final step or until its annihilation). Let $\mathcal{G}$ be the set of all domain wall trajectories that obey these rules. Any domain wall trajectory $G \in \mathcal{G}$ will have the property that when $t$ is odd, $e$ is even for all $e \in g^{(t)}$, and when $t$ is even (but non-zero), $e$ is odd for all $e \in g^{(t)}$. This is because odd (even) numbered layers couple qubits $\{2j-1,2j\}$ ($\{2j, 2j+1\}$) meaning domain walls must lie between qudit positions $2j$ and $2j+1$ (between qudit positions $2j-1$ and $2j$) for some $j$. 

By converting the sum over trajectories in Eq.~\eqref{eq:partitionfunctiontrajectories} to a sum over domain wall trajectories, we can express $Z$ by the equation
\begin{equation}\label{eq:collprobDWtraj}
    Z = \frac{2}{(q+1)^n}\sum_{G \in \mathcal{G}} \text{weight}(G)\,,
\end{equation}
where the weight is given as follows, recalling that $A^{(t)}$ is the pair of qudit indices involved in the $t$th gate, which in 1D is always $A^{(t)} = \{j,j+1\}$ for some $j$.
\begin{align}
    \text{weight}(G) &:= \prod_{t=1}^s M^{(t)}_{g^{(t-1)}g^{(t)}}\\
    M^{(t)}_{g^{(t-1)}g^{(t)}}&:= \begin{cases}
    \frac{q}{q^2+1} & \text{if } \min(A^{(t)}) \in g^{(t-1)} \\
    1 & \text{otherwise}
    \end{cases}
\end{align}
In other words, if the gate on qudits $\{j,j+1\}$ and there is a domain wall at position $j$, then the weight is reduced by a factor $q/(q^2+1)$ (and the domain wall must move to position $j-1$ or position $j+1$, possibly annihilating if a domain wall already exists at that position). 

Given two domain wall trajectories $G$ and $G'$, we will consider the combined domain wall trajectory
\begin{equation}
    G \sqcup G' := (g^{(0)} \sqcup {g'}^{(0)}, \ldots, g^{(s)} \sqcup {g'}^{(s)})\,,
\end{equation}
where $\sqcup$ is the disjoint union and is defined only under the assumption $g^{(t)} \cap {g'}^{(t)} = \emptyset$ for all $t$. 

The upshot of thinking about trajectories this way is that if $H = G \sqcup G'$ then
\begin{equation}
    \text{weight}(H) = \text{weight}(G)\; \text{weight}(G')\,.
\end{equation}

In particular, we will find it useful to decompose a domain wall trajectory $G$ into $G = G_U \sqcup G_0$ where $G_U$ is a domain wall trajectory with a conserved number of domain walls throughout the trajectory, and $G_0$ is a trajectory for which $|G_0^{(s)}|=0$, i.e.~all the domain walls have annihilated by the end of the trajectory. This decomposition is unique, and an example is shown in \autoref{fig:1Dpaths}. Let $\mathcal{G}_U$ and $\mathcal{G}_0$ be the subsets of $\mathcal{G}$ that have no annihilations and that have no surviving domain walls at the end of the circuit, respectively. Let $\mathcal{G}_{U,k}$ be the subset of $\mathcal{G}_U$ with $k$ domain walls. When the boundary conditions are periodic, $k$ must be even for $G_{U,k}$ to be non-empty. 

\begin{figure}
\centering
\includegraphics[width=0.72\linewidth]{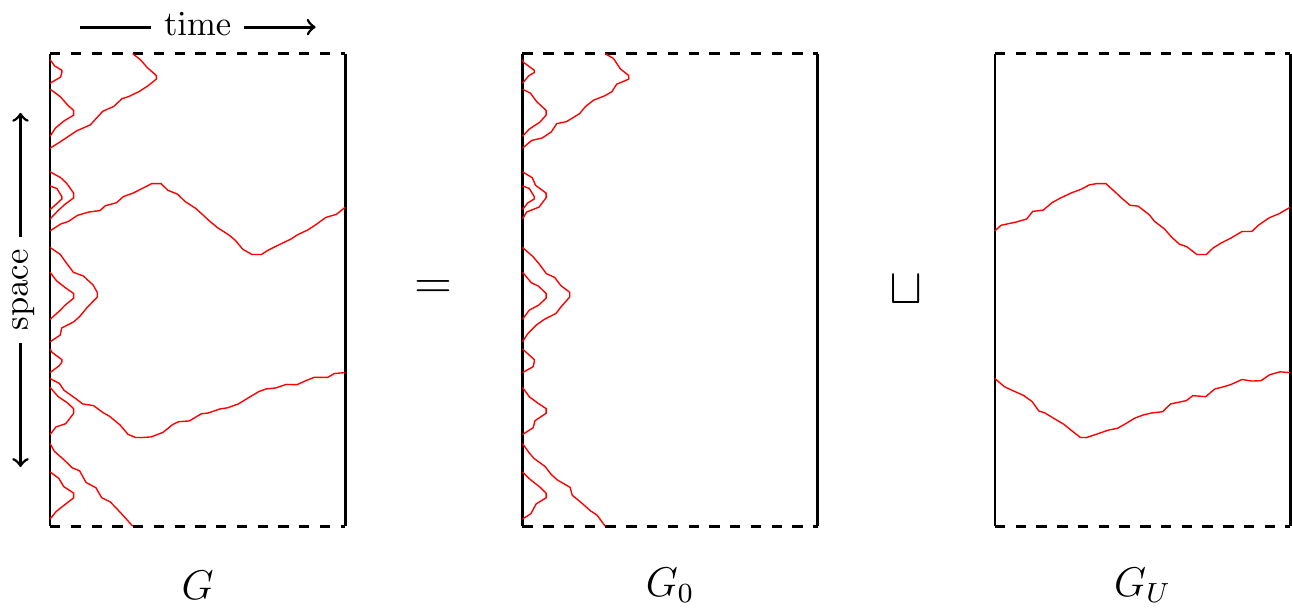}
\caption{Cartoon illustrating unique decomposition of a domain wall trajectory $G$ into a disjoint union of one part, $G_0$, where all domain walls annihilate prior to the end of the circuit, and another part, $G_U$, where no domain walls annihilate.}
\label{fig:1Dpaths}
\end{figure}

\subsection{Collision probability upper bound}

\begin{theorem}[\autoref{thm:1Dupperboundsummary} from main text]\label{thm:1DcollisionprobUB}
For the 1D architecture, let
\begin{align}
    a &:= \log\left(\frac{q^2+1}{2q}\right) \label{eq:1Da}\\
    s^* &:= \frac{1}{2a}n\log(n)+n\left(\frac{1}{2a}\log(e-1)+\frac{1}{2}\right) = (2a)^{-1}n\log(n) + O(n) \label{eq:1DdAC}\,.
\end{align}
Then,
\begin{equation}
    Z \leq Z_H(1+e^{-\frac{2a}{n}(s-s^*)})
\end{equation}
whenever $s \geq s^*$. The circuit depth $d$ is $d=2s/n$, so we may define $d^* = 2s^*/n$ and equivalently conclude
\begin{equation}
    Z \leq Z_H(1+e^{-a(d-d^*)})\,.
\end{equation}
\end{theorem}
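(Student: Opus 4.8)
The plan is to start from the domain-wall representation in Eq.~\eqref{eq:collprobDWtraj}, $Z = \frac{2}{(q+1)^n}\sum_{G\in\mathcal{G}}\text{weight}(G)$, and isolate the trajectories that have already ``relaxed.'' Split $\mathcal{G}=\mathcal{G}_0\sqcup\mathcal{G}_{\geq2}$ according to whether the number of surviving domain walls at time $s$ is $0$ or at least $2$ (with periodic boundary conditions this number is always even). For the first piece, the infinite-size computation of \autoref{sec:sanitycheckZH} says that the normalized total weight of \emph{all} trajectories that ever reach a fixed point equals $Z_H$, and since all weights are nonnegative the finite-length partial sum obeys $\frac{2}{(q+1)^n}\sum_{G\in\mathcal{G}_0}\text{weight}(G)\leq Z_H$. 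For the second piece, use the unique decomposition $G=G_U\sqcup G_0$ of \autoref{fig:1Dpaths}, under which $\text{weight}(G)=\text{weight}(G_U)\,\text{weight}(G_0)$; dropping the requirement that $G_U$ and $G_0$ occupy disjoint positions at each time step only enlarges the sum, giving
\[
\sum_{G\in\mathcal{G}_{\geq2}}\text{weight}(G)\;\leq\;\Bigl(\,\sum_{k\geq2\text{ even}}\;\sum_{G_U\in\mathcal{G}_{U,k}}\text{weight}(G_U)\Bigr)\Bigl(\sum_{G_0\in\mathcal{G}_0}\text{weight}(G_0)\Bigr).
\]
Combining the two pieces and reusing $\frac{2}{(q+1)^n}\sum_{\mathcal{G}_0}\text{weight}\leq Z_H$ yields
\[
Z\;\leq\;Z_H\Bigl(1+\sum_{k\geq2\text{ even}}\;\sum_{G_U\in\mathcal{G}_{U,k}}\text{weight}(G_U)\Bigr),
\]
so everything reduces to bounding the trailing sum by $e^{-a(d-d^*)}$ when $d=2s/n\geq d^*:=2s^*/n$.

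The next step is to bound $\sum_{G_U\in\mathcal{G}_{U,k}}\text{weight}(G_U)$. The key structural fact is that in $\mathcal{G}_U$ no domain wall ever annihilates, while in the 1D brick-wall architecture each layer of gates straddles an entire parity class of edge-positions; hence a surviving domain wall is forced to hop $\pm1$ in (essentially) every one of the $d$ layers, so it contributes a factor at most $(q/(q^2+1))^{d-1}$ and $\text{weight}(G_U)\leq (q/(q^2+1))^{k(d-1)}$ for $G_U\in\mathcal{G}_{U,k}$. Each surviving domain wall is then a nearest-neighbor walk on the ring: at most $n$ starting positions and at most $2^d$ choices of $\pm1$ steps. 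Relaxing the mutual non-crossing and distinctness of the $k$ walks (again only enlarging the sum) factorizes the count, so with $e^{-a}=\frac{2q}{q^2+1}$ from Eq.~\eqref{eq:1Da},
\[
\sum_{k\geq2\text{ even}}\;\sum_{G_U\in\mathcal{G}_{U,k}}\text{weight}(G_U)\;\leq\;\sum_{k\geq2\text{ even}}\frac{\lambda^k}{k!}\;=\;\cosh(\lambda)-1,\qquad \lambda=O\!\left(n\,e^{-ad}\right).
\]
Under the hypothesis $s\geq s^*$, i.e.\ $d\geq d^*=a^{-1}\log n+O(1)$ with the $O(1)$ constant (equivalently the $\frac{1}{2a}\log(e-1)+\frac12$ appearing in Eq.~\eqref{eq:1DdAC}) chosen large enough, we get $\lambda=O(ne^{-ad})\leq O(ne^{-ad^*})$, which is at most $1$ and more precisely $\lambda\leq C\,e^{-a(d-d^*)}$ for a constant $C$ that the choice of $d^*$ makes small; then $\cosh(\lambda)-1\leq\frac{\lambda^2}{2}e^{\lambda^2/2}\leq e^{-a(d-d^*)}$ for all $d\geq d^*$. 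This gives $Z\leq Z_H(1+e^{-a(d-d^*)})=Z_H(1+e^{-\frac{2a}{n}(s-s^*)})$, and the depth reformulation is immediate from $d=2s/n$.

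The main obstacle is the counting step: one must track carefully how many layers a conserved domain wall actually hops in (the $d$ versus $d-1$ subtlety, and the cyclic parity structure) and how much the mutual non-crossing constraint is being relaxed, so that the resulting series is not only convergent but has a prefactor that can be absorbed into the $O(n)$ correction of $s^*$ — this is precisely what pins down the additive constant $\frac{1}{2a}\log(e-1)+\frac12$. A secondary point requiring care is justifying $\frac{2}{(q+1)^n}\sum_{\mathcal{G}_0}\text{weight}\leq Z_H$ as a monotone partial sum of the convergent infinite-circuit computation in \autoref{sec:sanitycheckZH}.
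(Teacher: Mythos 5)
Your proposal is correct and follows essentially the same route as the paper's proof: the same domain-wall representation, the same unique decomposition $G = G_U \sqcup G_0$ with the $\mathcal{G}_0$ contribution bounded by $Z_H$ via the infinite-size computation, and the same per-domain-wall weight bound $(2q/(q^2+1))^{d-1}$ from forced hops in every layer but possibly the first. The only cosmetic difference is the final summation — you organize the count as $\sum_{k}\lambda^k/k! = \cosh(\lambda)-1$ while the paper uses $\sum_{k_0}\binom{n}{2k_0}e^{-2ak_0(d-1)}\leq(1+e^{-a(d-1)})^n$ together with the elementary bound $(1+c)^b\leq 1+cb(e-1)$, which is where the specific constant $\tfrac{1}{2a}\log(e-1)+\tfrac12$ in $s^*$ comes from.
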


Note that when $s < s^*$, an upper bound on  $Z$ can still be inferred from this method.  The essence of the proof of \autoref{thm:1DcollisionprobUB} is the same as the proof of the statement proved in \cite{Barak2020Spoofing}, although we have expressed it here within our notation and framework.

\begin{proof}
We use the formula in Eq.~\eqref{eq:collprobDWtraj}, which expresses $Z$ as a weighted sum over domain wall trajectories. Each domain wall trajectory $G = (g^{(0)},\ldots,g^{(s)})$ can be associated with an integer $k = |g^{(s)}|$, the number of domain walls that remain unannihilated at the end of the trajectory. Due to periodic boundary conditions, $k$ must be even, and let $k_0 = k/2$. Let $\mathcal{G}_k \subset \mathcal{G}$ be the associated set of length-$s$ domain wall trajectories, and let $\mathcal{G}_{U,k} \subset \mathcal{G}_k$ be the subset containing domain wall trajectories that have a conserved number of domain walls throughout. As discussed in the previous subsection, it is possible to uniquely decompose $H \in \mathcal{G}_k$ into $H = G \sqcup G'$ where $G \in \mathcal{G}_{U,k}$ and $G' \in \mathcal{G}_0$. 

Suppose we fix a domain wall configuration ${g}^{(0)}$ for the initial time step at the beginning of the circuit with $k$ domain walls. There are $\binom{n}{k}$ such configurations. The total weight of all the trajectories in $\mathcal{G}_{U,k}$ that begin at this configuration is at most $(2q/(q^2+1))^{k(d-1)}$ since each domain wall must move either left or right (introducing a factor of 2) during each of the  $d$ layers of gates, except for possibly the first layer (if the domain wall begins at an even position it does not move during the first layer), and each time one moves it incurs a weight reduction of $q/(q^2+1)$. This does not account for the rule that the $k$ domain walls cannot intersect, but it still yields an upper bound on the total weight. 

Meanwhile, the sum of the weights of all domain wall trajectories in $\mathcal{G}_0$ approaches $Z_H(q+1)^n/2$ from below as depth increases. This follows from the analysis in \autoref{sec:sanitycheckZH} where it was shown that the sum over all trajectories that eventually reach a fixed point is exactly $Z_H(q+1)^n$, but at a finite depth not every trajectory will have reached a fixed point so only a subset of the terms are included in the sum. Due to the fact that each domain wall configuration corresponds to 2 equal-weight trajectories through $\{I,S\}^n$ the sum of the weights of all the domain wall trajectories in $\mathcal{G}_{0}$ can be at most $Z_H(q+1)^n/2$.

Collecting these observations, and recalling $k = 2k_0$, we have
\begin{align}
    Z &= \frac{2}{(q+1)^n}\sum_{k_0=0}^{n/4}\sum_{G \in \mathcal{G}_{2k_0}} \text{weight}(G)\\
    &= \frac{2}{(q+1)^n}\sum_{k_0=0}^{n/4}\sum_{G \in \mathcal{G}_{U,2k_0}}\sum_{\substack{G' \in \mathcal{G}_0\\ G\cap G' = \emptyset}} \text{weight}(G) \cdot \text{weight}(G') \\
    &\leq \left(\sum_{k_0=0}^{n/4}\sum_{G \in \mathcal{G}_{U,2k_0}}\text{weight}(G)\right)\left(\frac{2}{(q+1)^n}\sum_{G' \in \mathcal{G}_0} \text{weight}(G')\right) \\
    &\leq \left( \sum_{k_0=0}^{n/4}\binom{n}{2k_0} \left(\frac{2q}{q^2+1}\right)^{2k_0(d-1)} \right) \left( Z_H\right)  \\
    &= Z_H \sum_{k_0=0}^{n/4}\binom{n}{2k_0} {(e^{-a})}^{2k_0(d-1)} \\
    &\leq Z_H(1+e^{-a(d-1)})^{n}\\
    &\leq Z_H(1+(e-1) ne^{-a(d-1)})\label{eq:1Dupperboundintermediate} \\
    &= Z_H\left(1+\exp\left(\log(n)-da+ \log\left(e-1\right)+a\right)\right) \\
    %
    %
    &\leq Z_H\left(1+\exp\left(-a(d-d^*)\right)\right)\,,
\end{align}
where Eq.~\eqref{eq:1Dupperboundintermediate} holds so long as $d \geq d^*$, based on the following small lemma.
\begin{lemma}
    If $b,c > 0$ and $cb \leq 1$ then 
    \begin{equation}
        (1+c)^b \leq 1+cb(e-1)\,.
    \end{equation}
\end{lemma}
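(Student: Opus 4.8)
The plan is to split the claimed inequality into two elementary one-variable estimates and then chain them. Writing $t := bc$, the hypothesis is exactly $0 < t \le 1$, and the first step is to bound the left-hand side by $e^{t}$ independently of how $t$ is split between $b$ and $c$. This follows from the standard inequality $\log(1+c) \le c$, valid for all $c > 0$: raising $e$ to the power $b$ times both sides gives
\begin{equation}
    (1+c)^b = e^{\,b\log(1+c)} \le e^{\,bc} = e^{t}\,.
\end{equation}
Note this step does not use $cb \le 1$ at all.

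The second step is to show $e^{t} \le 1 + (e-1)t$ for $t \in [0,1]$, which is where the hypothesis $cb \le 1$ is essential. The function $t \mapsto e^{t}$ is convex, so on the interval $[0,1]$ its graph lies below the chord joining the endpoints $(0, e^{0}) = (0,1)$ and $(1, e^{1}) = (1, e)$; that chord is precisely the line $t \mapsto 1 + (e-1)t$. Hence $e^{t} \le 1 + (e-1)t$ for every $t \in [0,1]$. Combining with the first step, and substituting back $t = bc$, yields
\begin{equation}
    (1+c)^b \le e^{\,bc} \le 1 + (e-1)bc\,,
\end{equation}
which is the statement of the lemma.

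I do not expect a genuine obstacle here; the only point worth flagging is the role of the hypothesis. The bound $(1+c)^b \le e^{bc}$ holds unconditionally, but the linearization $e^{bc} \le 1 + (e-1)bc$ can only be the chord bound on $[0,1]$ and indeed fails once $bc > 1$ (the exponential overtakes the line beyond $t=1$), so the condition $cb \le 1$ cannot be dropped. If one preferred to avoid invoking convexity by name, the same inequality $e^{t} \le 1 + (e-1)t$ on $[0,1]$ can be obtained by checking that $\phi(t) := 1 + (e-1)t - e^{t}$ satisfies $\phi(0) = \phi(1) = 0$ and $\phi'' (t) = -e^{t} < 0$, so $\phi$ is concave and therefore nonnegative throughout $[0,1]$; either route is a one-line verification.
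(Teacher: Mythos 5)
Your proof is correct, and it takes a genuinely different route from the paper's. The paper expands $(1+c)^b = \sum_{k=0}^b \binom{b}{k}c^k$ (implicitly treating $b$ as a positive integer, which suffices for its application where $b=n$), uses $c \leq 1/b$ to bound $c^{k-1}/b \leq b^{-k}$, and then recognizes the resulting sum as $(1+b^{-1})^b - 1 \leq e-1$. You instead factor the inequality through the single variable $t = bc$: first $(1+c)^b \leq e^{bc}$ from $\log(1+c)\leq c$ (valid unconditionally), then the chord bound $e^t \leq 1+(e-1)t$ on $[0,1]$ from convexity, which is exactly where the hypothesis $bc\leq 1$ enters. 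Your version is cleaner in that it works for arbitrary real $b,c>0$, isolates precisely which step needs the hypothesis, and makes transparent why the constant $e-1$ is the right one (it is the slope of the chord of $e^t$ over $[0,1]$); the paper's version buys a purely algebraic argument with no appeal to convexity or the exponential function beyond $(1+b^{-1})^b \leq e$. Both are complete, and your observation that the inequality fails for $bc>1$ correctly identifies the hypothesis as sharp in the relevant sense.
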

\begin{proof}
    \begin{align}
        (1+c)^b &= \sum_{k=0}^b \binom{b}{k}c^k = 1+cb\sum_{k=1}^b \binom{b}{k}\frac{c^{k-1}}{b}\\ &\leq 1+cb\sum_{k=1}^b \binom{b}{k}b^{-k} \leq 1+ cb \left((1+b^{-1})^b-1 \right)\\
        &\leq 1+cb(e-1)\,.
    \end{align}
\end{proof}
\vspace*{-6pt}
\end{proof}

\subsection{Collision probability lower bound}

\begin{theorem}[\autoref{thm:1Dlowerboundsummary} from main text]\label{thm:1DcollisionprobLB}
    Consider the 1D architecture. There are constants $A$ and $A'$ such that as long as $s^*-s \geq A'n$, the collision probability satisfies
    \begin{equation}
            Z \geq \frac{Z_H}{2} \exp\left(Ae^{\log(n)-\frac{2a}{n}s} \right)\,.
    \end{equation}
    where $a$ and $s^*$ are the same as in \autoref{thm:1DcollisionprobUB}.
\end{theorem}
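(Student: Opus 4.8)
The plan is to work from the domain-wall representation $Z=\frac{2}{(q+1)^n}\sum_{G\in\mathcal G}\mathrm{weight}(G)$ of Eq.~\eqref{eq:collprobDWtraj}, whose terms are all non-negative, so that keeping only a judiciously chosen sub-family of domain-wall trajectories yields a valid lower bound. The sub-family is organized around the unique decomposition $G=G_U\sqcup G_0$ illustrated in \autoref{fig:1Dpaths}: $G_U$ is a trajectory in which a fixed number $2k_0$ of domain walls are present at every layer (none annihilate), and $G_0\in\mathcal G_0$ is a trajectory in which every domain wall has annihilated by the final layer; the weight factorizes as $\mathrm{weight}(G)=\mathrm{weight}(G_U)\,\mathrm{weight}(G_0)$. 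Since summing over all $k_0$, all $G_U$ with $2k_0$ surviving walls, and all $G_0\in\mathcal G_0$ disjoint from $G_U$ recovers the full sum, any sub-sum of these pieces lower-bounds $Z$.

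First I would control the surviving-wall piece. Place the $2k_0$ walls of $G_U$ at even positions that are pairwise separated by more than $2d$; this is possible precisely because the hypothesis $s^*-s\ge A'n$ (equivalently $d\le a^{-1}\log n-\Theta(1)$) will only ever require $k_0=O(n/d)$ of them. With this spacing no two surviving walls can meet during the $d$ layers, so every choice of a left/right move for each wall at each layer gives a valid trajectory: each wall has at least $2^{\,d-1}$ such move-sequences, each of weight $(q/(q^2+1))^{d-1}$. Hence a fixed admissible placement contributes at least $\big(e^{-a(d-1)}\big)^{2k_0}$ with $a=\log\!\big((q^2+1)/(2q)\big)$ as in Eq.~\eqref{eq:1Da}, and the number of admissible placements is at least $c_3^{\,2k_0}\binom{n}{2k_0}$ for a constant $c_3=c_3(q)\in(0,1)$.

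Next comes the background piece. The central claim is that for any admissible $G_U$ with $2k_0$ well-separated surviving walls one has $\sum_{G_0\in\mathcal G_0,\ G_0\sqcup G_U\text{ valid}}\mathrm{weight}(G_0)\ \ge\ c_2^{\,2k_0}S_d$, where $S_d:=\sum_{G_0\in\mathcal G_0}\mathrm{weight}(G_0)$ and $c_2=c_2(q)\in(0,1)$ — i.e.\ forcing the annihilating walls to avoid the surviving walls costs only a constant factor per surviving wall. Granting this, and writing $y:=c_2 c_3\, e^{-a(d-1)}=c_2 c_3 e^{a}\, e^{-2as/n}$, one collects the factors to get
\begin{equation*}
Z\ \ge\ \frac{2}{(q+1)^n}\,S_d\sum_{k_0=0}^{K}\binom{n}{2k_0}y^{2k_0}\ \ge\ \mathrm{const}\cdot\frac{2}{(q+1)^n}\,S_d\,\exp\!\Big(\tfrac{y}{2}\,n\Big),
\end{equation*}
where $K=\Theta(n/d)$ is the packing bound; for suitably small constants the maximizing $k_0\sim ny/2$ satisfies $k_0\le K$, so truncating at $K$ costs only a further constant. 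The quantity $\tfrac{y}{2}n$ equals $B\,e^{\log n-2as/n}$ with $B=\tfrac12 c_2 c_3 e^a$. It remains to show $\frac{2}{(q+1)^n}S_d$ is at least a constant multiple of $Z_H$ throughout the allowed range: for $d$ larger than a suitable constant this follows because (by \autoref{sec:sanitycheckZH}) $S_d\uparrow\frac{(q+1)^n}{q^n+1}$ with deficit decaying exponentially in $d$, while the very small-$d$ regime is treated separately, where the $k_0\ge 1$ terms already dominate (consistent with $Z^{(0)}=2^n/(q+1)^n$ and the fact that $Z^{(s)}$ is non-increasing in $s$). Since $\frac{2}{(q+1)^n}\cdot\frac{(q+1)^n}{q^n+1}=Z_H$ and $n e^{-2as/n}$ is bounded below under the hypothesis, this yields $Z\ge\frac{Z_H}{2}\exp\!\big(A\,e^{\log n-2as/n}\big)$ for a constant $A$ (possibly smaller than $B$) and a suitable $A'$.

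The step I expect to be the main obstacle is the background claim, i.e.\ lower-bounding $\sum_{G_0\perp G_U}\mathrm{weight}(G_0)$ without incurring an $n$-dependent loss. Reserving a width-$2d$ corridor around each surviving wall and letting $G_0$ live on the complement loses a factor $(1+1/q)^{-\Theta(d)}$ per wall, which is fatal because $d$ grows with $n$. A correct estimate must instead exploit that the $\mathrm{weight}$-weighted annihilating trajectories are overwhelmingly supported on those whose walls vanish within $O(1)$ layers (every bit flip costs $q/(q^2+1)<\tfrac12$), so that a surviving wall, although it wanders over $\Theta(d)$ sites, is avoided by all but a constant fraction of the annihilating weight. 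I would try to make this precise by partitioning the ring at each time step into the arcs cut out by the surviving walls, applying a finite-depth version of the recursion in \autoref{lem:sumovertrajectories} arc by arc, and bounding the contribution of long-lived $G_0$ walls by a union/large-deviation estimate. A secondary nuisance is the uniform-in-$d$ lower bound on $S_d$ and the clean stitching of the small-$d$ and moderate-$d$ regimes.
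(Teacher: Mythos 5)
Your overall architecture is the same as the paper's: restrict the non-negative sum over domain-wall trajectories to those of the form (a set of $k$ surviving walls) $\sqcup$ (annihilating fillers), count the placements of the surviving walls, and optimize over $k \sim n e^{-ad}$. You have also correctly located the hard step. But the proof is not complete, and the way you set up the background estimate contains a step that is false as stated. You require the surviving walls only to be pairwise separated by more than $2d$, and then claim (i) $\sum_{G_0 \perp G_U}\mathrm{weight}(G_0) \geq c_2^{2k_0} S_d$ and (ii) $\frac{2}{(q+1)^n}S_d \geq \mathrm{const}\cdot Z_H$ once $d$ exceeds a suitable constant. Claim (ii) is wrong: $\frac{2}{(q+1)^n}S_d$ is the portion of $Z$ carried by trajectories that have already reached a fixed point by depth $d$, and for constant $d$ this is exponentially small compared with $Z_H$ (at $d=0$ it equals $2/(q+1)^n$ versus $Z_H=2/(q^n+1)$); a wall pair bounding a domain of length $\ell$ cannot annihilate in fewer than $\ell/2$ layers, so the relative deficit of $S_d$ against $S_\infty$ is of order $n e^{-\Theta(ad)}$ and one needs $d=\Omega(\log n)$ before $S_d$ captures a constant fraction of $S_\infty$. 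The two regimes you propose to stitch together ("very small $d$" and "$d$ beyond a constant") therefore leave the entire intermediate range $O(1)\le d\lesssim \log n$ uncovered. Claim (i) is likewise problematic for small $k_0$, where the arcs between surviving walls have length far exceeding $e^{ad}$ and the annihilating weight confined to such an arc falls far short of its asymptotic value.

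The paper's fix is precisely the ingredient your construction is missing: an \emph{upper} bound on the separation of consecutive surviving walls. It requires every gap to be at most $n_H := e^{(d-1)a}/(2(e-1))$ --- half the anti-concentration length scale --- so that each arc of length $X\le n_H$, viewed as a periodic $X$-site subsystem, is already anti-concentrated at depth $d$; combining the trivial bound $Z_X\ge Z_{H,X}$ with the already-proved upper bound shows that the annihilating trajectories inside the arc carry weight at least $\tfrac13\left(\frac{q+1}{q}\right)^X$, and a further constant factor $e^{10}$ per wall accounts for excluding fillers that would collide with the surviving walls (\autoref{lem:sumJHj}). The product over arcs then yields $\left(\frac{q+1}{q}\right)^n c^{-k}$ directly, with no reference to $S_d$, and the single chosen $k\approx n e^{-ad}/(8ce)$ automatically makes the gaps of order $n_H$. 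Your sketched repair (recursion arc by arc plus a large-deviation bound on long-lived fillers) points in the right direction, but it does not identify this density requirement, and without it the lemma you need is not true; as it stands the central estimate of the proof is asserted rather than proved.
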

In our proof, the constant $A$ is explicit but very small, on the order of $e^{-10}$, and $A'\approx -\log(A)$. The value of $A$ could certainly be improved with some attempt at optimization.
\begin{corollary}
    For the 1D architecture, if we define $s_{AC}$ and $d_{AC}$ to be the smallest circuit size and circuit depth for which $Z \leq 2 Z_H$, then 
    \begin{align}
        \left \lvert s_{AC} - \left(2\log\left(\frac{q^2+1}{2q}\right)\right)^{-1}n\log(n)  \right\rvert  &\leq O(n) \\
        \left \lvert d_{AC} - \left(\log\left(\frac{q^2+1}{2q}\right)\right)^{-1}\log(n) \right \rvert &\leq O(1) \,.
    \end{align}
\end{corollary}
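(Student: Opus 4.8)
The plan is to obtain the corollary by sandwiching the threshold $s_{AC}$ between the size at which the upper bound of \autoref{thm:1DcollisionprobUB} first falls below $2Z_H$ and the size at which the lower bound of \autoref{thm:1DcollisionprobLB} first rises above $2Z_H$, and then to check that these two values coincide up to an additive $O(n)$. All of the real work lives in those two theorems, which I would simply invoke; the corollary itself is pure bookkeeping.

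First I would handle the upper bound on $s_{AC}$. By \autoref{thm:1DcollisionprobUB}, for $s \ge s^*$ we have $Z \le Z_H\big(1+e^{-\frac{2a}{n}(s-s^*)}\big)$ with $a=\log\!\big(\tfrac{q^2+1}{2q}\big)$ and $s^*=(2a)^{-1}n\log n+O(n)$. Evaluating at $s=s^*$ gives $Z(s^*)\le 2Z_H$, so $s_{AC}\le s^*=(2a)^{-1}n\log n+O(n)$. (If one wants a clean threshold interpretation, one can also note from Eq.~\eqref{eq:ZunbiasedEV} that $Z$ is non-increasing in $s$, since appending a gate can only create additional bit flips and hence can only shrink the expectation defining $Z$; but this monotonicity is not actually needed for the bound.)

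Next I would handle the lower bound on $s_{AC}$. By \autoref{thm:1DcollisionprobLB}, whenever $s^*-s\ge A'n$ we have $Z\ge \tfrac{Z_H}{2}\exp\!\big(A\,e^{\log n-\frac{2a}{n}s}\big)$, and this exceeds $2Z_H$ exactly when $A\,e^{\log n-\frac{2a}{n}s}>\log 4$, i.e.\ when $s<(2a)^{-1}n\big(\log n-\log(A^{-1}\log 4)\big)=(2a)^{-1}n\log n-O(n)$. The one point that needs a moment's care is to verify that every $s$ in this range also satisfies the hypothesis $s^*-s\ge A'n$: since $s^*=(2a)^{-1}n\log n+O(n)$, one has $s^*-s = \Omega(n)$ throughout the range, and it suffices to enlarge the hidden constant in the ``$-O(n)$'' above so that this linear gap always exceeds $A'n$. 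With that in place, $Z(s)>2Z_H$ for all $s<(2a)^{-1}n\log n-O(n)$, so $s_{AC}\ge (2a)^{-1}n\log n-O(n)$.

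Combining the two bounds yields $\big|s_{AC}-(2a)^{-1}n\log n\big|\le O(n)$, which is the first claim because $(2a)^{-1}=\big(2\log\tfrac{q^2+1}{2q}\big)^{-1}$. For the depth statement I would use that in the 1D architecture the depth is exactly $d=2s/n$, a strictly increasing function of the size, so the smallest depth with $Z\le 2Z_H$ is $d_{AC}=2s_{AC}/n=a^{-1}\log n+O(1)=\big(\log\tfrac{q^2+1}{2q}\big)^{-1}\log n+O(1)$, as claimed. There is no genuine obstacle inside this corollary --- the difficulty is entirely upstream, in establishing \autoref{thm:1DcollisionprobLB} --- so the only ``hard part'' here is reconciling the several $O(n)$ error terms (in $s^*$, in the lower-bound cutoff, and in the validity window $s^*-s\ge A'n$) so that the sandwich closes.
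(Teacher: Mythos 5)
Your proposal is correct and follows essentially the same route as the paper: invoke \autoref{thm:1DcollisionprobUB} at $s=s^*$ for the upper bound, invoke \autoref{thm:1DcollisionprobLB} and solve $A e^{\log n - 2as/n} > \log 4$ for the lower bound while enlarging the linear term to absorb the validity condition $s^*-s\ge A'n$ (the paper writes this as a $\max$ of the two constants), and convert to depth via $d=2s/n$. The monotonicity aside is a harmless (and correct) extra observation that neither you nor the paper actually needs.
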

\begin{proof}
     \autoref{thm:1DcollisionprobUB} implies that 
     \begin{equation}
         s_{AC} \leq s^* = (2a)^{-1}n\log(n) + O(n)\,.
     \end{equation}
     Meanwhile, \autoref{thm:1DcollisionprobLB} implies that if 
     \begin{equation}
         s \leq (2a)^{-1}n\log(n)-\max\bigg((2a)^{-1}\log(\log(4)A^{-1}),A'\bigg)n  = (2a)^{-1}n\log(n) - O(n)
     \end{equation}
     then $Z\geq 2Z_H$. Hence $s_{AC} \geq (2a)^{-1}n\log(n)-O(n)$. Together these implies $|s_{AC}-(2a)^{-1}n\log(n)| = O(n)$. 
\end{proof}

\begin{proof}[Proof of \autoref{thm:1DcollisionprobLB}]
    Eq.~\eqref{eq:collprobDWtraj} expresses $Z$ as a weighted sum over domain wall trajectories. Heuristically, when $d<d^*$ we expect that the output distribution will \textit{not} be anti-concentrated and that domain wall trajectories drawn at random with probability proportional to its weight will usually have many domain walls that never annihilate. To lower bound $Z$, we will sum over the set of configurations with $k$ unannihilated domain walls for a particularly chosen value of $k$. 
    
    For a fixed value of the depth $d$, define
    \begin{equation}
        n_H := \frac{e^{(d-1)a}}{2(e-1)}\,.
    \end{equation}
    We chose $n_H$ to be exactly half the value of $n$ for which a depth-$d$ circuit would be anti-concentrated. Heuristically, we expect on the order of $n/2n_H$ unannihilated domain walls in typical configurations.
    
    Let $k$ be an even integer to be specified later. Let $\mathcal{H}_k \subset \mathcal{G}_{U,k}$ be the set that contains any domain wall trajectory $H = (h^{(0)},\ldots,h^{(s)})$ for which
    \begin{enumerate}[(1)]
        \item $H$ has $k$ domain walls at each time step (none annihilate)
        \item For each of the $k$ domain walls in the initial configuration $h^{(0)}$, the nearest domain wall in both directions is at most $n_H$ positions away.
    \end{enumerate}
    
    Now, temporarily fix some $H \in \mathcal{H}_k$. It has $k$ domain walls which move around throughout the trajectory. We let $e_{H,j,t}$ be the location of the $j$th domain wall at time step $t$ in the trajectory $H$. We then define the set $\mathcal{J}_{H,j} \subset \mathcal{G}_0$, for $j = 1,\ldots,k$ to be the set of domain wall trajectories for which (1) all of the domain walls annihilate before time step $s$ and (2) the position $e_t$ of any domain wall at time step $t$ satisfies
    \begin{equation}
        e_{H,j,t} < e_t < e_{H,j+1,t}\,.
    \end{equation}
    In other words, all of the domain walls fall between the $j$th and $(j+1)$th domain walls of $H$. This ensures that $H$ is disjoint from any $J_j \in \mathcal{J}_{H,j}$. 
    
    Specifying a trajectory $H \in \mathcal{H}_k$ as well as $J_j \in \mathcal{J}_{H,j}$ for each $j = 1,\ldots, k$, determines a unique trajectory $H' = H \sqcup J_{H,1}\sqcup \ldots \sqcup J_{H,k}$. This decomposition is illustrated in \autoref{fig:1Dlowerboundpaths}.
    \begin{figure}[h]
    \centering
    \includegraphics[width=0.88\linewidth]{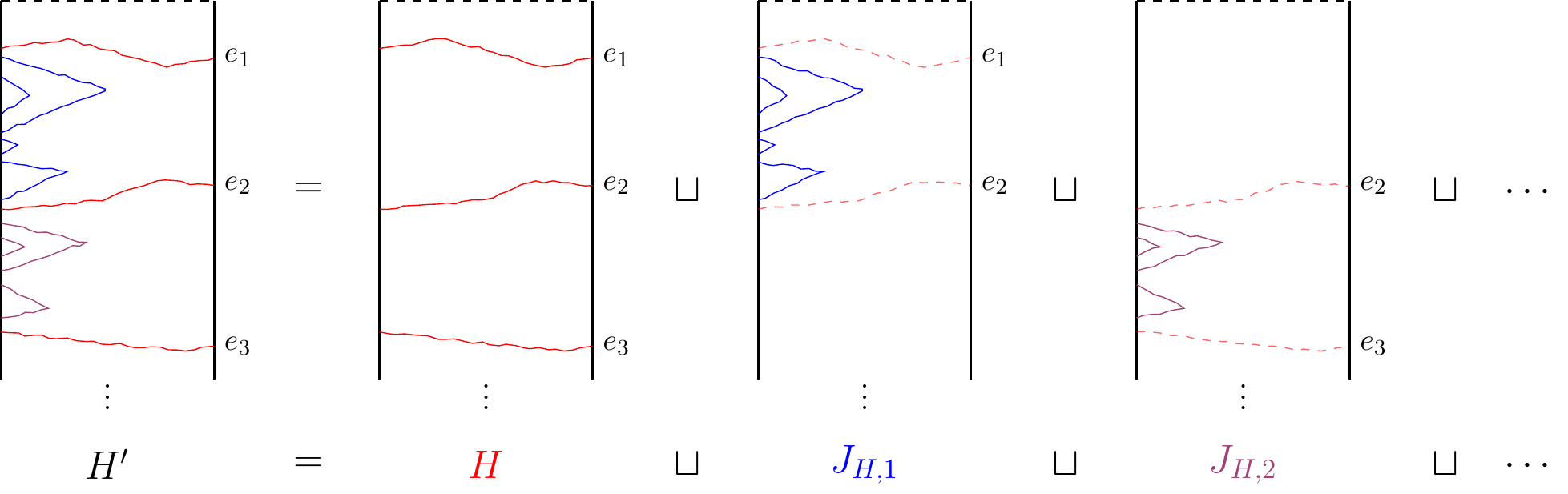}
    \caption{Outline of the main idea of the proof of \autoref{thm:1DcollisionprobLB}. We choose a domain wall trajectory $H$ which has $k$ domain walls that never annihilate and such that the distance between consecutive domain walls is always at most $n_H$. We then choose domain wall trajectories $J_{H,1},\ldots,J_{H,k}$ such that the domain walls of $J_{H,j}$ lie between the $j$th and $(j+1)$th domain walls of $H$ and all annihilate before the end of the circuit. The domain wall configuration $H'$ is the disjoint union of $H$ and $J_{H,j}$ for $j=1,\ldots,k$. We can lower bound the collision probability by lower bounding the weighted sum over the contribution from all $H'$ formed this way. }
        \label{fig:1Dlowerboundpaths}
    \end{figure}
    Thus, if we perform the weighted sum only over the set of $H'$ formed this way, we will arrive at a lower bound to $Z$, as follows:
    \begin{align}
        Z &= \frac{2}{(q+1)^n}\sum_{H \in \mathcal{G}} \text{weight}(H) \\
        &\geq \frac{2}{(q+1)^n}\left(\sum_{H \in \mathcal{H}_k}\text{weight}(H)\right)\left(\sum_{J_1 \in \mathcal{J}_{H,1}}\text{weight}(J_1)\right) \cdots \left(\sum_{J_k \in \mathcal{J}_{H,k}}\text{weight}(J_k)\right) \,.
    \end{align}

The quantities in parentheses can be bounded with the following two lemmas, whose proofs are delayed until after the proof of the Theorem. 
\begin{lemma}\label{lem:sumHk}
    If $4d \leq \lfloor n/k \rfloor$ and $n_H/2 \geq \lceil n/k \rceil$ hold, then the set $\mathcal{H}_k$ satisfies
    \begin{equation}
        \sum_{H \in \mathcal{H}_k}\text{weight}(H)  \geq \left(\frac{1}{2}\left\lfloor\frac{n}{k} \right\rfloor\right)^k\left(\frac{2q}{q^2+1}\right)^{dk}\,.
    \end{equation}
\end{lemma}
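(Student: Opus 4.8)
\emph{Proof proposal.} Since every domain-wall trajectory has nonnegative weight, it suffices to exhibit an explicit sub-collection of $\mathcal{H}_k$ and lower-bound the sum of its weights. The plan is to carve the ring of $n$ positions into $k$ consecutive, gap-free arcs $B_1,\dots,B_k$ of length $\lfloor n/k\rfloor$ or $\lceil n/k\rceil$, place exactly one domain wall in the ``inner part'' of each arc (the positions at distance $\geq d$ from both endpoints of $B_j$), and then let each wall perform an arbitrary sequence of $\pm 1$ moves on the layers in which it is acted on. Because $4d\leq\lfloor n/k\rfloor$, the inner part of each arc contains at least $\lfloor n/k\rfloor - 2d \geq \tfrac12\lfloor n/k\rfloor$ positions \emph{of either parity}, and these choices are made independently across the $k$ arcs. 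Since each arc has length at most $\lceil n/k\rceil$ and $n_H/2\geq\lceil n/k\rceil$, any two consecutive walls start at most $2\lceil n/k\rceil\leq n_H$ apart, so condition (2) defining $\mathcal{H}_k$ holds for every such initial configuration.

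Next one checks that these trajectories really lie in $\mathcal{H}_k$ and computes their weight. Over the $d$ layers of the 1D brickwork, a domain wall is acted on at most once per layer, hence at most $d$ times in total; a wall starting at an odd position is hit in all $d$ layers, one starting at an even position is not hit in the first layer but is hit in all $d-1$ subsequent ones, and in either case the global parity constraint on domain-wall positions (all even after odd time steps, all odd after even time steps) is satisfied automatically, at no cost in the position count. Having been acted on at most $d$ times, the $j$-th wall stays within distance $d$ of its start $p_j$ at every time step, so the margin of $d$ built into the inner parts confines it to $B_j$ throughout; the arcs being disjoint, the $k$ walls never meet, none annihilates, and the number of walls is conserved, so condition (1) holds as well. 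A trajectory in this family is determined by the initial placement together with the independent left/right choice at each of the (at most $d$) hits of each wall, and each hit multiplies the weight by $q/(q^2+1)$. Hence, writing $m_j\in\{d-1,d\}$ for the number of hits of wall $j$,
\begin{equation}
  \sum_{H\in\mathcal{H}_k}\text{weight}(H)\ \geq\ \sum_{(p_1,\dots,p_k)}\ \prod_{j=1}^{k}\left(\frac{2q}{q^2+1}\right)^{m_j}\ \geq\ N_{\mathrm{cfg}}\left(\frac{2q}{q^2+1}\right)^{dk},
\end{equation}
where the first sum is over admissible initial placements, the last step uses $2q/(q^2+1)\leq 1$ (the AM–GM bound $q^2+1\geq 2q$) together with $m_j\leq d$ to replace each factor by the uniform value $(2q/(q^2+1))^d$, and $N_{\mathrm{cfg}}\geq\big(\tfrac12\lfloor n/k\rfloor\big)^{k}$ by the placement count above. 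This is the claimed bound.

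The content is not any single hard estimate but getting the bookkeeping to decouple cleanly across the $k$ walls: recognizing that the global parity constraint is met for free after the first layer (so the position count is $\sim\tfrac12\lfloor n/k\rfloor$ rather than $\sim\tfrac14\lfloor n/k\rfloor$), that a displacement bound of $d$ plus a margin of $d$ confines each wall to its own arc and thereby rules out collisions and annihilations, and that summing the weights over the $2^{m_j}$ left/right choices of a wall gives exactly $(2q/(q^2+1))^{m_j}\geq(2q/(q^2+1))^{d}$, which absorbs the awkward variable hit count into the clean exponent $dk$. The two hypotheses of the lemma enter precisely here: $4d\leq\lfloor n/k\rfloor$ guarantees both the margin room and the $\tfrac12\lfloor n/k\rfloor$ position count, while $n_H/2\geq\lceil n/k\rceil$ guarantees condition (2).
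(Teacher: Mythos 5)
Your proposal is correct and follows essentially the same route as the paper's proof: partition the ring into $k$ segments, place one wall per segment at distance at least $d$ from the segment boundaries (giving at least $\lfloor n/k\rfloor-2d\geq\tfrac12\lfloor n/k\rfloor$ choices per segment and guaranteeing condition (2) via $n_H/2\geq\lceil n/k\rceil$), note that the separation prevents crossings and annihilations, and sum the $2^{m_j}$ left/right choices per wall to get a factor $(2q/(q^2+1))^{m_j}\geq(2q/(q^2+1))^{d}$. Your treatment of the first-layer parity exception and the displacement-plus-margin confinement argument is just a more explicit rendering of what the paper states in passing.
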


\begin{lemma}\label{lem:sumJHj}
    Fix a value of $H$ and $j$. Suppose the $j$th and $(j+1)$th domain walls of the initial configuration of $H$ lie at positions $e$ and $e+X-1$ (mod $n$), respectively, for some positive integer $X < n$. Then
    \begin{equation}
        \left(\sum_{J \in \mathcal{J}_{H,j}}{\rm weight}(J)\right) \geq  \frac{1}{c}\left(\frac{q+1}{q}\right)^X\,,
    \end{equation}
    where $c = 3e^{10}$.
\end{lemma}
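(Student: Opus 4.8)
The plan is to reinterpret $\sum_{J\in\mathcal J_{H,j}}\mathrm{weight}(J)$ as $\big(\tfrac{q+1}{q}\big)^{X'}$ times the probability that an $I$-biased random walk confined to the $j$-th gap of $H$ collapses to that gap's all-$I$ fixed point without ever colliding with a bounding wall of $H$, and then to bound that probability below by an absolute constant. First I would orient the sign convention so $H$ assigns $I$ to every qudit strictly between its $j$-th and $(j{+}1)$-th walls; since the two qudits immediately inside those walls can never be flipped by $J$ (that would drop a $J$-wall onto a bounding $H$-wall edge), the trajectories in $\mathcal J_{H,j}$ are precisely the domain-wall trajectories supported on a fixed pool of $X'=X-3$ interior edges that annihilate completely by step $s$ and keep a wall strictly between the two bounding $H$-walls at every time step — and because $H$-walls and $J$-walls occupy edges of the same parity at any common time, this last constraint automatically maintains a buffer of at least two edges, which is what lets us tolerate the motion of the bounding walls. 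Mirroring the identity $Z=q^{-n}\,\EV_{P_b,\Lambda_b}[q^{|\vec\gamma^{(s)}|}]$ and the infinite-size sanity check (where, on the gap, all-$I$ is the unique fixed point, so the reweighting factor is $1$ on the collapse event), one gets that $\sum_{J\in\mathcal J_{H,j}}\mathrm{weight}(J)$ equals $\big(\tfrac{q+1}{q}\big)^{X'}\,\Pr_{P_b,\Lambda_b}[\text{confined collapse by step }s]$, where $\Lambda_b$ is the product measure making each free site independently $S$ with probability $1/(q+1)$. Since $X-X'=3$, it suffices to lower bound this probability by $\tfrac1{3e^{10}}\big(\tfrac{q}{q+1}\big)^{3}$.

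If $X$ is below an absolute threshold (of order $\log(3e^{10})/\log\tfrac{q+1}{q}$) the empty trajectory $J\equiv\varnothing$ already lies in $\mathcal J_{H,j}$ with weight $1\ge\tfrac1{3e^{10}}\big(\tfrac{q+1}{q}\big)^{X}$, so assume $X$ large; then $X\le n_H=\tfrac{e^{(d-1)a}}{2(e-1)}$ rearranges to $d\ge a^{-1}\log X'-O(1)$, and an elementary inequality valid for all integers $q\ge2$ (roughly $a\le\tfrac{2(q^2-1)}{q^2+1}\log(q+1)$) makes $d$ a sufficiently large multiple of $\log X'$ for a fast collapse. I would then restrict the starting configuration to ``tame'' ones — no maximal $S$-block longer than a threshold $L=\Theta(\log X')$ chosen inside the depth budget, and the outermost $r_0$ free sites on each side equal to $I$ for an absolute constant $r_0$ — and check, via a geometric series and a union bound over the $O(X')$ positions of an offending block (using that $q^{\,d}$ dominates any fixed power of $X$), that tame configurations carry a positive constant fraction of the $\Lambda_b$-mass. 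Started from a tame configuration the biased walk has strictly negative drift $-2(q^2-1)/(q^2+1)$ per layer on every block width, so a block of width $\ell$ collapses within $O(\ell)\le d$ layers except with probability exponentially small in $\ell$; since width-$1$ blocks need only $\Theta(\log X')$ deadline layers while wider blocks get deadlines growing with $\ell$, a union bound over the $\le X'$ blocks (summable because $q^{\,d}$ beats $X'$) shows the gap has collapsed by step $s$ with probability $1-o(1)$. For the no-collision event, note that while the leftmost surviving block collapses its left wall drifts right, and each time a block dies the leftmost surviving $J$-wall jumps further right, so that position dominates a process advancing at mean rate strictly above $1$ per layer, which beats the at-most-$1$-per-layer advance of the $j$-th $H$-wall; starting from the buffer $r_0$, a collision is then a large deviation of probability exponentially small in $r_0$, uniformly over the fixed $H$ (and symmetrically on the right). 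Multiplying the constants obtained and absorbing the slack into $c=3e^{10}$ gives $\sum_{J\in\mathcal J_{H,j}}\mathrm{weight}(J)\ge\tfrac1c\big(\tfrac{q+1}{q}\big)^{X}$.

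The genuinely delicate part is the no-collision estimate — controlling the interplay between the collapsing $J$-walls and the adversarially moving bounding $H$-walls while losing only an absolute constant. One cannot simply retreat the $J$-walls into a fixed sub-interval: shrinking the active region by $\Theta(d)$ sites would cost a factor $\big(\tfrac{q+1}{q}\big)^{\Theta(d)}$ that is polynomial in $n$, which would ruin the tight constant in the theorem. The argument must genuinely exploit the parity buffer together with the fact that a collapsing configuration ``retreats'' from each bounding wall strictly faster than that wall can advance; making this quantitative and uniform over all admissible $X\le n_H$, all $n$, and all fixed $H$ — especially in the transition regime where the collapse probability is only bounded below by a not-too-large constant — is precisely what forces the generous constant $e^{10}$.
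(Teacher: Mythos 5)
Your proposal takes a genuinely different route from the paper, and it contains a genuine gap at exactly the point you flag as delicate: the no-collision estimate. Your argument rests on the claim that the position of the leftmost surviving $J$-wall ``dominates a process advancing at mean rate strictly above $1$ per layer,'' so that outrunning the $j$th $H$-wall is a large-deviation event. But a single left wall of an $S$-block under the biased dynamics moves right with probability $q^2/(q^2+1)$ and left with probability $1/(q^2+1)$, i.e.\ it drifts at rate $(q^2-1)/(q^2+1)<1$ per layer, while the $H$-wall (which is \emph{fixed and adversarial}, since the lemma must hold for every $H\in\mathcal{H}_k$) can advance at rate exactly $1$ for all $d$ layers. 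The ``jumps when blocks die'' only rescue the rate when blocks near the left end are narrow and well separated; a single block of width comparable to a few times $r_0$ sitting just inside the buffer is caught with probability bounded away from zero, and the events you would need to union-bound (wide-enough blocks at each distance $p$ from the wall, plus fluctuation events for narrow ones) are never actually controlled. As written, the step ``collision is exponentially small in $r_0$, uniformly over the fixed $H$'' is asserted, not proved, and the naive version of it is false. Since this is the step that must produce the absolute constant $e^{10}$, the proof does not go through.

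It is worth contrasting this with how the paper closes the same gap, because the resolution is structurally different: rather than lower-bounding the probability that no $J$-wall ever meets a bounding $H$-wall, the paper upper-bounds the \emph{relative weight of the trajectories that do collide}. Every all-annihilating trajectory $J'$ on the auxiliary size-$X$ periodic system factors as a disjoint union $J'=J\sqcup K\sqcup K'$ with $J\in\mathcal{J}_{H,j}$ and with $K,K'$ collecting the wall pairs that touch the $j$th and $(j{+}1)$th $H$-walls; a pair that coincides with an $H$-wall at time $t$ has survived $t$ layers and hence carries weight at most $(2q/(q^2+1))^{2t}$, so the total weight of all possible $K$ is at most $\prod_{t}\bigl(1+C(2q/(q^2+1))^{2t}\bigr)\le e^{5}$, a convergent product independent of $n$, $X$, and $H$. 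This converts the collision constraint into a loss of a fixed multiplicative constant with no drift analysis at all. Likewise, the ``confined collapse'' probability that occupies the first half of your argument (tame configurations, block-by-block collapse, Chernoff bounds) is obtained in the paper essentially for free: it applies the already-proven facts $Z_X\ge Z_{H,X}$ and the upper-bound count of surviving domain walls to the size-$X$ system, using $X\le n_H$ to conclude that at least half of $Z_{H,X}(q+1)^X/2$ comes from fully annihilating trajectories. If you want to salvage your probabilistic approach, the missing ingredient is precisely some analogue of the paper's multiplicative decomposition; the drift comparison alone cannot work because the relevant velocity inequality points the wrong way.
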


The sum of the domain length $X$ for each of the domains is simply $n$. Thus the $((q+1)/q)^X$ factors cancel the $1/(q+1)^n$ prefactor for $Z$, and we have
\begin{align}
    Z \geq q^{-n}\left(\frac{1}{2}\left\lfloor\frac{n}{k}\right\rfloor\right)^kc^{-k}\left(\frac{2q}{q^2+1}\right)^{dk} = q^{-n}\left(\frac{1}{2}\left\lfloor\frac{n}{k}\right\rfloor\right)^kc^{-k}e^{-adk} \label{eq:Zexpk}
\end{align}
for any $k$ that satisfies $4d \leq \lfloor n/k \rfloor$ and $n_H/2 \leq \lceil n/k \rceil$.

Now we choose a value of  $k$ to maximize the right-hand-side of the above equation. In the limit of large $n$, the requirement that $k$ is an even integer will have negligible effect. In our analysis, we handle this requirement by defining $k'$ to be a real number and $k$ to be the smallest even integer larger than $k'$, and then we make a few rather crude bounds on the floor and ceiling of quantities like $ n / k $, which are not asymptotically tight but good enough for our purposes.  We choose
\begin{align}
    k'&:=  \frac{n\left(\frac{2q}{q^2+1}\right)^d}{8ce} = \frac{n e^{-da}}{8ce} =\frac{n}{n_H}\frac{e^{-a}}{16e(e-1)c}\\
    k &:= \text{smallest even integer greater than } k'
\end{align}

Note that $n/k'$ is at least $8ce$, which is very large, meaning $\lceil n / 2k' \rceil/2 \leq n/2k' \leq 2\lfloor n / 2k' \rfloor$ certainly holds.  For finite $n$, we can say that as long as $k'\geq 1$, then $k' \leq k\leq 2k'$ will hold. The requirement $k' \geq 1$ translates into
\begin{align}
    d &\leq a^{-1}(\log(n)-\log(8ce)) \,.
\end{align}
which, by recalling $s = nd/2$ and that $s^* \geq (2a)^{-1}n\log(n)$, can be re-expressed as 
\begin{equation}
    s^*-s \geq A'n
\end{equation}
with $A' := (2a)^{-1}\log(8ce(e-1))+2^{-1}$, which is assumed to hold in the Theorem statement. 
This implies that
\begin{equation}
    \left\lfloor \frac{n}{k} \right\rfloor \geq \left\lfloor \frac{n}{2k'} \right\rfloor \geq \frac{n}{4k'}
\end{equation}

Inspection of the formula for $k'$ reveals that the relation $4d \leq \lfloor n/k \rfloor$ holds for any $d$ and $n$. Moreover, we have $n_H/2 = (ne^{-a})/(k'32e(e-1)c)\leq \lceil n/k \rceil$ so the second relation holds as well. 

Recall that $Z_H = 2/(q^n+1) \leq 2q^{-n}$. Plugging in the above bound on $\lfloor n/k \rfloor$ into Eq.~\eqref{eq:Zexpk}, we find
\begin{align}
    Z &\geq \frac{Z_H}{2}\exp(k) \geq \frac{Z_H}{2}\exp(k') \\
    &\geq \frac{Z_H}{2}\exp\left(\frac{n e^{-da}}{8ce}\right) \\
    &= \frac{Z_H}{2}\exp\left(\frac{1}{8ce}e^{\log(n)-da}\right) \\
    &= \frac{Z_H}{2}\exp\left(\frac{1}{8ce}e^{\log(n)-\frac{2as}{n}}\right)\\
    &= \frac{Z_H}{2}\exp\left(Ae^{\log(n)-\frac{2as}{n}}\right)
\end{align}
for $A := 1/8ce$. Note that this value of $A$ is quite small (on the order of $e^{-10}$) but with some optimization could likely be made much larger. 

\end{proof}

Now we provide the delayed proofs of the two lemmas.

\begin{proof}[Proof of \autoref{lem:sumHk}]
Each term in the sum on the left-hand-side is non-negative, so we make a lower bound by summing over a subset of the terms. To do so, we can split the $n$ indices up into $k$ nearly equal-size segments of length at most $\lceil n/k \rceil$, which is less than $n_H/2$ by assumption. Then for each of these segments, we choose the location of a single domain wall that is at least distance $d$ from each edge of the segment. This will generate a unique initial domain wall configuration that satisfies criteria (2) of $\mathcal{H}_k$, since any pair of consecutive domain walls is closer than $n_H$ apart. The total number of choices is at least
\begin{equation}
    \left(\left\lfloor \frac{n}{k} \right\rfloor -2d\right)^k
\end{equation}
which, by the assumption $4d \leq \lfloor n/k \rfloor$, is at least $(\lfloor n/k \rfloor/2)^k$. 

Once the initial $k$ domain wall locations have been chosen, we examine how they can propagate through the circuit. Each layer of gates will force each of the $k$ domain walls to move in one of two directions, and the weight is reduced by a factor $(q/(q^2+1))^k$, except for the first layer, where some of the domain walls may not move if they begin at an even index. Since by construction, there are no instances where domain walls start within a distance of $2d$ of any other domain wall, there is no chance of domain walls crossing. Thus, we find that for each initial set of $k$ locations chosen in the manner outlined above, the combined weight of all possible trajectories is at least $(2q/(q^2+1))^{kd}$. This proves the lemma. 
\end{proof}

\begin{proof}[Proof of \autoref{lem:sumJHj}]
Consider an alternative 1D qudit system with periodic boundary conditions consisting of $X$ sites by identifying site $e+X$ with site $e$ and ignoring all other sites. Because $H \in \mathcal{H}_k$, we can be assured that $X \leq n_H$. Let $\mathcal{J}'_{H,j}$ be the set of all domain wall trajectories on the size-$X$ system. Let $\mathcal{J}'_{H,j,l}$ be the subset that have $l=2l_0$ domain walls on the last time step.  Because the collision probability, denoted $Z_X$, for this $X$-qudit system must satisfy $Z_X \geq Z_{H,X}$, and here $Z_{H,X} = 2/(q^X+1)$, it must be the case that
\begin{equation}
    Z_X := \frac{2}{(q+1)^X}\left(\sum_{J' \in \mathcal{J}'_{H,j}}\text{weight}(J)\right) = \frac{2}{(q+1)^X}\sum_{l_0=0}^{X/4}\left(\sum_{J' \in \mathcal{J}'_{H,j,2l_0}}\text{weight}(J)\right)\geq \left(\frac{2}{q^X+1}\right) =: Z_{H,X} \label{eq:1DLBint1}\,.
\end{equation}
We can upper bound the contribution of all the terms with $l_0 >0$ in the above expression by the method that yielded the upper bound in \autoref{thm:1DcollisionprobUB}. The sum of those terms is upper bounded by the second term in Eq.~\eqref{eq:1Dupperboundintermediate}, that is
\begin{align}
 \frac{2}{(q+1)^X}\sum_{l_0=1}^{X/4}\left(\sum_{J' \in \mathcal{J}'_{H,j,2l_0}}\text{weight}(J)\right)&\leq Z_{H,X} (e-1)Xe^{-a(d-1)} \\
 &= \left(\frac{2}{q^X+1}\right)\frac{X}{2n_H}  \leq \frac{1}{2}\left(\frac{2}{q^X+1}\right) \label{eq:1DLBint2}\,,
\end{align}
since $X \leq n_H$. Combining Eqs.~\eqref{eq:1DLBint1} and \eqref{eq:1DLBint2}, we find a lower bound on the $l_0 = 0$ term
\begin{equation}
 \left(\sum_{J' \in \mathcal{J}'_{H,j,0}}\text{weight}(J)\right) \geq \left(\frac{(q+1)^X}{2}\right) \left(\frac{2}{q^X+1}\right)\left(1-\frac{1}{2}\right) = \left(\frac{q+1}{q}\right)^X\left(\frac{1}{2}\frac{q^X}{q^X+1}\right) \geq \left(\frac{q+1}{q}\right)^X\left(\frac{1}{3}\right) \label{eq:1DLBint3}\,,
\end{equation}
where the last inequality follows since $q\geq 2$ and $X \geq 1$ must be true.
\begin{figure}[h]
\centering
\includegraphics[width=0.64\linewidth]{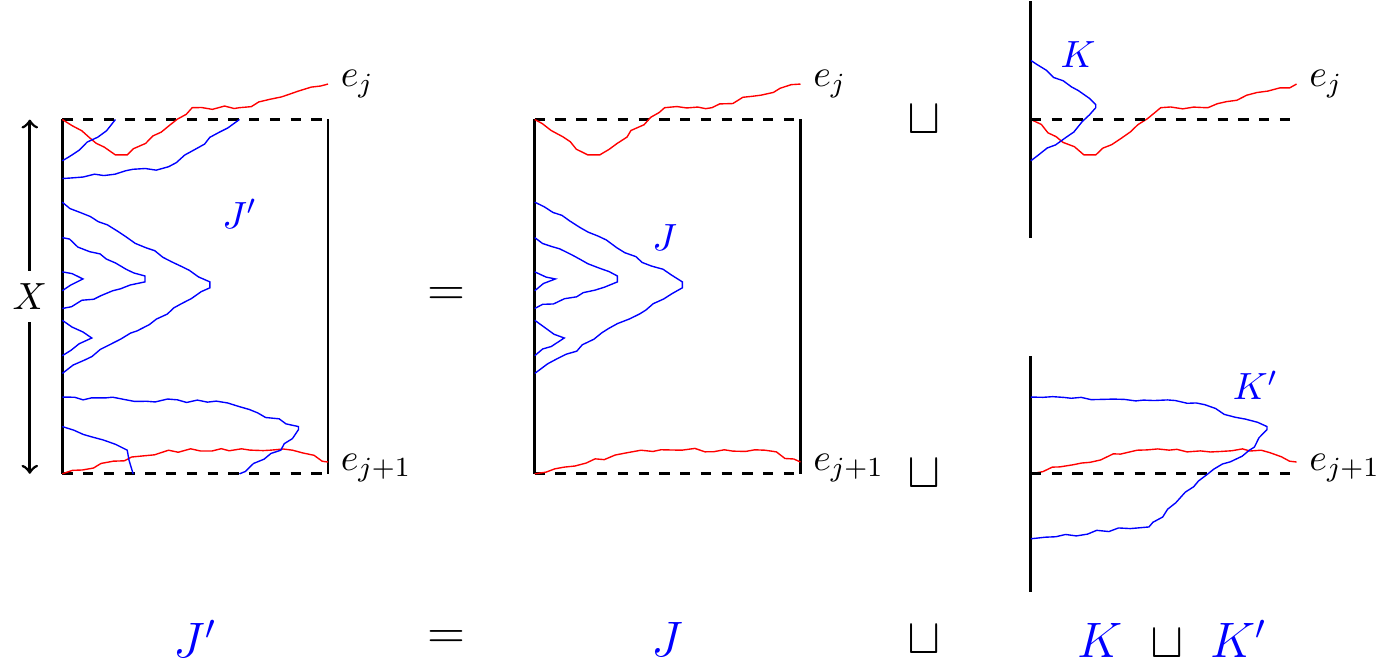}
    \caption{Outline of the argument in the proof of \autoref{lem:sumJHj} that the sum over domain wall trajectories in $\mathcal{J}_{H,j}$ is at least the sum in $\mathcal{J}'_{H,j,0}$ divided by some constant factor, expressed in Eq.~\eqref{eq:sumJJconstantfactor}. Every trajectory in $J' \in \mathcal{J}'_{H,j,0}$ can be  decomposed into a trajectory $J\in \mathcal{J}_{H,j}$ a trajectory $K \in \mathcal{B}_{H,j}$, and a trajectory $K' \in \mathcal{B}_{H,j+1}$, where each domain wall in $K$ intersects the $j$th domain wall of $H$, and each domain wall of $K'$ intersects the $(j+1)$th domain wall of $H$. Because the combined weight of all possible $K$ and $K'$ is only a constant factor, independent of $n$, the combined weight of all possible $J$ cannot be more than a constant factor smaller than the combined weight of all possible $J'$. Note that the system with $X$ sites has periodic boundary conditions in this figure.}
    \label{fig:1Dlowerboundlemmapaths}
\end{figure}
Now, every domain wall trajectory in $\mathcal{J}_{H,j}$ will also be in $\mathcal{J'}_{H,j,0}$, but the converse will not be true. Some trajectories in the latter set will have one or more domain walls that intersect with either the $j$th or the $(j+1)$th domain wall of $H$ at some time step, which is not allowed within the former set. Thus, the sum over the domain wall trajectories in $\mathcal{J}_{H,j}$ will be smaller than the sum over those in $\mathcal{J}'_{H,j,0}$, but we argue by at most some constant factor by the following argument, which is also described in \autoref{fig:1Dlowerboundlemmapaths}. Let $\mathcal{B}_{H,j}$ be the set of all trajectories in which every domain wall either intersects the $j$th domain wall of $H$ at some time step $t$, or it annihilates with a domain wall that previously intersected with the $j$th domain wall of $H$. Then any trajectory in $\mathcal{J'}_{H,j,0}$ can be formed as the disjoint union of a trajectory in $J \in \mathcal{J}_{H,j}$, a trajectory in $K \in \mathcal{B}_{H,j}$ and a trajectory in $K' \in \mathcal{B}_{H,j+1}$, to account for the parts that intersect the $j$th and $(j+1)$th domain walls. Given $J'$, the choice of $J$ for this decomposition is unique, but there may be multiple choices of $(K,K')$ for which it holds. Note also that a trajectory in $\mathcal{B}_{H,j}$ can be decomposed into individual domain wall pairs that coincide with the $j$th domain wall of $H$ at some time step $t$ and annihilate at some time step $t'$. The combined weight of all such pairs, given fixed coincidence point at $e_{H,j,t}$ is at most $(2q/(q^2+1))^{2t'}$. Summing over $t' \geq t$ we find the combined weight for all possible domain wall pairs coinciding at time step $t$ is at most
\begin{equation}
    \frac{\left(\frac{2q}{q^2+1} \right)^{2t}}{1-\left(\frac{2q}{q^2+1}\right)^2} = \frac{(q^2+1)^2}{(q^2-1)^2}\left(\frac{2q}{q^2+1} \right)^{2t} \,.
\end{equation}
There can be many domain wall pairs that intersect the $j$th domain wall of $H$, but for each value of $t$ there will either be no intersection (in which case the factor is 1) or one intersection (in which case the factor is at most the above quantity). Thus we can take the product over including or not including a domain wall at each value of $t$ and find
\begin{equation}
    \sum_{K \in \mathcal{B}_{H,j}}\text{weight}(K) \leq \prod_{t=1}^s\left(1+\frac{(q^2+1)^2}{(q^2-1)^2}\left(\frac{2q}{q^2+1} \right)^{2t}\right)\,.
\end{equation}

This implies
\begin{align}
    \left(\sum_{J' \in \mathcal{J}'_{H,j,0}}\text{weight}(J)\right) &\leq \left(\sum_{J \in \mathcal{J}_{H,j}}\text{weight}(J)\right)\left(\sum_{K \in \mathcal{B}_{H,j}}\text{weight}(K)\right) \left(\sum_{K' \in \mathcal{B}_{H,j+1}}\text{weight}(K')\right)  \\
    &\leq \left(\sum_{J \in \mathcal{J}_{H,j}}\text{weight}(J)\right) \cdot \prod_{t=1}^s \left(1+\frac{(q^2+1)^2}{(q^2-1)^2}\left(\frac{2q}{q^2+1}\right)^{2t}\right)^2 \\
    &\leq \left(\sum_{J \in \mathcal{J}_{H,j}}\text{weight}(J)\right) \cdot \exp\left( 2\sum_{t=1}^s \frac{(q^2+1)^2}{(q^2-1)^2}\left(\frac{2q}{q^2+1}\right)^{2t}\right) \\
    &\leq \left(\sum_{J \in \mathcal{J}_{H,j}}\text{weight}(J)\right) \cdot \exp\left(\frac{8q^2}{(q^2-1)^2}
    \frac{1}{1-\left(\frac{2q}{q^2+1}\right)^2}\right) \\
    &= \left(\sum_{J \in \mathcal{J}_{H,j}}\text{weight}(J)\right) \cdot \exp\left(\frac{8q^2(q^2+1)^2}{(q^2-1)^4}
    \right)\\
    &\leq \left(\sum_{J \in \mathcal{J}_{H,j}}\text{weight}(J)\right) \cdot e^{10}\,, \label{eq:sumJJconstantfactor}
\end{align}
where the last inequality follows since $q \geq 2$ and the function of $q$ inside the $\exp$ is monotonically decreasing. Combining the above with Eq.~\eqref{eq:1DLBint3}, we arrive at
\begin{equation}
    \left(\sum_{J \in \mathcal{J}_{H,j}}\text{weight}(J)\right) \geq \left(\frac{q+1}{q}\right)^X \left(\frac{1}{3e^{10}}\right) = \left(\frac{q+1}{q}\right)^X c^{-1}\,.
\end{equation}
\end{proof}

\section{Bounds for the complete-graph architecture}\label{sec:completegraphbounds}

\subsection{Proof intuition and guide}

In the following sections, we complete the proofs for upper and lower bounds of the complete-graph architecture, defined formally in \autoref{def:CGarchitecture}. The first insight about the complete-graph architecture is that all configurations with the same Hamming weight are equivalent, as there is a symmetry upon permutation of the qudits. Thus, trajectories through configuration space $\{I,S\}^n$ are reduced to trajectories through Hamming weight space $\{0,1,\ldots,n\}$.

Our upper bound will use the framework of the unbiased walk, and the lower bound will use the biased walk. Recall we can use the unbiased walk to express the collision probability $Z$ as a sum over all possible paths that the trajectory might take, working from Eq.~\eqref{eq:ZunbiasedEV}
\begin{align}
    Z &= \frac{1}{(q+1)^n} \sum_{\vec{\gamma}^{(0)}} \EV_{P_u,\vec{\gamma}^{(0)}}\left[ \left(\frac{2q}{q^2+1} \right)^{\left(\text{\# of bit flips during walk}\right)}\right] \\
    &= \frac{1}{(q+1)^n} \sum_{x=0}^n \binom{n}{x} \EV_{P_u,x}\left[ \left(\frac{2q}{q^2+1} \right)^{\left(\text{\# of bit flips during walk}\right)}\right]\,,
\end{align}
where the $\binom{n}{x}$ comes from the fact that this is the number of initial configurations with Hamming weight $x$. For the complete-graph case, $P_u$ takes on a simple form: if the current configuration is $x$, the chance that the configuration changes on the next time step is precisely the chance of finding mismatching values upon drawing a random pair of indices in $[n]$, which is given by $\frac{2x(n-x)}{n(n-1)}$, and if it does change, it is equally likely to become $x-1$ or to become $x+1$. For the biased walk $P_b$, everything is the same except that when the configuration changes, it is biased  to travel to $x-1$ with probability $q^2/(q^2+1)$. Also, in the biased case, the initial configuration is not a uniform choice over all configurations but instead distributed according to $\Lambda_b$, and the expectation in the above equation is replaced with $\EV[q^{|\vec{\gamma}^{(s)}|}]$. Thus larger Hamming weight configurations are exponentially more significant in their contribution to $Z$. 

To gain an intuition for what we expect, we first think about the biased walk, which is what we use for the lower bound. Here, the peak of the probability mass in the initial configuration $\Lambda_b$ starts around Hamming weight $x= n/(q+1)$. On average, the walk lingers for $n(n-1)/2x(n-x)$ time steps before moving, which is approximately equal to $n/2x$ when $x$ is close to 0. Due to the bias, and due to the time required to wait, the effective speed of the biased walk is 
\begin{equation}
    v_{\text{eff}}(x) = \frac{2x(n-x)}{n(n-1)}\left(\frac{q^2}{q^2+1} - \frac{1}{q^2+1}\right) \approx \frac{2x}{n}\frac{q^2-1}{q^2+1}
\end{equation}
in the direction of 0, since each time it moves, it has a $q^2/(q^2+1)$ chance of moving one unit closer to 0, but a $1/(q^2+1)$ chance of moving one unit farther away from 0. Thus, in expectation, the time it takes for the peak of the probability mass to reach value 0 is
\begin{align}
    \sum_{x=1}^{n/(q+1)}\frac{1}{ v_{\text{eff}}(x)} \approx \frac{q^2+1}{q^2-1}\frac{1}{2}n\log(n) \approx: s^*\,,
\end{align}
noting that $\sum_x\frac{1}{x} \approx \log(n)$.

This strongly suggests that $s^*$ time steps are \textit{necessary} for anti-concentration, as any less time would mean the peak of the distribution over Hamming weights at the end of the circuit will be located at some Hamming weight $y > 0$ and as a result, it will receive a significant amount of weight $q^y$ in its contribution to $Z$. This is the intuition for our lower bound.

The biased walk also gives intuition for why there is a matching upper bound. If the circuit size is a little bigger than the lower bound, we expect the peak of the distribution to have terminated at the fixed point at 0. It is still possible that the tail of the distribution, which will not yet have reached the fixed point, is too fat to for anti-concentration to have been achieved; each unit farther away from 0 results in a factor of $q$ larger contribution to $Z$, so we need the tail to be exponentially decaying if we want to be able to ignore it. This is essentially what we are able to show, albeit in a way where it might not be completely clear that this is what we have done. Intuitively, one reason we expect this exponentially decaying tail is because the effective speed slows down as you get closer to zero. This gives the tail of the distribution, which is sitting further away from 0, time to ``catch up,'' as its effective speed is faster.
 
To actually perform the upper bound, we turn back to the unbiased walk. To be clear and to match the progression in the full proof, we introduce the concept of a \textit{reduced} path 
(equivalently, ``reduced walk'') as a walk that never stands still at a certain configuration. If its Hamming weight at time step $t$ is $y$, then its Hamming weight at time step $t+1$ will move to $y-1$ or $y+1$. For each walk, we can form a corresponding reduced walk simply by removing consecutive duplicates from the sequence of configurations. Another way to look at it is that given a fixed reduced walk, the actual walk will linger at each location for a certain number of time steps before continuing. In the limit of large circuit size $s$, there is enough time for the actual walk to linger as long as it would like at each step, and any reduced walk will successfully be ``completed'' by the actual walk. In this limit, $Z = Z_H$ where $Z_H := 2/(q^n+1)$ is the Haar value. Away from this limit, there is some probability that some reduced paths will not be completed. 

We are able to express the difference between $Z$ and $Z_H$ as a sum over all reduced paths, including reduced paths that do not terminate at Hamming weight 0 or Hamming weight $n$, where the summand is proportional to the probability that the reduced path is not completed within $s$ time steps:
\begin{align}
    &(q+1)^n(Z-Z_H) \\
    ={}&  \sum_{\text{red path } \phi} \left(\frac{q}{q^2+1}\right)^{\text{length of } \phi}\frac{(q-1)^2}{2q} \Pr[\phi \text{ not completed in } s \text{ time steps}]\,.
\end{align}
The next key insight is to use a Chernoff bound to bound the probability of a reduced walk not being completed. If $L$ is the length of the walk, the Chernoff bound states (for any constant $a>0$) that
\begin{equation}
    \Pr[L > s] \leq \frac{\EV[e^{aL}]}{e^{as}}\,,
\end{equation}
but this is particularly useful because, for fixed $\phi$, $L$ is itself a sum of independent random variables $L_{\phi^{(i)}}$, the number of time steps the walk waits on step $i$. Thus
\begin{equation}
    \EV[e^{aL}] = \prod_i \EV[e^{aL_{\phi^{(i)}}}]
\end{equation}
and because each random variable $L_{\phi^{(i)}}$ is exponentially distributed, we can calculate $\EV[e^{aL_{\phi^{(i)}}}]$ exactly. For the purposes of the proof sketch, denote
\begin{equation}
    T_x := \EV[e^{aL_x}]\,,
\end{equation}
which will depend only on the Hamming weight $x$ of the configuration the walk is at. (The walk will wait longer when it is near 0 or $n$ than when it is near $n/2$.) This dependence appears to be a problem as it is unclear how to actually perform the following sum over all possible $\phi$. (The constant $a$ for each reduced walk $\phi$, denoted $a_\phi$, will be specified later.)
\begin{align}
    (q+1)^n(Z-Z_H)
    ={}& \frac{(q-1)^2}{2q}\sum_{\text{red path } \phi}e^{-a_{\phi}s}\prod_{i=1}^{\text{length of } \phi} \left(\frac{q}{q^2+1}T_{\phi^{(i)}}\right)\,.
\end{align}

To proceed, we break $\phi$ up into subpaths that inch closer and closer to 0 and $n$. We can write $\phi$ as the concatenation of $\phi_x$, $\phi_{x-1}$, \ldots, $\phi_{w}$, where $\phi_v$ begins at either $v$ or $n-v$ and only reaches $v-1$ or $n-v+1$ for the first time on the very last step. Then, $w$ is the minimum Hamming weight distance from one of the fixed points ($0$ or $n$) the reduced walk $\phi$ ever reaches. Because the walk $\phi_v$ spends all its time between $v$ and $n-v$, the expectation $T_y$ for all of the $y$ within one of these $\phi_v$ walks will be less than or equal to $T_v$ (the walk moves slower when its closer to $0$ or $n$), and we can write
\begin{equation}\begin{split}
    &(q+1)^n(Z-Z_H) \\
    \leq {}&  \frac{(q-1)^2}{2q}\sum_{x=0}^{n} \binom{n}{x}\sum_{w=0}^{\min(x,n-x)} e^{-a_w s}\left[\sum_{\phi_x} \left(\frac{q}{q^2+1}T_x\right)^{\text{length of } \phi_x}\right]\ldots \left[\sum_{\phi_w} \left(\frac{q}{q^2+1}T_w\right)^{\text{length of } \phi_w}\right]\,.
\end{split}\end{equation}
Here the $\binom{n}{x}$ comes in as the number of configurations with Hamming weight $x$, and $a_{\phi}$ has changed to $a_w$ because we will choose it so that it only depends on the end point $w$ of $\phi$. 

The above equation is huge progress because we already know how to perform the sums in brackets. Essentially, the factor of $T_x$ simply changes the \textit{effective} value of $q$; we may define $\bar{q}$ to satisfy
\begin{equation}
    \bar{q}/(\bar{q}^2+1) = qT_x/(q^2+1)\,.
\end{equation}
Then we can use the formulas for sums over paths that we have already developed in \autoref{lem:sumovertrajectories} to perform the sums. What we find is that, for the values of $a_w$ that we can choose, we  must allocate roughly $(q^2+1)n/2(q^2-1)x$ time steps for $s$ such that $e^{-a_ws}$ can cancel out the value of the sum in brackets for $\phi_x$. Note that this is precisely the inverse of the effective speed we defined before. Then, for all the sums to be canceled from $v=1$ to $v=n/2$, we must allocate
\begin{equation}
    \sum_{v=1}^{n/2} \frac{q^2+1}{2(q^2-1)}\frac{n}{x} \approx \frac{q^2+1}{2(q^2-1)}n\log(n)
\end{equation}
time steps. Fundamentally, the $\log(n)$ factor becomes necessary because the walk waits longer and longer as it gets closer and closer to the fixed points. In the full analysis, we find a term linear in $n$ is also necessary to fully anti-concentrate but our analysis of the linear term is not tight.

\subsection{Preliminaries}

\subsubsection{Trajectories}

For the complete-graph architecture, we may keep track of only the Hamming weight of a certain configuration. Thus, our random walks are over the set $\{0,1,\ldots,n\}$. A trajectory $\gamma$ is now a sequence of integers $(\gamma^{(0)},\ldots,\gamma^{(s)})$. Generally speaking, if $t > s$ for a sequence of length $s$, let $\gamma^{(t)}$ return $\gamma^{(s)}$. A sequence is valid if for every $t$, $\lvert \gamma^{(t)}-\gamma^{(t-1)}\rvert \leq 1$ and such that if $0$ or $n$ appears, it appears only once at the very end of the sequence. Let $\Gamma$ be the set of all valid trajectories.

For any valid trajectory $\gamma$ the unbiased random walk associates a non-zero probability:
\begin{equation}
    \Pr_{P_u}[\gamma] := \prod_{t=1}^s P_u[\gamma^{(t)}|\gamma^{(t-1)}]\,,
\end{equation}
where
\begin{equation}
    P_u[y|x] = 
    \begin{cases}
        \frac{x(n-x)}{n(n-1)} & \text{if } \lvert y-x\rvert = 1 \\
        1-\frac{2x(n-x)}{n(n-1)} & \text{if }  y=x \\
        0 & \text{otherwise}
    \end{cases}\,.
\end{equation}
We can make the same definition for the biased random walk by replacing $P_u$ with $P_b$ where
\begin{equation}
    P_b[y|x] = 
    \begin{cases}
        \frac{2q^{1+x-y}}{q^2+1}\frac{x(n-x)}{n(n-1)} & \text{if } \lvert y-x\rvert = 1 \\
        1-\frac{2x(n-x)}{n(n-1)} & \text{if }  y=x \\
        0 & \text{otherwise}
    \end{cases}\,.
\end{equation}
For $P \in \{P_u,P_b\}$ and any subset $\Upsilon \subseteq \Gamma$, we let $\Pr_{P}[\Upsilon] = \sum_{\gamma \in \Upsilon}\Pr_{P}[\gamma]$  be the total probability assigned to paths in $\Upsilon$. 

\subsubsection{Conditional probabilities and expectations}

For any $\gamma \in \Upsilon$ we may also define the conditional probability
\begin{equation}
    \Pr_{P}[\gamma|\Upsilon] := \frac{\Pr_{P}[\gamma]}{\Pr_{P}[\Upsilon]}\,,
\end{equation}
which indicates drawing from the subset $\Upsilon$ with probability proportional to that assigned by the (unbiased or biased) random walk. This also allows us to naturally define conditional expectation values for some quantity $Q$ computed from $\gamma$
\begin{equation}
    \EV_{P}[Q[\gamma]|\gamma \in \Upsilon] := \sum_{\gamma \in \Upsilon}\frac{\Pr_{P}[\gamma]}{\Pr_{P}[\Upsilon]}Q[\gamma]\,.
\end{equation}

\subsubsection{Trajectory concatenation and other operations }

For any trajectory $\gamma = (\gamma^{(0)},\ldots,\gamma^{(s)}) \in \Gamma$, let $L[\gamma]=s$ be the length of the trajectory, and let $\gamma^{(L)}$ be shorthand for $\gamma^{(L[\gamma])}$. The statement $w \in \gamma$ returns true if there exists some $t$ for which $\gamma^{(t)}=w$. Then, we let 
\begin{equation}
    S_w[\gamma] = 
    \begin{cases}
        \min\left(\{t:\gamma^{(t)}=w\}\right) & \text{if } w \in \gamma \\
        -1 & \text{if } w \not\in \gamma
    \end{cases}
\end{equation}
be the first time step along $\gamma$ for which the trajectory reaches $w$. We also let 
\begin{align}
    M[\gamma] &:= \max_{0\leq t \leq L[\gamma]} \gamma_t \\
    m[\gamma] &:= \min_{0\leq t \leq L[\gamma]} \gamma_t
\end{align}
be the maximum and minimum Hamming weight the trajectory passes through. 

We can naturally concatenate two trajectories $\gamma_1$ and $\gamma_2$ if $\gamma_1^{(L)} = \gamma_2^{(0)}$ to form a trajectory $\gamma = \gamma_1 \cdot \gamma_2$ of length $L[\gamma_1] + L[\gamma_2]$. We will say that $\gamma_A \subset\gamma_C$ if there exists some $\gamma_B$ for which $\gamma_A \cdot \gamma_B = \gamma_C$. 

For any trajectory $\gamma$ we let $\tilde{\gamma}$ be the flipped trajectory.
\begin{equation}
    \tilde{\gamma} := (n-\gamma^{(0)}, n-\gamma^{(1)}, \ldots, n-\gamma^{(s)})\,.
\end{equation}
In general, if $v$ is an integer with $0 \leq v \leq n$, then let $\tilde{v}:= \min(v,n-v)$. 

Similarly, let $\bar{\gamma}$ return the reversed trajectory.
\begin{equation}
    \bar{\gamma} := (\gamma^{(s)},\ldots, \gamma^{(0)})\,.
\end{equation}
Moreover, let $\gamma^{[t]}$ return the trajectory $\gamma$ truncated to length $t$, or simply return $\gamma$ if $t\geq L[\gamma]$, i.e.~
\begin{equation}
    \gamma^{[t]} :=
    \begin{cases}
        \gamma & \text{if } t \geq L[\gamma] \\
        (\gamma^{(0)},\ldots,\gamma^{(t)}) & \text{if } t < L[\gamma]
    \end{cases}\,.
\end{equation}
Let $\gamma^{[L-t]}$ be shorthand for $\gamma^{[L[\gamma]-t]}$.  More generally, let $\gamma^{[a,b]} = (\gamma^{(a)},\gamma^{(a+1)},\ldots, \gamma^{(b)})$.

\subsubsection{Important subsets of \ensuremath{\Gamma}}

We now define various subsets of $\Gamma$. Let
\begin{equation}
    \Gamma_x := \{\gamma \in \Gamma: \gamma^{(0)}= x\}
\end{equation}
be the subset of trajectories that begin at $x$, and let 
\begin{equation}
    \Gamma^w := \{\gamma \in \Gamma: S_w[\gamma] = L[\gamma]\}
\end{equation}
be the set of trajectories that reach $w$ for the first time and immediately terminate. We make the natural combination of these
\begin{equation}
    \Gamma_x^w := \{\gamma \in \Gamma: \gamma_0=x, S_w[\gamma] = L[\gamma]\}\,.
\end{equation}
Of particular importance are sets where $w=0$ or $w=n$, which include valid trajectories that terminate at one of the fixed points of the random walk. Define
\begin{equation}
    \Gamma_x^* = \Gamma_x^0 \cup \Gamma_x^n
\end{equation}
and note that $\Pr_{P_u}[\Gamma_x^*] = \Pr_{P_b}[\Gamma_x^*] = 1$, a statement that intuitively makes sense since walks will eventually reach either 0 or $n$ with probability 1.
Adding the superscript $w$ to any set $\Upsilon$ restricts to walks for which $L[\gamma] = S_w[\gamma]$. 

When any walk in $\Upsilon$ can be concatenated with any walk in $\Upsilon'$ we let
\begin{equation}
    \Upsilon \cdot \Upsilon' := \{ \gamma \cdot \gamma': \gamma \in \Upsilon, \gamma' \in \Upsilon'\}\,.
\end{equation}

Additionally, we let 
\begin{align}
    \tilde{\Upsilon} &:= \{\tilde{\gamma}: \gamma \in \Upsilon\} \\
    \bar{\Upsilon} &:= \{\bar{\gamma}: \gamma \in \Upsilon\} \,.
\end{align}

\subsubsection{Reduced trajectories}

We also introduce the concept of a \textit{reduced} trajectory, which we sometimes refer to synonymously as a reduced walk, which is a valid trajectory for which $\lvert \gamma^{(t)}-\gamma^{(t-1)}\rvert = 1$ for all $t$; that is, the reduced walk never stands still. We let the set of reduced walks be $\Psi$, and let all sub and superscripts restrict $\Psi$ in the same way they restricted $\Gamma$. For any $\gamma \in \Gamma$ we can associate a reduced walk $\psi \in \Psi$ by removing consecutive duplicates from $\gamma$. Under this definition, we let $R[\gamma] := \psi$. For any $\psi \in \Psi$ we let
\begin{equation}
    \Gamma_\psi = \{\gamma \in \Gamma: R[\gamma] = \psi, R[\psi^{[L-1]}] \neq \psi \} \,,
\end{equation}
where the second condition acts to include only trajectories $\gamma$ whose final configuration appears only once (i.e.~when the final configuration is removed, the reduced sequence changes).

Under dynamics by either the unbiased or biased walk, it is easy to calculate the probability associated with $\Gamma_\psi$:
\begin{align}
    \Pr_{P_u}[\Gamma_\psi]&:=\left(\frac{1}{2}\right)^{L[\psi]} \\
    \Pr_{P_b}[\Gamma_\psi]&:=q^{\psi^{(0)}-\psi^{(L)}}\left(\frac{q}{q^2+1}\right)^{L[\psi]}\,.
\end{align}

Finally, define the following subsets of $\Psi$:
\begin{align}
    \Lambda_x &= \Psi_{x|n-x+1}^{x-1}\cup \Psi_{x|x-1}^{n-x+1}  \label{eq:Lambdav}\\
    \Xi_w &= \{\psi \in \Psi_w: m(\psi) \geq w, M(\psi) \leq n-w\} \,,
    \label{eq:Xiw}
\end{align}
where the subset $\Psi^c_{a|b}$ is defined as follows.
\begin{align}\label{eq:Gammax|zw}
    \Psi^w_{x|z} &:=
    \begin{cases}
        \{\psi \in \Psi_x^w: M[\psi] < z\} & \text{if } w<x< z \\
        \{\psi \in \Psi_x^w: m[\psi] > z\} & \text{if } z<x<w \\
        \emptyset & \text{otherwise}
    \end{cases}\,.
\end{align}
In words, the set $\Psi_{x|z}^w$ includes reduced walks that begin at $x$ and end at $w$ without ever reaching $z$. Thus $\Lambda_x$ is the set of reduced walks that start at $x$ and end at $x-1$ without ever reaching $n-x+1$ or end at $n-x+1$ without ever reaching $x-1$. The set $\Xi_w$ is the set of reduced walks of any finite length that start at $w$ but never reach either $w-1$ or $n-w+1$.

\subsection{Upper bound proof}
\begin{theorem}[\autoref{thm:completegraphupperboundsummary} from main text]\label{thm:completegraphUB}
    For the complete-graph architecture with circuit size $s$ on $n$ qudits with local Hilbert space dimension $q$
    \begin{equation}
        Z \leq Z_H\left(1+e^{-\frac{2a}{n}(s-s^*)}\right)\,,
    \end{equation}
    as long as $s\geq s^*$, where
    \begin{align}
        s^* &= \frac{q^2+1}{2(q^2-1)}n \log(n) + O(n) \\
        a &= \frac{(q-1)^2}{2(q^2+1)} \,.
    \end{align}
\end{theorem}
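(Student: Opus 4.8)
The plan is to work with the unbiased-walk formula for $Z$ from Eq.~\eqref{eq:ZunbiasedEV}, which for the complete-graph architecture collapses by permutation symmetry to a nearest-neighbor walk on Hamming weights $\{0,1,\dots,n\}$ with per-step move probability $p_x=\tfrac{2x(n-x)}{n(n-1)}$. The first step is to re-express the excess $Z-Z_H$ as a weighted sum over \emph{reduced} walks $\phi\in\Psi$ (trajectories without consecutive repeats): since the $s\to\infty$ limit recovers exactly $Z_H$ (the computation in \autoref{sec:sanitycheckZH}) and at finite $s$ the only discrepancy comes from reduced walks whose lingering realization has not yet been traversed, a resummation of the per-$\phi$ contributions yields an identity of the shape
\begin{equation}
    (q+1)^n(Z-Z_H)\;=\;\frac{(q-1)^2}{2q}\sum_{\phi\in\Psi}\left(\frac{q}{q^2+1}\right)^{L[\phi]}\Pr_{P_u}\!\big[\phi\text{ not traversed within }s\text{ steps}\big]\,,
\end{equation}
where $L[\phi]$ is the number of bit flips along $\phi$ and the prefactor satisfies $\tfrac{q}{q^2+1}\cdot\tfrac{(q-1)^2}{2q}=a$.

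The second step bounds the completion probability. For a fixed $\phi$, the number of time steps used to traverse it, $L=\sum_i L_{\phi^{(i)}}$, is a sum of independent geometric lingering times ($L_x$ at a site of Hamming weight $x$ has success probability $p_x$), so a Chernoff bound gives $\Pr[L>s]\le e^{-a_\phi s}\prod_i T_{\phi^{(i)}}$ with $T_x=\EV[e^{a_\phi L_x}]=\tfrac{p_xe^{a_\phi}}{1-(1-p_x)e^{a_\phi}}$, valid for $a_\phi$ below the smallest $-\log(1-p_x)$ visited along $\phi$. The crucial observation is that the factor $T_x$ only rescales the effective local dimension: defining $\bar q(x)$ by $\bar q/(\bar q^2+1)=qT_x/(q^2+1)$ turns weighted sums over reduced sub-walks into instances of the sums evaluated in \autoref{lem:sumovertrajectories}.

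The third step is the decomposition and factorization. Write $\phi=\phi_x\cdot\phi_{x-1}\cdots\phi_{w+1}$, where each $\phi_v$ starts at Hamming weight $v$ (or, by the flip symmetry $\phi\leftrightarrow\tilde\phi$, at $n-v$) and first reaches $v-1$ (resp.\ $n-v+1$) only on its last step, so that $w$ records the closest $\phi$ ever comes to a fixed point. Each $\phi_v$ stays in the band $[v-1,n-v+1]$, where $p_y\ge p_v$ and hence $T_y\le T_v$; choosing the Chernoff parameter $a_\phi=a_w$ to depend only on $w$ then makes the sum factorize,
\begin{equation}
    (q+1)^n(Z-Z_H)\;\le\;\frac{(q-1)^2}{q}\sum_{x=0}^{\lfloor n/2\rfloor}\binom{n}{x}\sum_{w=0}^{x}e^{-a_w s}\prod_{v=w+1}^{x}\left[\sum_{\phi_v}\left(\tfrac{q}{q^2+1}\,T_v\right)^{L[\phi_v]}\right],
\end{equation}
with $\binom{n}{x}$ counting Hamming-weight-$x$ configurations and the prefactor doubled by flip symmetry. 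Each bracketed sum is evaluated by \autoref{lem:sumovertrajectories} at effective dimension $\bar q(v)$ and band width $m=n-2v+2$, giving a value close to $q^{-1}$ up to a factor governed by $a_w/p_v$. Calibrating $a_w$ so this factor is absorbed, a time budget of $s^*:=\sum_{v=1}^{n/2}\tfrac{q^2+1}{2(q^2-1)}\tfrac{n}{v}=\tfrac{q^2+1}{2(q^2-1)}n\log n+O(n)$ makes each product at most $q^{-(x-w)}e^{-a_w(s-s^*)}$; the $x$-sum then collapses through $\sum_x\binom{n}{x}q^{-x}=(1+1/q)^n$, which is at most $(q+1)^nZ_H$, the $w$-sum converges (bands reaching only down to $w$ need a progressively smaller budget, hence admit a larger exponent $a_w$ as $w$ grows), and the $w=0$ term sets the overall decay rate: there the admissible $a_0$ is of order $1/n$, pinned by the requirement $\bar q(1)>1$ to $\tfrac{2a}{n}$ with $a=\tfrac{(q-1)^2}{2(q^2+1)}$, which gives $Z\le Z_H\big(1+e^{-\frac{2a}{n}(s-s^*)}\big)$.

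I expect the main obstacle to be this last calibration: choosing the weights $a_w$ and controlling $T_v$ uniformly across each band — especially near its edges, where $p_v\sim 2v/n$ is smallest and the moment generating function is closest to diverging — and then propagating these estimates through the product of $\sim x-w$ brackets tightly enough that the correction to $s^*$ remains genuinely $O(n)$ rather than something like $O(n\log\log n)$. By comparison, establishing the opening identity with the correct combinatorial prefactor $\tfrac{(q-1)^2}{2q}$, and the reduction to $x\le n/2$ via flip symmetry, are routine bookkeeping.
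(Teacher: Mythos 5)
Your proposal follows essentially the same route as the paper's proof: the reduced-walk resummation of $Z-Z_H$ with prefactor $(q-1)^2/(2q)$, the Chernoff bound on the sum of geometric lingering times, the band decomposition indexed by the closest approach $w$ to a fixed point, the effective local dimension $\bar q$ fed into \autoref{lem:sumovertrajectories}, and the calibration of $a_w$ with $a_1$ of order $2a/n$ all match the paper's argument. The only piece missing from your sketch is the factor for the terminal segment of the reduced walk after it first reaches weight $w$ (the set $\Xi_w$ in the paper), which the paper bounds by an $n$-independent constant, so it does not affect the result.
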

\begin{proof}
    In this proof, we will be working with expressions for the collision probability $Z$. It will take several steps to manipulate the original expression into the form we need, so we will move back and forth between updating the expression and developing the tools needed to justify these updates. 
    
    We start by expressing
    \begin{align}
        Z &= \frac{1}{(q+1)^n}\sum_{x=0}^n \binom{n}{x} \EV_{P_u}\left[\left(\frac{2q}{q^2+1}\right)^{L\left[R\left[\gamma^{[s]}\right]\right]}  \; \Big| \; \gamma \in \Gamma_x^*\right]\,.
    \end{align}
    This is seen to be equivalent to Eq.~\eqref{eq:ZunbiasedEV} as follows. There are $\binom{n}{x}$ initial configurations with Hamming weight $x$, and generating a length-$s$ trajectory beginning at $x$ with the unbiased Markov chain is equivalent to randomly choosing a trajectory $\gamma$ from $\Gamma_x^*$, which begins at $x$ and ends at a fixed point ($0$ or $n$), with probability proportional to that assigned by the unbiased walk, and then truncating the walk to length $s$, denoted by $\gamma^{[s]}$. Then $R[\gamma^{[s]}]$ is the reduced trajectory, where consecutive duplicates are removed, and $L[R[\gamma^{[s]}]]$ is the length of that reduced trajectory, or in other words, the total number of bit flips that have occurred within the first $s$ time steps.
    
    Moving ahead, we observe that drawing $\gamma$ from $\Gamma_x^*$ is equivalent to first drawing a reduced trajectory $\psi$ from $\Psi_x^*$ and then drawing $\gamma$ from $\Gamma_\psi$, so we can rewrite
    \begin{align}
        Z &= \frac{1}{(q+1)^n}\sum_{x=0}^n \binom{n}{x} \sum_{\psi \in \Psi_x^*}\Pr_{P_u}[\Gamma_\psi] \EV_{P_u}\left[\left(\frac{2q}{q^2+1}\right)^{L\left[R\left[\gamma^{[s]}\right]\right]}  \; \Big| \; \gamma \in \Gamma_\psi \right]\,.
    \end{align}
    
    Now, note the following general statement about any integer-valued random variable $X$ such that $0\leq X \leq M$. For any function $f$ we have
    \begin{align}
        \EV[f(X)]&=\sum_{m=0}^M \Pr[X=m] f(m)=\sum_{m=0}^M\left(\Pr[X<m+1]-\Pr[X<m]\right)f(m) \\
        &= f(M)+ \sum_{m=1}^M\Pr[X<m] \left(f(m-1)-f(m)\right)\,.
    \end{align}
    Taking $X = L[R[\gamma^{[s]}]]$ for $\gamma$ drawn at random from $\Gamma_\psi$ and $f(X) = (2q/(q^2+1))^X$, we find
    \begin{align}
        \EV_{P_u}\left[\left(\frac{2q}{q^2+1}\right)^{L\left[R\left[\gamma^{[s]}\right]\right]}  \; \Big| \; \gamma \in \Gamma_\psi \right]
        &= \left(\frac{2q}{q^2+1}\right)^{L[\psi]}+\sum_{m=1}^{L[\psi]} \Pr_{P_u}\left[L[R[\gamma^{[s]}]] < m  \; \Big| \; \gamma \in \Gamma_\psi \right]\left(\frac{2q}{q^2+1}\right)^{m-1}\frac{(q-1)^2}{q^2+1} \\
        &= \left(\frac{2q}{q^2+1}\right)^{L[\psi]}+\sum_{m=1}^{L[\psi]} \Pr_{P_u}\left[L[\gamma] > s  \; \Big| \; \gamma \in \Gamma_{\psi^{[m]}} \right]\left(\frac{2q}{q^2+1}\right)^{m-1}\frac{(q-1)^2}{q^2+1}\,,
    \end{align}
    where the last line follows since the conditions $L[R[\gamma^{[s]}]] < m$ with $\gamma \in \Gamma_\psi$ and $L[\gamma] > s$ with $\gamma \in \Gamma_{\psi^{[m]}}$ both  correspond to deciding if the configuration has changed at least $m$ times within the first $s$ steps. 
    
    The quantity
    \begin{align}
        \frac{1}{(q+1)^n}\sum_{x=0}^n \binom{n}{x} \sum_{\psi \in \Psi_x^*}\Pr_{P_u}[\Gamma_\psi]\left(\frac{2q}{q^2+1}\right)^{L[\psi]}
    \end{align}
    is precisely equal to $Z_H$, as this represents the limit of infinite size where all trajectories terminate at one of the fixed points (see \autoref{sec:sanitycheckZH}). Thus, also noting that $\Pr_{P_u}[\Gamma_{\psi}] = 2^{-L[\psi]}$, we have
    \begin{align}
        Z &= Z_H + \frac{(q-1)^2}{(q+1)^n(q^2+1)}\sum_{x=0}^n \binom{n}{x} \sum_{\psi \in \Psi_x^*}\Pr_{P_u}[\Gamma_\psi] \sum_{m=1}^{L[\psi]} \Pr_{P_u}\left[L[\gamma] > s  \; \Big| \; \gamma \in \Gamma_{\psi^{[m]}} \right]\left(\frac{2q}{q^2+1}\right)^{m-1}\\
        &= Z_H + \frac{(q-1)^2}{(q+1)^n(2q)}\sum_{x=0}^n \binom{n}{x} \sum_{\psi \in \Psi_x^*}\Pr_{P_u}[\Gamma_\psi] \sum_{m=1}^{L[\psi]} \Pr_{P_u}\left[L[\gamma] > s  \; \Big| \; \gamma \in \Gamma_{\psi^{[m]}} \right]\left(\frac{2q}{q^2+1}\right)^{m} \\
        &= Z_H + \frac{(q-1)^2}{(q+1)^n(2q)}\sum_{x=0}^n \binom{n}{x} \sum_{\psi \in \Psi_x^*}2^{-L[\psi]} \sum_{m=1}^{L[\psi]} \Pr_{P_u}\left[L[\gamma] > s  \; \Big| \; \gamma \in \Gamma_{\psi^{[m]}} \right]\left(\frac{2q}{q^2+1}\right)^{m}\,.
\end{align}
Now, in the first line below, by associating $\phi = \psi^{[m]}$ we reorder and regroup the sums: instead of summing over paths $\psi$ that end at a fixed point and then all intermediate points $m = 1,\ldots, L[\psi]$ along the path, we first sum over all $m$, all $\phi$ (not necessarily ending at a fixed point) of length $m$, and then all $\psi$ for which $\phi \subset \psi$ (recall this means that the first $L[\phi]$ entries in the trajectory $\psi$ are equal to $\phi$). In the second line, we note that the sums over $m$ and $\phi$ of length $m$ is just a sum over all $\phi$ (of any length). In the third line, we note that the total probability of all the walks $\psi$ for which $\phi \subset \psi$ is just $2^{-L[\phi]}$. 
\begin{align}
       Z &= Z_H + \frac{(q-1)^2}{(q+1)^n(2q)}\sum_{x=0}^n \binom{n}{x} \sum_{m=1}^\infty \sum_{\substack{\phi \in \Psi_x\\ L[\phi]=m}}\left(\sum_{\substack{\psi \in \Psi_x^*\\ \phi \subset \psi}}2^{-L[\psi]} \right) \Pr_{P_u}\left[L[\gamma] > s  \; \Big| \; \gamma \in \Gamma_{\phi} \right]\left(\frac{2q}{q^2+1}\right)^{L[\phi]} \\
      &= Z_H + \frac{(q-1)^2}{(q+1)^n(2q)}\sum_{x=0}^n \binom{n}{x} \sum_{\phi \in \Psi_x}\left(\sum_{\substack{\psi \in \Psi_x^*\\ \phi \subset \psi}}2^{-L[\psi]} \right) \Pr_{P_u}\left[L[\gamma] > s  \; \Big| \; \gamma \in \Gamma_{\phi} \right]\left(\frac{2q}{q^2+1}\right)^{L[\phi]} \\
        &= Z_H + \frac{(q-1)^2}{(q+1)^n(2q)}\sum_{x=0}^n \binom{n}{x} \sum_{\phi \in \Psi_x}\left(2^{-L[\phi]} \right) \Pr_{P_u}\left[L[\gamma] > s  \; \Big| \; \gamma \in \Gamma_{\phi} \right]\left(\frac{2q}{q^2+1}\right)^{L[\phi]}  \\
        &= Z_H + \frac{(q-1)^2}{(q+1)^n(2q)}\sum_{x=0}^n \binom{n}{x} \sum_{\phi \in \Psi_x}  \Pr_{P_u}\left[L[\gamma] > s  \; \Big| \; \gamma \in \Gamma_{\phi} \right]\left(\frac{q}{q^2+1}\right)^{L[\phi]}\,.
    \end{align}
    Now we examine the final expression. The difference between $Z$ and $Z_H$ is a sum over $\Psi_x$, which includes all reduced paths $\phi$ that start at $x$ and may or may not terminate at $0$ or $n$. The statement $L[\gamma] > s$ is true if the number of time steps it takes to complete this reduced path is at least $s$, i.e.~the probability that \textit{the path does not finish within $s$ time steps}. As a sanity check, when $s$ becomes infinite, we expect this probability to become zero for any path as there would be enough time for any path to finish, and in this case $Z=Z_H$ as expected. This expression represents progress because we will be able to bound the probability of a certain path being completed using a Chernoff bound.
    
    For any random variable $X$ and for any constant $a>0$
    \begin{equation}
        \Pr[X > k] \leq \frac{\EV[e^{aX}]}{e^{ak}}\,.
    \end{equation}
    We use this bound with $X = L[\gamma]$, $k=s$, and yet-to-be-specified constants $a_{\phi} > 0$
    \begin{equation}
        Z-Z_H \leq \frac{(q-1)^2}{(q+1)^n(2q)}\sum_{x=0}^n \binom{n}{x} \sum_{\phi \in \Psi_x} e^{-a_{\phi}s} \EV_{P_u}\left[e^{a_{\phi} L[\gamma]}  \; \Big| \; \gamma \in \Gamma_{\phi} \right]\left(\frac{q}{q^2+1}\right)^{L[\phi]}\,.
    \end{equation}
 
    The Chernoff bound has the additional benefit that $\EV[e^{aX}]$ separates when $X$ is the sum of independent random variables. In particular, once $\phi$ is fixed, $L[\gamma]$ is the sum of exponentially distributed random variables corresponding to how many time steps the path $\gamma$ waits at each position along the reduced path $\phi$. 
    
    This is seen formally by noting that 
    \begin{align}
        \phi &= (\phi^{(0)}, \phi^{(1)}) \cdot (\phi^{(1)}, \phi^{(2)}) \cdot \ldots \cdot (\phi^{(L-1)}, \phi^{(L)}) \\
        \Gamma_\phi&= \Gamma_{(\phi^{(0)}, \phi^{(1)})} \cdot \Gamma_{(\phi^{(1)}, \phi^{(2)})} \cdot \ldots \cdot \Gamma_{(\phi^{(L-1)}, \phi^{(L)})} \,,
    \end{align}
    and meanwhile, for any collection of subsets $\Upsilon_m$
    \begin{align}
        \EV_{P_u}\big[e^{aL[\gamma]}\; | \; \gamma \in \Upsilon_1 \cdot \ldots \cdot \Upsilon_M\big] = \prod_{m=1}^M \EV_{P_u}\big[e^{aL[\gamma]}\; | \; \gamma \in \Upsilon_m \big]\,.
    \end{align}
    For any $r=0,\ldots,L[\phi]-1$, we can evaluate
    \begin{align}
        \EV_{P_u}\big[e^{aL[\gamma]}\; | \; \gamma \in \Gamma_{(\phi^{(r)},\phi^{(r+1)})} \big] &= \sum_{t=1}^\infty \left(1-\lambda_{\phi^{(r)}}^{-1}\right)^{t-1}\lambda_{\phi^{(r)}}^{-1} e^{at} \\
        &= \frac{1}{1-\lambda_{\phi^{(r)}}(1-e^{-a})}\,,
    \end{align}
    where
    \begin{equation}
        \lambda_v := \frac{n(n-1)}{2v(n-v)}
    \end{equation}
    is the expected amount of time the walk will wait at Hamming weight $v$ before moving to $v+1$ or $v-1$, and hence
    \begin{equation}
        \EV_{P_u}\left[e^{a_{\phi} L[\gamma]}  \; \Big| \; \gamma \in \Gamma_{\phi} \right] = \prod_{r=0}^{L[\phi]-1} \frac{1}{1-\lambda_{\phi^{(r)}}(1-e^{-a_\phi})}.
    \end{equation}
    
    We have made some progress at evaluating the bound on $Z$, but at this point it remains unclear how to perform the sum over paths $\phi \in \Psi_x$. To do so, first we will decompose paths $\phi$ into a series of subpaths that inch closer and closer to the fixed points at 0 and $n$.  In particular, we will decompose a path $\phi$ as a concatenation of subpaths drawn from $\Lambda_v$ for various $v$ and one final subpath drawn from $\Xi_w$, as described in the following lemma. Recall from Eqs.~\eqref{eq:Lambdav} and \eqref{eq:Xiw} that these subsets of $\Psi$ are defined by where they start, where they end, and/or some maximum or minimum point at which they ever reach. 

\begin{lemma}
    Suppose that $\phi \in \Psi_x$. Let $\tilde{x} = \min(x,n-x)$ and let $w = \min(m(\phi),n-M(\phi))$. Then there is a \textit{unique} sequence of trajectories $(\phi_v)_{v=w}^{\tilde{x}}$ with $\phi_v \in \Lambda_v$ for $v=w+1,\ldots, \tilde{x}$ and $\phi_w \in \Xi_w$ and such that
    \begin{equation}
        \phi = \alpha_{\tilde{x}} \cdot \alpha_{\tilde{x}-1}  \cdot \ldots \cdot \alpha_w\,,
    \end{equation}
    where for each $v$ either $\alpha_v = \phi_v$ or $\alpha_v = \tilde{\phi}_v$, depending on whether $\alpha_{v+1}$ terminates at $v$ or at $n-v$. 
\end{lemma}

\begin{proof}
    Let $r_v$ be the minimum $r$ such that $\phi^{(r)} = v$ or $\phi^{(r)} = n-v$. Then for each $v = w+1,\ldots, \tilde{x}$, we can define
    \begin{equation}
        \alpha_v = \phi^{[r_v,r_{v-1}]}
    \end{equation}
    and 
    \begin{equation}
        \alpha_w = \phi^{[r_w,L[\phi]]}\,.
    \end{equation}
    Then, each $\alpha_v$ begins at either $v$ or $n-v$ and terminates upon reaching either $v-1$ or $n-v+1$ for the first time. Hence it is a member of $\Lambda_v$ or $\tilde{\Lambda}_v$, but not both. Finally, $\alpha_w$ is a member of $\Xi_w$ because it begins at $w$ and never reaches either $w-1$ or $n-w+1$, since this would contradict the definition of $w$. 
\end{proof}

We will use the notation $\tilde{v}:= \min(v,n-v)$ for any integer $v$ throughout the remainder of the proof. The above lemma allows us to replace the sum over $\phi \in \Psi_x$ with sums over $w$ from $0$ to $\tilde{x}$ and sums over $\phi_v \in \Lambda_v$, $\phi_w \in \Xi_w$. The summand is a product of factors $(1-\lambda_{\phi^{(r)}}(1-e^{-a_\phi}))^{-1}$, each of which can be collected within just one of the sums. Moreover, the fact that these products are invariant under reversing the path, i.e.~
\begin{equation}
    \prod_{r=0}^{L[\psi]-1} f(\psi^{(r)}) = \prod_{r=0}^{L[\psi]-1} f(\tilde{\psi}^{(r)})
\end{equation}
for any function $f$, means that it is unimportant that $\alpha_v$ can equal $\phi_v$ or $\tilde{\phi}_v$ as both yield the same result.

We choose $a_\phi$ so that it only depends on $w = \min(m(\phi),n-M(\phi))$, denoted henceforth by $a_w$. Collecting these observations, and noting that the $L[\phi]$ factors of $q/(q^2+1)$ can each be allocated to one of the steps taken in $\phi$ we find that

\begin{align}
     &\frac{(q+1)^n(2q)}{(q-1)^2}\left(Z -Z_H\right) \\
     \leq{}& \sum_{x=0}^n \binom{n}{x} \sum_{w=0}^{\tilde{x}} e^{-a_w s} \prod_{v=w}^{\tilde{x}}\left(\sum_{\substack{\phi_v \in \Lambda_v \\\text{or}\\ \phi_w \in \Xi_w}}\prod_{r=0}^{L[\phi_v]-1}\frac{q}{q^2+1}\frac{1}{1-\lambda_{\phi_v^{(r)}}(1-e^{-a_w})}\right) \\
     \leq{}& \sum_{x=0}^n \binom{n}{x} \sum_{w=0}^{\tilde{x}} e^{-a_w s} \prod_{v=w}^{\tilde{x}}\left(\sum_{\substack{\phi_v \in \Lambda_v \\\text{or}\\ \phi_w \in \Xi_w}}\left(\frac{q}{q^2+1}\frac{1}{1-\lambda_{v}(1-e^{-a_w})}\right)^{L[\phi_v]}\right) \,,
\end{align}
where in the final line, we used the fact that by definition of $\phi_v \in \Lambda_v$ or $\phi_v \in \Xi_v$, $v \leq \phi_v^{(r)} \leq n-v$ for all $r < L[\phi]$ and also $\lambda_v \geq \lambda_{v'}$ whenever $\tilde{v} < \tilde{v}'$. 

This form is very useful because we know how to perform sums in parentheses, using the strategy we first saw in \autoref{lem:sumovertrajectories}. The values of these sums are given by the following lemma, whose proof is delayed until after the main proof. 

\begin{lemma}\label{lem:sumsinpars}
Given $v$ and a parameter $a$ that satisfies $1 \leq e^a \leq (1-\frac{(q-1)^2}{q^2+1}\frac{1}{\lambda_v})^{-1}$
\begin{align}
    \sum_{\alpha \in \Lambda_v}\left(\frac{q}{q^2+1}\frac{1}{1-\lambda_{v}(1-e^{-a})}\right)^{L[\alpha]} &= \bar{q}_{v,a}^{-1}\frac{1+\bar{q}_{v,a}^{-n+2v}}{1+\bar{q}_{v,a}^{-n+2v-2}}\\
    \sum_{\alpha \in \Xi_v}\left(\frac{q}{q^2+1}\frac{1}{1-\lambda_{v}(1-e^{-a})}\right)^{L[\alpha]} &= \frac{\bar{q}_{v,a}^2+1}{(\bar{q}_{v,a}-1)^2}\left(1-\bar{q}_{v,a}^{-1}\frac{1+\bar{q}_{v,a}^{-n+2v}}{1+\bar{q}_{v,a}^{-n+2v-2}}\right)\,,
\end{align}
where $\bar{q}_{v,a}$ is defined in \autoref{def:barq}.
\end{lemma}

\begin{definition}\label{def:barq}
    Given $x$ and parameter $a$ satisfying $1 \leq e^a \leq (1-\frac{(q-1)^2}{q^2+1}\frac{1}{\lambda_x})^{-1}$, let
    \begin{equation}
        \bar{q}_{x,a} = \left(\frac{q^2+1}{2q}\right)\left(1-\lambda_x(1-e^{-a})\right)\left(1+\sqrt{1-\frac{4q^2}{(q^2+1)^2(1-\lambda_x(1-e^{-a}))^2}}\right)\,.
    \end{equation}
    Note that $\bar{q}_{x,a}$ satisfies the equation
    \begin{equation}
        \frac{\bar{q}_{x,a}}{\bar{q}_{x,a}^2+1} = \frac{q}{q^2+1}\frac{1}{1-\lambda_x(1-e^{-a})}
    \end{equation}
    and that 
    \begin{equation}
        \bar{q}_{x,a}^{-1} = \left(\frac{q^2+1}{2q}\right)\left(1-\lambda_x(1-e^{-a})\right)\left(1-\sqrt{1-\frac{4q^2}{(q^2+1)^2(1-\lambda_x(1-e^{-a}))^2}}\right)\,.
    \end{equation}
\end{definition}

This lemma allows us to state
\begin{align}
    &\frac{(q+1)^n(2q)}{(q-1)^2}\left(Z -Z_H\right)\\
    \leq{}&\sum_{x=0}^{n}\binom{n}{x}\sum_{w=0}^{\tilde{x}} e^{-sa_w} \left(\prod_{v=w+1}^{\tilde{x}}\bar{q}_{v,a_w}^{-1}\frac{1+\bar{q}_{v,a_w}^{-n+2v}}{1+\bar{q}_{v,a_w}^{-n+2v-2}}\right)
    \left( \frac{\bar{q}_{w,a_w}^2+1}{(\bar{q}_{w,a_w}-1)^2}\left(1-\bar{q}_{w,a_w}^{-1}\frac{1+\bar{q}_{w,a_w}^{-n+2w}}{1+\bar{q}_{w,a_w}^{-n+2w-2}}\right)\right)  \\
     \leq{}&\sum_{x=0}^{n}\binom{n}{x}\sum_{w=0}^{\tilde{x}} e^{-sa_w} \left(\prod_{v=w+1}^{\tilde{x}}\tilde{q}_{v,a_w}^{-1}\right)e^{\frac{q^2}{q^2-1}}
    \left( \frac{\bar{q}_{w,a_w}^2+1}{(\bar{q}_{w,a_w}-1)^2}\left(1-\bar{q}_{w,a_w}^{-1}\frac{1+\bar{q}_{w,a_w}^{-n+2w}}{1+\bar{q}_{w,a_w}^{-n+2w-2}}\right)\right)\,, \label{eq:Z0ZHbeforeaw}
\end{align}
where the second line follows from the observation that (also noting $\bar{q}_{v,a} < q$ for all $v,a$)
\begin{align}
    &\prod_{v=w+1}^{\tilde{x}}\frac{1+\bar{q}_{v,a_w}^{-n+2v}}{1+\bar{q}_{v,a_w}^{-n+2v-2}} \leq \prod_{m=0}^\infty \frac{1+q^{-2m}}{1+q^{-2m-2}} = \prod_{m=0}^\infty \left(1+q^{-2m}\frac{1-q^{-2}}{1+q^{-2m-2}}\right) \\
    \leq{}& \prod_{m=0}^\infty \left(1+q^{-2m}\right) \leq \prod_{m=0}^\infty \exp\left(q^{-2m}\right) = \exp\left(\sum_{m=0}^\infty q^{-2m}\right) = \exp\left(\frac{q^2}{q^2-1}\right)\,.
\end{align}

To continue, we will make choices for $a_w$ and show upper bounds for the various factors in the above expression. For $w>0$, we make the specification for $a_w$ that
\begin{equation}
    \eta_w := 1-e^{-a_w} := \frac{(q-1)^2}{q^2+1}\frac{1}{2\lambda_w} = \frac{(q-1)^2}{q^2+1}\frac{w(n-w)}{n(n-1)}
\end{equation}
and that $\eta_0 = \eta_1$. This choice implies 
\begin{equation}
    a_w \geq \frac{(q-1)^2}{q^2+1}\frac{1}{2\lambda_w} = \frac{(q-1)^2}{q^2+1}\frac{w(n-w)}{n(n-1)}\,.
\end{equation}

Moreover, it implies that, so long as $w \leq x \leq n-w$
\begin{equation}
    1-\lambda_x(1-e^{-a_w}) \geq 1-\lambda_w(1-e^{-a_w}) = 1-\frac{(q-1)^2}{2(q^2+1)} = \frac{(q+1)^2}{2(q^2+1)}\,,
\end{equation}
which by \autoref{def:barq} implies that
\begin{equation}
    q \geq \bar{q}_{x,a_w} \geq \frac{(q+1)^2}{4q}\,.
\end{equation}
When this is the case, we have
\begin{align}
    \left( \frac{\bar{q}_{w,a_w}^2+1}{(\bar{q}_{w,a_w}-1)^2}\left(1-\bar{q}_{w,a_w}^{-1}\frac{1+\bar{q}_{w,a_w}^{-n+2w}}{1+\bar{q}_{w,a_w}^{-n+2w-2}}\right)\right) &\leq \frac{\bar{q}_{w,a_w}^2+1}{(\bar{q}_{w,a_w}-1)^2}\left(1-\bar{q}_{w,a_w}^{-1}\right) \\
    &= \frac{\bar{q}_{w,a_w}^2+1}{(\bar{q}_{w,a_w}-1)\bar{q}_{w,a_w}}\\
    &\leq \frac{q^2+1}{\left(\frac{(q+1)^2}{4q}-1\right)\frac{(q+1)^2}{4q}}\\
    &= \frac{(q^2+1)(4q)^2}{(q-1)^2(q+1)^2}\\
    &\leq \frac{320}{9}\,,
\end{align}
where the last line follows for all $q\geq 2$, which will be true for any physically realizable circuit. This takes care of the final factor in Eq.~\eqref{eq:Z0ZHbeforeaw}. What remains are the factors of $\bar{q}_{v,a_w}^{-1}$. To handle these we will use the following bound, whose proof is delayed to the next section.

\begin{lemma}\label{lem:qbarinvbound}
With $\bar{q}_x$ defined as in \autoref{def:barq} and as long as $1 \leq e^a \leq (1-\frac{(q-1)^2}{q^2+1}\frac{1}{\lambda_x})^{-1}$,
\begin{equation}
    \bar{q}_{x,a}^{-1} \leq q^{-1}\exp\left(a\left(\lambda_x \frac{q^2+1}{q^2-1}+\lambda_x^2(1-e^{-a})\frac{(q^2+1)^4}{(q^2-1)^3}\right)\right)\,.
\end{equation}
\end{lemma}

We also need the following observation, which holds under the assumption that $1 \leq j<k\leq \frac{n}{2}$
\begin{align}
    \sum_{r=j+1}^k \lambda_{r} &\leq \frac{n}{2}\sum_{r=j+1}^k \left(\frac{1}{r} + \frac{1}{n-r}\right) \leq \frac{n}{2}\left(\int_{j}^k d\rho\frac{1}{\rho} + \int_{n-k-1}^{n-j-1}d\rho\frac{1}{\rho}\right) \\
    &= \frac{n}{2}\left(\log(k/j) +\log\!\big((n-j-1)/(n-k-1)\big)\right)\\
    &\leq \frac{n}{2}\left(\log(k/j) +\log(2)\right) < \frac{n}{2}\left(\log(2k/j)+1\right)\,.\\
    \sum_{r=j+1}^k \lambda_{r}^2 &\leq \frac{n^2}{4}\sum_{r=j+1}^k \frac{n^2}{r^2(n-r)^2} \leq n^2\sum_{r=j+1}^k \frac{1}{r^2}
    \leq \frac{n^2}{j} < \frac{ \pi^2 n^2}{6j}\,.
\end{align}
Similarly, for the case where $j=0$, we find
\begin{equation}
    \sum_{r=1}^k \lambda_r \leq \frac{n}{2}\big(\log(2k) + 1\big) \,, \qquad \sum_{r=1}^k \lambda_r^2 \leq \frac{\pi^2n^2}{6}\,.
\end{equation}
Now we can write the following, where $\bar{w} := \max(w,1)$
\begin{align}
    &\exp\left(-\frac{q^2}{q^2-1}\right)\frac{9(q+1)^n(2q)}{320(q-1)^2}\left(Z -Z_H\right)\\
    \leq{}& \sum_{x=0}^{n}\binom{n}{x}\sum_{w=0}^{\tilde{x}} e^{-sa_w} \left(\prod_{v=w+1}^{\tilde{x}}\bar{q}_{v,a_w}^{-1}\right)  \\
    \leq{}& \sum_{x=0}^{n}\binom{n}{x}\sum_{w=0}^{\tilde{x}} e^{-sa_w}  \left(\prod_{v=w+1}^{\tilde{x}} q^{-1}\exp\left(\frac{q^2+1}{q^2-1}a_w\lambda_v + \frac{(q^2+1)^4}{(q^2-1)^3}a_w\eta_w \lambda_v^2\right)\right) \\
    ={}& \sum_{x=0}^{n}\binom{n}{x}\sum_{w=0}^{\tilde{x}} e^{-sa_w} q^{-\tilde{x}+w}\exp\left(\frac{q^2+1}{q^2-1}a_w\sum_{v=w+1}^{\tilde{x}}\lambda_v + \frac{(q^2+1)^4}{(q^2-1)^3}a_w\eta_w \sum_{v=w+1}^{\tilde{x}}\lambda_v^2\right) \\
    \leq{}& \sum_{x=0}^{n}\binom{n}{x}\sum_{w=0}^{\tilde{x}} e^{-sa_w} q^{-\tilde{x}+w}\exp\left(\frac{q^2+1}{q^2-1}a_w\frac{n}{2} \left(\log\frac{2\tilde{x}}{\bar{w}}+1\right) + \frac{(q^2+1)^4}{(q^2-1)^3}a_w\eta_w \frac{n^2\pi^2}{6\bar{w}}\right) \\
    \leq{}& \sum_{x=0}^{n}\binom{n}{x}\sum_{w=0}^{\tilde{x}} e^{-sa_w} q^{-\tilde{x}+w}\exp\left(a_w\left(\frac{q^2+1}{q^2-1}\frac{n}{2} \left(\log\frac{2\tilde{x}}{\bar{w}} +1\right) + \frac{(q^2+1)^3(q-1)^2}{(q^2-1)^3} \frac{n\pi^2(n-w)}{6(n-1)}\right) \right)\,. \label{eq:Z0ZHnearlydone}
\end{align}

Now we are in a position to nearly complete the proof. We choose
\begin{equation}
    s^* = \frac{1}{2}\frac{q^2+1}{q^2-1}n\log(n) + cn\,,
\end{equation}
where
\begin{align}
    c=\frac{q^2+1}{2(q^2-1)}+\frac{(q^2+1)^3(q-1)^2}{(q^2-1)^3} \frac{\pi^2}{6} + \frac{q^2+1}{(q-1)^2}\left(\log\left(\frac{320(q-1)(q^n+1)}{9q^n}\right)+\frac{q^2}{q^2-1}+4\log(q)\right)\,.
\end{align}
The $O(n\log(n))$ term in $s^*$ will be necessary to cancel the $O(n\log(2\tilde{x}/w))$ term in Eq.~\eqref{eq:Z0ZHnearlydone}. Meanwhile, the first two terms of $c$ will be needed to cancel the remaining terms on the right-hand-side of Eq.~\eqref{eq:Z0ZHnearlydone}. The next two terms in $c$ are used to cancel factors on the left-hand-side of Eq.~\eqref{eq:Z0ZHnearlydone}, and finally the last term is vital for canceling the $q^w$ factor, as follows
\begin{align}
    &\frac{q^n+1}{q^n}(q+1)^n(Z-Z_H) \\
    \leq{}& \frac{q-1}{q}\sum_{x=0}^{n}\binom{n}{x}\sum_{w=0}^{\tilde{x}} e^{-(s-s^*)a_w} q^{-\tilde{x}+w}\exp\left(-\frac{4w(n-w)}{n-1}\log(q) \right) \\
    \leq{}& \frac{q-1}{q}\sum_{x=0}^{n}\binom{n}{x}\sum_{w=0}^{\tilde{x}} e^{-(s-s^*)a_w} q^{-\tilde{x}-w} \\
    \leq{}& e^{-(s-s^*)a_1}\frac{q-1}{q}\sum_{x=0}^{n}\binom{n}{x}q^{-\tilde{x}}\sum_{w=0}^{\tilde{x}} q^{-w} \label{eq:pulledouta1}\\
    \leq{}& e^{-(s-s^*)a_1}\sum_{x=0}^{n}\binom{n}{x}q^{-\tilde{x}} \\
    \leq{}& e^{-(s-s^*)a_1}\sum_{x=0}^{n}\binom{n}{x}\left(q^{-x}+q^{-n+x}\right) \\
    ={}& 2e^{-(s-s^*)a_1}\left(\frac{q+1}{q}\right)^n\,.
\end{align}
It was in Eq.~\eqref{eq:pulledouta1}, where we used $a_1 \leq a_w$ to pull the exponential out from the sum, that the assumption $s \geq s^*$ was necessary. This is the only place it has been needed. As $Z_H = 2/(q^n+1)$, we then have
\begin{equation}
    Z \leq Z_H(1+e^{-(s-s^*)a_1})\,,
\end{equation}
where $a_1 = (q-1)^2/(n(q^2+1))$. Defining $a = na_1/2$, this completes the proof of the upper bound.
\end{proof}

\subsection{Lower bound proof}

\begin{theorem}[\autoref{thm:completegraphlowerboundsummary} from main text]\label{thm:completegraphLB}
    For the complete-graph architecture of size $s$ on $n$ qudits with local dimension $q$, the collision probability satisfies
    \begin{equation}\label{eq:complete-graphLB}
        Z \geq \frac{Z_H}{2} \exp\left(\frac{\log(q)}{q+1}\exp\left(\log(n)+s\log\left(1-\frac{2(q^2-1)}{n(q^2+1)}\right)\right)\right)\,.
    \end{equation}
\end{theorem}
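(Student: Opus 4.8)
The plan is to run a short drift argument on the biased random walk. Starting from the biased-walk expression $Z = q^{-n}\,\EV_{P_b,\Lambda_b}\big[q^{\lvert\vec{\gamma}^{(s)}\rvert}\big]$ in Eq.~\eqref{eq:ZbiasedEV}, I would first apply Jensen's inequality (convexity of $t\mapsto q^{t}$) to reduce the problem to a lower bound on the \emph{expected} final Hamming weight: $Z \geq q^{-n}\exp\!\big(\log(q)\,\EV_{P_b,\Lambda_b}[\lvert\vec{\gamma}^{(s)}\rvert]\big)$. Since $q^{-n}\geq \tfrac{1}{q^{n}+1}=Z_H/2$, it then suffices to prove $\EV_{P_b,\Lambda_b}[\lvert\vec{\gamma}^{(s)}\rvert]\geq \tfrac{n}{q+1}\big(1-\tfrac{2(q^{2}-1)}{(q^{2}+1)n}\big)^{s}$, because rewriting $n=e^{\log(n)}$ and $\big(1-\cdots\big)^{s}=e^{s\log(1-\cdots)}$ produces exactly the nested exponential appearing in the theorem.

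To establish that bound I would track the one-step drift of the Hamming weight. Under the complete-graph biased walk, from a configuration of Hamming weight $x$ the walk moves to $x-1$ with probability $\tfrac{2x(n-x)}{n(n-1)}\cdot\tfrac{q^{2}}{q^{2}+1}$, to $x+1$ with probability $\tfrac{2x(n-x)}{n(n-1)}\cdot\tfrac{1}{q^{2}+1}$, and otherwise stays put, so
\begin{equation}
\EV_{P_b}\big[\lvert\vec{\gamma}^{(t+1)}\rvert \,\big|\, \vec{\gamma}^{(t)}\big] = \lvert\vec{\gamma}^{(t)}\rvert - \frac{2(q^{2}-1)}{q^{2}+1}\cdot\frac{\lvert\vec{\gamma}^{(t)}\rvert\,\big(n-\lvert\vec{\gamma}^{(t)}\rvert\big)}{n(n-1)}\,.
\end{equation}
The one genuinely useful observation is that $\tfrac{x(n-x)}{n(n-1)}\leq \tfrac{x}{n}$ for every integer $x\in\{0,1,\ldots,n\}$ (for $x\geq 1$ this is $\tfrac{n-x}{n-1}\leq 1$, and for $x=0$ both sides vanish); this upgrades the $n-1$ in the denominator to $n$ and yields the clean submartingale-type inequality $\EV_{P_b}\big[\lvert\vec{\gamma}^{(t+1)}\rvert \,\big|\, \vec{\gamma}^{(t)}\big] \geq \lvert\vec{\gamma}^{(t)}\rvert\big(1-\tfrac{2(q^{2}-1)}{(q^{2}+1)n}\big)$. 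Iterating with the tower property gives $\EV_{P_b,\Lambda_b}[\lvert\vec{\gamma}^{(s)}\rvert]\geq \big(1-\tfrac{2(q^{2}-1)}{(q^{2}+1)n}\big)^{s}\,\EV_{\Lambda_b}[\lvert\vec{\gamma}^{(0)}\rvert]$.

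It remains to evaluate $\EV_{\Lambda_b}[\lvert\vec{\gamma}^{(0)}\rvert]$. Since $\Lambda_b(\vec{\nu})=\tfrac{q^{n}}{(q+1)^{n}}q^{-\lvert\vec{\nu}\rvert}$ factorizes over the $n$ sites, making each site independently equal to $S$ with probability $\tfrac{1}{q+1}$, linearity of expectation gives $\EV_{\Lambda_b}[\lvert\vec{\gamma}^{(0)}\rvert]=\tfrac{n}{q+1}$ (equivalently, $\tfrac{q^{n}}{(q+1)^{n}}\sum_{x}\binom{n}{x}x\,q^{-x}=\tfrac{n}{q+1}$ directly). Substituting this back, using $q^{-n}\geq Z_H/2$, and rearranging completes the proof. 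There is no real obstacle here: the only step needing care is the inequality $\tfrac{x(n-x)}{n(n-1)}\leq\tfrac{x}{n}$, which is precisely what buys the optimal constant $\tfrac{q^{2}+1}{2(q^{2}-1)}$ rather than a weaker bound with $n-1$ in place of $n$; everything else is bookkeeping with Jensen's inequality and a geometric iteration, structurally parallel to the general lower bound in the proof of \autoref{thm:generalLB}.
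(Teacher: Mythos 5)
Your proposal is correct and follows essentially the same route as the paper's proof: the biased-walk representation of $Z$, the one-step drift bound $\EV_{P_b}[\lvert\vec{\gamma}^{(t+1)}\rvert\,|\,\vec{\gamma}^{(t)}]\geq\lvert\vec{\gamma}^{(t)}\rvert\big(1-\tfrac{2(q^2-1)}{n(q^2+1)}\big)$ iterated from $\EV_{\Lambda_b}[\lvert\vec{\gamma}^{(0)}\rvert]=n/(q+1)$, and Jensen's inequality to pass to $q^{\EV[\cdot]}$. The only cosmetic difference is that you apply Jensen before the drift iteration rather than after, which is immaterial.
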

\begin{corollary}
    For the complete-graph architecture, let $s_{AC}$ be the minimum circuit size, as a function of $n$, such that
    \begin{equation}
        Z \leq 2 Z_H\,.
    \end{equation}
    Then it must hold that
    \begin{equation}
        \left\lvert s_{AC}- \frac{q^2+1}{2(q^2-1)}n\log(n)\right\rvert = O(n)\,.
    \end{equation} 
\end{corollary}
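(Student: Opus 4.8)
The corollary follows immediately from combining the upper bound (\autoref{thm:completegraphUB}, restated as \autoref{thm:completegraphupperboundsummary} in the main text) with the lower bound of \autoref{thm:completegraphLB}. My plan is to extract the two one-sided estimates on $s_{AC}$ and check that the subleading terms in each are $O(n)$.

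\textbf{Upper side.} From \autoref{thm:completegraphUB} we have $Z \leq Z_H(1+e^{-\frac{2a}{n}(s-s^*)})$ for $s \geq s^*$, with $s^* = \frac{q^2+1}{2(q^2-1)}n\log(n) + O(n)$ and $a = \frac{(q-1)^2}{2(q^2+1)}$ a constant. Setting $s = s^*$ already gives $Z \leq 2Z_H$, so by definition $s_{AC} \leq s^* = \frac{q^2+1}{2(q^2-1)}n\log(n) + O(n)$.

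\textbf{Lower side.} From \autoref{thm:completegraphLB}, $Z \geq \frac{Z_H}{2}\exp\!\big(\frac{\log q}{q+1}\exp(\log n + s\log(1-\tfrac{2(q^2-1)}{n(q^2+1)}))\big)$. I want to find the largest $s$ for which the right-hand side still exceeds $2Z_H$; for any such $s$ we have $s_{AC} > s$. The condition $\tfrac{Z_H}{2}\exp(\tfrac{\log q}{q+1}e^{g(s)}) \geq 2Z_H$, where $g(s) := \log n + s\log(1-\tfrac{2(q^2-1)}{n(q^2+1)})$, is equivalent to $\tfrac{\log q}{q+1}e^{g(s)} \geq \log 4$, i.e. $g(s) \geq \log\!\big(\tfrac{(q+1)\log 4}{\log q}\big) =: K$, a constant independent of $n$. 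Solving $g(s) = K$ and using that $-\log(1-u) = u + O(u^2)$ with $u = \tfrac{2(q^2-1)}{n(q^2+1)} = \Theta(1/n)$, so that $\big(-\log(1-\tfrac{2(q^2-1)}{n(q^2+1)})\big)^{-1} = \tfrac{n(q^2+1)}{2(q^2-1)} + O(1)$, gives the threshold
\begin{equation}
    s = \frac{\log n - K}{-\log\!\big(1-\tfrac{2(q^2-1)}{n(q^2+1)}\big)} = \frac{q^2+1}{2(q^2-1)}n\log n - O(n)\,.
\end{equation}
Hence for all $s$ below this value $Z > 2Z_H$, so $s_{AC} \geq \frac{q^2+1}{2(q^2-1)}n\log n - O(n)$.

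\textbf{Conclusion.} Combining the two bounds, $\frac{q^2+1}{2(q^2-1)}n\log n - O(n) \leq s_{AC} \leq \frac{q^2+1}{2(q^2-1)}n\log n + O(n)$, which is exactly the claimed statement $\big|s_{AC} - \frac{q^2+1}{2(q^2-1)}n\log n\big| = O(n)$. The only mild care needed is in the lower-side manipulation: verifying that $\log(1-\Theta(1/n))^{-1} = \Theta(n) + O(1)$ with the right leading constant, and that the additive constant $K$ contributes only $O(n)$ after multiplication by that $\Theta(n)$ factor; both are elementary. There is no real obstacle here — the work was done in proving \autoref{thm:completegraphUB} and \autoref{thm:completegraphLB}, and this corollary is just bookkeeping to read off the matching leading-order coefficients.
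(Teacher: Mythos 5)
Your proposal is correct and follows essentially the same route as the paper: read off the upper bound from \autoref{thm:completegraphUB} at $s=s^*$, and invert the lower bound of \autoref{thm:completegraphLB} to see that $Z>2Z_H$ whenever $s$ is below $\big(\log n - K\big)\big/\big(-\log(1-\tfrac{2(q^2-1)}{n(q^2+1)})\big)$. The only cosmetic difference is that the paper controls the $\big(-\log(1-u)\big)^{-1}$ factor via the elementary inequality $-1/\log(1-u)\geq 1/u-1$ rather than a Taylor expansion, which yields the same $O(n)$ correction.
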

\begin{proof}
    The upper bound on $Z$ in \autoref{thm:completegraphUB} implies
    \begin{equation}
        s_{AC} \leq \frac{q^2+1}{2(q^2-1)}n\log(n) + O(n)\,.
    \end{equation}
    Meanwhile, since $s=s_{AC}$ implies $Z \leq 2Z_H$, the bound in \autoref{thm:completegraphLB} implies that
    \begin{equation}
        \exp\left(\frac{\log(q)}{q+1}\exp\left(\log(n)+s_{AC}\log\left(1-\frac{2(q^2-1)}{n(q^2+1)}\right)\right)\right) \leq 4
    \end{equation}
    and thus
    \begin{align}
        s_{AC} &\geq \frac{\log(n)-\log\left(\frac{(q+1)\log(4)}{\log(q)}\right)}{-\log\left(1-\frac{2(q^2-1)}{n(q^2+1)}\right)}\\
        &\geq \left(\log(n)-\log\left(\frac{(q+1)\log(4)}{\log(q)}\right)\right)\left( \frac{n(q^2+1)}{2(q^2-1)} - 1\right) \\
        &= \frac{n(q^2+1)}{2(q^2-1)}\log(n) - O(n)\,,
    \end{align}
where we have used the general inequality $-1/\log(1-u) \geq 1/u-1$.
\end{proof}

\begin{proof}[Proof of \autoref{thm:completegraphLB}]
    The structure of the proof is very similar to \autoref{thm:generalLB} for general architectures. We use the framework of the biased random walk.
    
    Let $x:= \gamma^{(t)}$. The transition rule is such that
    \begin{equation}
        \gamma^{(t+1)} = 
        \begin{cases}
        x & \text{with probability } 1-\frac{2x(n-x)}{n(n-1)} \\
        x-1 & \text{with probability } \frac{2x(n-x)}{n(n-1)}\frac{q^2}{q^2+1} \\
        x+1 & \text{with probability } \frac{2x(n-x)}{n(n-1)}\frac{1}{q^2+1} \\
        \end{cases}
    \end{equation}
    and so 
    \begin{align}
        \EV_{P_b}[\gamma^{(t+1)} | \gamma^{(t)} = x] &= x-\frac{2x(n-x)}{n(n-1)}\frac{q^2-1}{q^2+1} \\
        &\geq x\left(1- \frac{2(q^2-1)}{n(q^2+1)}\right)\,.
    \end{align}
    As this is true for all $x$, when we have some probability distribution $\Lambda$ over values of $x$, it still holds that
    \begin{align}
        \EV_{P_b}[\gamma^{(t+1)}] &= \sum_{x=0}^n \Pr_{\Lambda}[\gamma^{(t)} = x] \EV_{P_b} [\gamma^{(t+1)} | \gamma^{(t)} = x] \\
        &\geq \sum_{x=0}^n \Pr_{\Lambda}[\gamma^{(t)} = x] x\left(1- \frac{2(q^2+1)}{n(q^2-1)}\right) \\
        &= \left(1- \frac{2(q^2+1)}{n(q^2-1)}\right) \sum_{x=0}^n \Pr_{\Lambda}[\gamma^{(t)} = x] x \\
        &= \left(1- \frac{2(q^2-1)}{n(q^2+1)}\right) \EV_{\Lambda}[\gamma^{(t)}]\,,
    \end{align}
    and by applying this equation recursively from the starting distribution $\Lambda_b$, we find
    \begin{align}
        \EV_{P_b,\Lambda_b}[\gamma^{(s)}] &\geq \left(1- \frac{2(q^2-1)}{n(q^2+1)}\right)^s \EV_{\Lambda_b}[\gamma^{(0)}] \\
        &= \left(1- \frac{2(q^2-1)}{n(q^2+1)}\right)^s \frac{n}{q+1}\,.
    \end{align}
    By convexity, we have $\EV[q^x] \geq q^{\EV[x]}$, and hence
    \begin{align}
        Z &= \frac{1}{q^n}\EV_{P_b,\Lambda_b}[q^{\lvert \vec{\gamma}^{(s)}\rvert}] \\
        &\geq \frac{1}{q^n}\exp\left(\log(q)\frac{n}{q+1}\left(1- \frac{2(q^2-1)}{n(q^2+1)}\right)^{s}\right) \\
        &\geq \frac{Z_H}{2} \exp\left(\frac{\log(q)}{q+1}\exp\left(\log(n)+s\log\left(1-\frac{2(q^2-1)}{n(q^2+1)}\right)\right)\right)\,.
    \end{align}
\end{proof}

\subsection{Delayed proofs of lemmas}

\begin{proof}[Proof of \autoref{lem:sumsinpars}]
The first equation will follow fairly straightforwardly from \autoref{lem:sumovertrajectories}. Note that the factor in parentheses on the left-hand-side is $\bar{q}_{v,a}/(\bar{q}_{v,a}^2+1)$ as defined in \autoref{def:barq}, as well as the fact that $\Lambda_v$ contains walks that start at $v$ and end at $v-1$ or $n-v+1$. The walks that start at $v$ and end at $v-1$ are covered by \autoref{lem:sumovertrajectories} with $x\rightarrow v$, $y\rightarrow v-1$, and $m\rightarrow n-2v+2$. The walks that start at $v$ and end at $n-v+1$ are equivalent to walks starting at $n-v$ and ending at $v-1$, and are thus covered by \autoref{lem:sumovertrajectories} with $x\rightarrow n-v$, $y\rightarrow v-1$, and $m\rightarrow n-2v+2$. Summing the results from these two substitutions yields the quantity
\begin{align}
    &\frac{1}{1-\bar{q}_{v,a}^{-2(n-2v+2)}}\left(\bar{q}_{v,a}^{-1}-\bar{q}_{v,a}^{-2n+4v-3}\right) + \frac{1}{1-\bar{q}_{v,a}^{-2(n-2v+2)}}\left(\bar{q}_{v,a}^{-n+2v-1}-\bar{q}_{v,a}^{-n+v-3}\right)= \bar{q}_{v,a}^{-1}\frac{1+\bar{q}_{v,a}^{-n+2v}}{1+\bar{q}_{v,a}^{-n+2v-2}}\\
\end{align}
which proves that the first equation in the Lemma is correct.

The second equation is not a direct application of \autoref{lem:sumovertrajectories}, but it can be shown by a similar method. Define
\begin{equation}
    \Xi_{x,v} = \{\psi \in \Psi_x: v\leq m(\psi), M(\psi) \leq n-v\}
\end{equation}
so $\Xi_v = \Xi_{v,v}$. Moreover, for fixed $v$, let
\begin{equation}
    I(x) :=     \sum_{\alpha \in \Xi_{x,v}}\left(\frac{q}{q^2+1}\frac{1}{1-\lambda_{v}(1-e^{-a})}\right)^{L[\alpha]}  = \sum_{\alpha \in \Xi_{x,v}}\left(\frac{\bar{q}_{v,a}}{\bar{q}_{v,a}^2+1}\right)^{L[\alpha]} \,.
\end{equation}
The function $I(x)$ obeys the recursion relation
\begin{equation}
    I(x) = 1+ \frac{\bar{q}_{v,a}}{\bar{q}_{v,a}^2+1}\left(I(x-1)+I(x+1)\right)\,,
\end{equation}
since there is one term in the sum corresponding to the length-0 trajectory, which contributes 1, but all other terms appear either in $I(x-1)$ or $I(x+1)$ reduced by factor $\bar{q}_{v,a}/(\bar{q}_{v,a}^2+1)$. The general solution to this recursion relation is 
\begin{equation}
    I(x) = \frac{\bar{q}_{v,a}^2+1}{(\bar{q}_{v,a}-1)^2} + A\bar{q}_{v,a}^x + B\bar{q}_{v,a}^{-x}
\end{equation}
for some constants $A$ and $B$. Here is where we rely on boundary conditions. We must have $I(v-1) = I(n-v+1)=0$ since these sums do not include any terms. This allows us to solve for $A$ and $B$ and find
\begin{align}
    A &=-\frac{\bar{q}_{v,a}^2+1}{(\bar{q}_{v,a}-1)^2}\frac{\bar{q}_{v,a}^{-n+v-1}}{1+\bar{q}_{v,a}^{-n+2v-2}}\\
    B &= -\frac{\bar{q}_{v,a}^2+1}{(\bar{q}_{v,a}-1)^2}\frac{\bar{q}_{v,a}^{v-1}}{1+\bar{q}_{v,a}^{-n+2v-2}}\\
    I(v)&= \frac{\bar{q}_{v,a}^2+1}{(\bar{q}_{v,a}-1)^2}\left(1-\bar{q}_{v,a}^{-1}\frac{1+\bar{q}_{v,a}^{-n+2v}}{1+\bar{q}_{v,a}^{-n+2v-2}}\right)\,.
\end{align}

\end{proof}

\begin{proof}[Proof of \autoref{lem:qbarinvbound}]
Define $\eta = \lambda_x(1-e^{-a})$ and let $\zeta = 1-(1-\eta)^2$. Thus, $1-\eta = \sqrt{1-\zeta}$.

We can write
\begin{align}
    \bar{q}_{x,a}^{-1} &= q^{-1}\left(\frac{q^2+1}{2}\right)\left(\sqrt{1-\zeta}\right)\left(1-\sqrt{1-\frac{4q^2}{(q^2+1)^2(1-\zeta)}}\right) \\
    &= q^{-1}\left(\frac{q^2+1}{2}\right)\left(\sqrt{1-\zeta}-\frac{q^2-1}{q^2+1}\sqrt{1-\frac{(q^2+1)^2}{(q^2-1)^2}\zeta}\right) \\
    &\leq  q^{-1}\left(\frac{q^2+1}{2}\right)\left(1-\frac{1}{2}\zeta-\frac{q^2-1}{q^2+1}\left(1-\frac{(q^2+1)^2}{2(q^2-1)^2}\zeta-\frac{(q^2+1)^4}{2(q^2-1)^4}\zeta^2\right)\right)\\
    &= q^{-1}\left(1+\frac{1}{2}\frac{q^2+1}{q^2-1}\zeta + \frac{(q^2+1)^4}{4(q^2-1)^3}\zeta^2\right) \\
    &\leq q^{-1}\exp\left(\frac{1}{2}\frac{q^2+1}{q^2-1}\zeta + \frac{(q^2+1)^4}{4(q^2-1)^3}\zeta^2\right) \\
    &\leq q^{-1}\exp\left(\frac{q^2+1}{q^2-1}a\lambda_x + \frac{(q^2+1)^4}{(q^2-1)^3}a\eta \lambda_x\right)\,,
\end{align}
which is equal to the lemma statement, where in the first inequality we utilized $1-\frac{u}{2} - \frac{u^2}{2} \leq \sqrt{1-u} \leq 1-\frac{u}{2}$, in the second inequality we used $1+u \leq \exp(u)$, and in the third inequality we used $\zeta \leq 2 \eta \leq 2a\lambda_x$. The condition on $a$ is necessary to ensure that $\bar{q}_x$ is real. 
\end{proof}

\section{Approximate 2-designs and anti-concentration}
\label{app:2design}

In this appendix we clarify the relation between approximate unitary 2-designs and anti-concentration. As we discussed in the text, forming a unitary 2-design is a sufficient condition for anti-concentration.

First we recall some definitions. The $k$-fold channel of an operator $\op$ with respect to a probability distribution $\mu$ on the unitary group $\mathcal{U}(q^n)$ is defined as
\begin{equation}
    \Phi^{(k)}_\mu(\op) := \int d\mu(U)\, U^{\otimes k} (\op) U^\dagger{}^{\otimes k}\,.
\end{equation}
We denote the channel with respect to the Haar measure on the unitary group as $\Phi^{(k)}_H$.
The diamond norm of a quantum channel $\Phi$ is defined as $\|\Phi\|_\diamond := \sup_{\psi,D} \|\Phi\otimes\mathcal{I}_D(\psi)\|_1 $, where $\mathcal{I}_D$ is the identity channel on a $D$-dimensional ancilla and $\psi$ is a state on the entire system.

\begin{definition}[Approximate designs]
    A probability distribution $\mu$ on $\mathcal{U}(q^n)$ is an $\ep$-approximate unitary $k$-design if the $k$-fold channels obey
    \begin{equation}
        \big\| \Phi^{(k)}_\mu - \Phi^{(k)}_H \big\|_\diamond \leq \ep\,.
    \end{equation}
\end{definition}
\ni For a given $k$, if $\ep=0$ we say that the distribution forms an exact $k$-design.

A weaker notion of approximate design involves the operator norm of the moment operators, sometimes referred to as the tensor product expander (TPE) condition. The vectorization isomorphism uniquely maps channels to operators, with which we can define the $k$th moment operator from the $k$-fold channel for a probability distribution $\mu$ on the unitary group $\mathcal{U}(q^n)$ as
\begin{equation}
    \widehat\Phi^{(k)}_\mu := {\rm vec}\big( \Phi^{(k)}_\mu) = \int d\mu(U)\, U^{\otimes k}\otimes U^*{}^{\otimes k}\,.
\end{equation}
For convenience we denote $U^{\otimes k,k} := U^{\otimes k}\otimes U^*{}^{\otimes k}$.
\begin{definition}[Weak approximate designs]
    A probability distribution $\mu$ on $\mathcal{U}(q^n)$ is a weak $\ep$-approximate unitary $k$-design if the $k$th moment operators obey
    \begin{equation}
        \big\| \widehat \Phi^{(k)}_\mu - \widehat\Phi^{(k)}_H \big\|_\infty \leq \ep\,.
    \end{equation}
\end{definition}

The expectation of the collision probability for completely Haar-random unitaries is $Z_H=\BE_H[Z] = 2/(q^n+1) \leq 2/q^n$, and thus anti-concentrates with $\alpha=1/2$ as defined in \autoref{def:anticoncentratedArch}. But as the collision probability is a second moment quantity, where $p_U(x)^2 = |\bra{x}U\ket{1^n} |^4$, for an exact unitary 2-design $\mu$ we find
\begin{equation}
    Z = \EV_\mu \bigg[\sum_{x} p_U(x)^2\bigg] = \EV_H \bigg[\sum_{x} p_U(x)^2\bigg] = \frac{2}{q^n+1}\,,
\end{equation}
and thus also $1/2$-anti-concentrates, where $\BE_H[\cdot]$ denotes the expectation with respect to the Haar measure on the unitary group. 

\begin{proposition}\label{prop:ACdesigns}
    An $\ep$-approximate 2-design $\mu$ with $\ep=1/q^{2n}$ has a collision probability of $Z = \BE_\mu [\sum_x p_U(x)^2] \leq 3/q^n$ and is thus a $1/3$-anti-concentrator.
    Moreover, the same holds for a weak $\ep$-approximate 2-design (TPE) $\mu$ with $\ep=1/q^{2n}$.
\end{proposition}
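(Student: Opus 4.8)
The plan is to recast the expected collision probability as a single linear functional of the second-moment channel and then bound its deviation from the Haar value directly from whichever approximate-design norm is on hand. Writing $\rho := \ketbra{1^n}^{\otimes 2}$ and $O := \sum_{x\in[q]^n}\ketbra{x}^{\otimes 2}$, the defining expression for $Z$ in Eq.~\eqref{eq:anticoncARCH} becomes $Z = \Tr\big[O\,\Phi^{(2)}_\mu(\rho)\big]$, and correspondingly $Z_H = \Tr\big[O\,\Phi^{(2)}_H(\rho)\big] = 2/(q^n+1)$, the last equality being the exact-$2$-design evaluation already recorded just above the proposition (equivalently Eq.~\eqref{eq:Midentityswap} with $q\to q^n$). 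The operator $O$ is the orthogonal projector onto $\mathrm{span}\{\ket{x}^{\otimes 2}:x\in[q]^n\}$, so $\|O\|_\infty = 1$ and $\|O\|_2 = \sqrt{\Tr O} = q^{n/2}$, while $\rho$ is a pure state, so $\|\rho\|_1 = \|\rho\|_2 = 1$; these are the only facts about $O$ and $\rho$ the argument will use.

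For an $\ep$-approximate $2$-design I would estimate, via Hölder's inequality and the fact that the diamond norm dominates the induced $1\to 1$ norm on states (take the trivial ancilla, $D=1$, in the definition),
\[
|Z - Z_H| = \big|\Tr[O\,(\Phi^{(2)}_\mu-\Phi^{(2)}_H)(\rho)]\big| \le \|O\|_\infty\,\big\|(\Phi^{(2)}_\mu-\Phi^{(2)}_H)(\rho)\big\|_1 \le \big\|\Phi^{(2)}_\mu-\Phi^{(2)}_H\big\|_\diamond \le \ep .
\]
With $\ep = q^{-2n}$ this yields $Z \le 2/(q^n+1) + q^{-2n} \le 2q^{-n} + q^{-n} = 3q^{-n}$, using $q^{-2n}\le q^{-n}$. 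By \autoref{def:anticoncentratedArch} with $|M| = q^n$ this is exactly $1/3$-anti-concentration, i.e.\ $Z \le (q^n/3)^{-1}$.

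For the weak (TPE) version I would move to the vectorized picture, in which $Z = \langle\!\langle O|\,\widehat\Phi^{(2)}_\mu\,|\rho\rangle\!\rangle$ and $Z_H = \langle\!\langle O|\,\widehat\Phi^{(2)}_H\,|\rho\rangle\!\rangle$, where $|\cdot\rangle\!\rangle$ denotes vectorization and one uses that $O$ and $\rho$ are real so no transposes intrude. Cauchy--Schwarz together with the operator-norm bound then gives $|Z - Z_H| \le \|O\|_2\,\|\widehat\Phi^{(2)}_\mu-\widehat\Phi^{(2)}_H\|_\infty\,\|\rho\|_2 \le q^{n/2}\ep$, and with $\ep = q^{-2n}$ this is $q^{-3n/2}\le q^{-n}$, so again $Z \le 2q^{-n}+q^{-n} = 3q^{-n}$, i.e.\ $\mu$ is a $1/3$-anti-concentrator.

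There is no genuine obstacle here; once the reformulation $Z = \Tr[O\,\Phi^{(2)}_\mu(\rho)]$ is in place the rest is routine. The one point requiring attention is the contrast between the two norms of $O$: its Schatten-$\infty$ norm is $1$, which is what makes the diamond-norm bound clean, whereas its Schatten-$2$ norm is the exponentially large $q^{n/2}$, which is exactly why the TPE condition --- being an operator-norm statement about an operator on the $q^{4n}$-dimensional vectorized space --- must be taken exponentially small to force anti-concentration. One should also check that the vectorization convention used matches the definition of $\widehat\Phi^{(k)}_\mu$ in the excerpt, so that $\langle\!\langle O|\widehat\Phi^{(2)}_\mu|\rho\rangle\!\rangle$ indeed reproduces $\Tr[O\,\Phi^{(2)}_\mu(\rho)]$ with no stray transposes.
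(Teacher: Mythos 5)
Your proof is correct. It rests on the same two inequalities as the paper's proof --- H\"older's inequality against the diamond norm (with trivial ancilla) for the first claim, and the operator norm of the difference of second moment operators for the second --- but you apply them to the aggregated observable $O=\sum_x\ketbra{x}^{\otimes 2}$ rather than term by term. The paper instead bounds each $\EV_\mu[p_U(x)^2]$ individually by $\tfrac{2}{q^n(q^n+1)}+\ep$ and then sums over the $q^n$ outcomes, which costs a factor of $q^n$ on the error term: it obtains $Z\leq Z_H+q^n\ep$ in both cases, whereas your aggregation exploits $\|O\|_\infty=1$ and $\|O\|_2=q^{n/2}$ to get the sharper bounds $Z\leq Z_H+\ep$ (diamond) and $Z\leq Z_H+q^{n/2}\ep$ (TPE). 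Both arguments yield $Z\leq 3/q^n$ at $\ep=q^{-2n}$; yours additionally shows that $\ep=q^{-n}$ (diamond) or $\ep=q^{-3n/2}$ (TPE) would already suffice, though this does not change the qualitative conclusion that $\ep$ must be exponentially small to force anti-concentration. Your closing caveat about the vectorization convention is the right one to check: with the paper's definition $\widehat\Phi^{(2)}_\mu=\int d\mu(U)\,U^{\otimes 2}\otimes U^{*\otimes 2}$ and with $O$ and $\rho$ real and Hermitian in the computational basis, the identity $\langle\!\langle O|\widehat\Phi^{(2)}_\mu|\rho\rangle\!\rangle=\Tr[O\,\Phi^{(2)}_\mu(\rho)]$ does hold, so no stray transposes arise.
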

\begin{proof}
For an $\ep$-approximate 2-design in diamond norm, we find
\begin{align}
    \EV_\mu \big[p_U(x)^2\big] &= \EV_\mu \big[|\vev{x|U|1^n}|^4\big] - \EV_H \big[|\vev{x|U|1^n}|^4\big] + \EV_H \big[|\vev{x|U|1^n}|^4\big]\\
    &= \Tr \Big(\ketbra{x}^{\otimes 2}\Big( \EV_\mu \big[U^{\otimes 2}(\ketbra{1^n})U^\dagger{}^{\otimes 2}\big] - \EV_H \big[U^{\otimes 2}(\ketbra{1^n})U^\dagger{}^{\otimes 2}\big]\Big)\Big)
    + \frac{2}{q^n(q^n+1)}\\
    &\leq \Big\|\ketbra{x}^{\otimes 2}\big(\Phi_\mu^{(2)}(\ketbra{1^n}) - \Phi_H^{(2)}(\ketbra{1^n})\big)\Big\|_1
    + \frac{2}{q^n(q^n+1)}\\
    &\leq \big\|\ketbra{x}^{\otimes 2}\big\|_\infty \big\|(\Phi_\mu^{(2)}-\Phi_H^{(2)})(\ketbra{1^n})\big\|_1
    + \frac{2}{q^n(q^n+1)}\\
    &\leq \frac{2}{q^n(q^n+1)} + \ep\,,
\end{align}
where we wrote the difference in terms of the 2-fold channels, in the second to last line used H\"older's inequality, and in the last line used the definition of the diamond norm and the definition of an $\ep$-approximate 2-design. 

Given the definition of an approximate design in terms of the diamond norm, we must take the error to be exponentially small. Thus, for an approximate 2-design $\mu$ with $\ep=1/q^{2n}$, the collision probability is $Z \leq 3/q^n$ and thus $1/q^{2n}$-approximate unitary 2-designs in diamond norm anti-concentrate with $\alpha=1/3$. 

For a weak $\ep$-approximate 2-design in operator norm (TPE), we proceed similarly,
\begin{align}
    \EV_\mu \big[p_U(x)^2\big] &= \EV_\mu \big[|\vev{x|U|1^n}|^4\big] - \EV_H \big[|\vev{x|U|1^n}|^4\big] + \EV_H \big[|\vev{x|U|1^n}|^4\big]\\
    &= \Tr \Big(\ketbra{1^n x}^{\otimes 2}\Big( \EV_\mu \big[U^{\otimes 2,2}\big] - \EV_H \big[U^{\otimes 2,2}\big]\Big)\Big)
    + \frac{2}{q^n(q^n+1)}\\
    &\leq \Big\|\widehat\Phi_\mu^{(2)}-\widehat\Phi_H^{(2)}\Big\|_\infty
    + \frac{2}{q^n(q^n+1)}\\
    &\leq \frac{2}{q^n(q^n+1)} + \ep\,,
\end{align}
where we wrote the difference in terms of the 2-fold moment operators, in the second to last line used H\"older's inequality, and in the last line used the definition of a weak $\ep$-approximate 2-design. Again, we must take the error to be exponentially small. For $\ep=1/q^{2n}$, the collision probability is $Z \leq 3/q^n$ and thus $1/q^{2n}$-approximate unitary 2-designs in operator norm anti-concentrate with $\alpha=1/3$. 
\end{proof}

As $n$-qudit RQCs on the 1D architecture are known to form $\ep$-approximate unitary 2-designs in $O(n+\log(1/\ep))$ depth \cite{BH13,BHH2016RQCtdesign}, anti-concentration for 1D random circuits in linear depth is an immediate corollary. Moreover, an $n$-independent upper bound on the spectral gap for the 1D architecture \cite{BHH2016RQCtdesign}, implies that they form weak approximate 2-designs in $O(\log(1/\ep))$ depth. By \autoref{prop:ACdesigns}, where we must take $\ep=1/q^{2n}$, this again requires linear depth for 1D RQCs.

For non-local RQCs on the complete-graph architecture, the best known upper bounds on the 2-design circuit size are $O(n^2)$ \cite{HarrowLow2009RQC2design}. However, it has been conjectured that this can be improved to $O(n\log(n))$, in which case anti-concentration and 2-designs could occur at the same depth for the complete-graph circuit architecture. 

To argue that anti-concentration must be distinct from the 2-design property, we consider lower bounds on the 2-design depth for RQCs on the 1D architecture. The spectral gap of the second moment of a probability distribution $\nu$ on the unitary group is defined as $g(\nu):= \| \widehat \Phi^{(2)}_{\nu} - \widehat \Phi^{(2)}_{H}\|_\infty$. Ref.~\cite{BHH2016RQCtdesign} proved an $n$-independent bound on the spectral gap for 1D RQCs. This implies that the behavior of the spectral gap for 1D RQCs of depth $d$ must be $g(\nu_{\rm 1D\,RQC}) = (1-1/c)^d$, for some constant $c>1$. Further recalling that the operator norm can be written as $\|M\|_\infty = \max_{y} \vev{y|M|y} $, this implies that some states requires linear depth in order to become small. 
Specifically, there is some state $\ket y$ on the 4-fold space which requires the 1D circuit depth to be at least $d=\Omega(n)$ in order for the second moment operator for 1D RQCs $\EV_{U} \big[ \vev{y|U^{\otimes 2,2}|y}\big]$ to approach the minimal Haar value.

\bibliographystyle{utphys}
\bibliography{RQCanticon}

\providecommand{\href}[2]{#2}\begingroup\raggedright\begin{thebibliography}{10}

\bibitem{NahumRuhmanVijayHaah2017EntanglementGrowth}
A.~Nahum, J.~Ruhman, S.~Vijay, and J.~Haah, ``{Quantum Entanglement Growth
  Under Random Unitary Dynamics},''
  \href{http://dx.doi.org/10.1103/PhysRevX.7.031016}{{\em Phys. Rev. X}
  {\bfseries 7} (2017) 031016},
  \href{http://arxiv.org/abs/1608.06950}{{\ttfamily arXiv:1608.06950
  [cond-mat.stat-mech]}}.

\bibitem{vonKeyserlingk2018OperatorHydrodynamics}
C.~von Keyserlingk, T.~Rakovszky, F.~Pollmann, and S.~Sondhi, ``{Operator
  hydrodynamics, OTOCs, and entanglement growth in systems without conservation
  laws},'' \href{http://dx.doi.org/10.1103/PhysRevX.8.021013}{{\em Phys. Rev.
  X} {\bfseries 8} (2018) 021013},
  \href{http://arxiv.org/abs/1705.08910}{{\ttfamily arXiv:1705.08910
  [cond-mat.str-el]}}.

\bibitem{NahumVijayHaah2018OperatorSpreading}
A.~Nahum, S.~Vijay, and J.~Haah, ``{Operator Spreading in Random Unitary
  Circuits},'' \href{http://dx.doi.org/10.1103/PhysRevX.8.021014}{{\em Phys.
  Rev. X} {\bfseries 8} (2018) 021014},
  \href{http://arxiv.org/abs/1705.08975}{{\ttfamily arXiv:1705.08975
  [cond-mat.str-el]}}.

\bibitem{HaydenPreskill}
P.~Hayden and J.~Preskill, ``{Black holes as mirrors: Quantum information in
  random subsystems},''
  \href{http://dx.doi.org/10.1088/1126-6708/2007/09/120}{{\em JHEP} {\bfseries
  09} (2007) 120}, \href{http://arxiv.org/abs/0708.4025}{{\ttfamily
  arXiv:0708.4025 [hep-th]}}.

\bibitem{Boixo2018}
S.~{Boixo}, S.~V. {Isakov}, V.~N. {Smelyanskiy}, R.~{Babbush}, N.~{Ding},
  Z.~{Jiang}, M.~J. {Bremner}, J.~M. {Martinis}, and H.~{Neven},
  ``{Characterizing Quantum Supremacy in Near-Term Devices},''
  \href{http://dx.doi.org/10.1038/s41567-018-0124-x}{{\em Nat. Phys.}
  {\bfseries 14} (2018) 595}, \href{http://arxiv.org/abs/1608.00263}{{\ttfamily
  arXiv:1608.00263 [quant-ph]}}.

\bibitem{Arute2019GoogleQuantumSupremacy}
F.~Arute, K.~Arya, R.~Babbush, D.~Bacon, J.~C. Bardin, R.~Barends, R.~Biswas,
  S.~Boixo, F.~Brand{\~a}o, D.~A. Buell, {\em et~al.}, ``{Quantum supremacy
  using a programmable superconducting processor},''
  \href{http://dx.doi.org/10.1038/s41586-019-1666-5}{{\em Nature} {\bfseries
  574} (2019) 505}, \href{http://arxiv.org/abs/1910.11333}{{\ttfamily
  arXiv:1910.11333 [quant-ph]}}.

\bibitem{Wu2021StrongCompAdvantage}
Y.~Wu, W.-S. Bao, S.~Cao, F.~Chen, M.-C. Chen, X.~Chen, T.-H. Chung, H.~Deng,
  Y.~Du, D.~Fan, {\em et~al.}, ``Strong quantum computational advantage using a
  superconducting quantum processor,''
  \href{http://arxiv.org/abs/2106.14734}{{\ttfamily arXiv:2106.14734
  [quant-ph]}}.

\bibitem{DahlstenOliveiraPlenio2007TypicalEntanglement}
O.~C.~O. {Dahlsten}, R.~{Oliveira}, and M.~B. {Plenio}, ``{The emergence of
  typical entanglement in two-party random processes},''
  \href{http://dx.doi.org/10.1088/1751-8113/40/28/S16}{{\em J. Phys. A}
  {\bfseries 40} (2007) 8081},
  \href{http://arxiv.org/abs/quant-ph/0701125}{{\ttfamily
  arXiv:quant-ph/0701125}}.

\bibitem{OliveiraDahlstenPlenio2007GenericEntanglement}
R.~Oliveira, O.~C.~O. Dahlsten, and M.~B. Plenio, ``Generic entanglement can be
  generated efficiently,''
  \href{http://dx.doi.org/10.1103/PhysRevLett.98.130502}{{\em Phys. Rev. Lett.}
  {\bfseries 98} (2007) 130502},
  \href{http://arxiv.org/abs/quant-ph/0605126}{{\ttfamily
  arXiv:quant-ph/0605126}}.

\bibitem{BrownFawzi2012RQCScrambling}
W.~Brown and O.~Fawzi, ``{Scrambling speed of random quantum circuits},''
  \href{http://arxiv.org/abs/1210.6644}{{\ttfamily arXiv:1210.6644
  [quant-ph]}}.

\bibitem{BrownFawzi2015RQCDecoupling}
W.~{Brown} and O.~{Fawzi}, ``{Decoupling with random quantum circuits},''
  \href{http://dx.doi.org/10.1007/s00220-015-2470-1}{{\em Comm. Math. Phys.}
  {\bfseries 340} (2015) 867}, \href{http://arxiv.org/abs/1307.0632}{{\ttfamily
  arXiv:1307.0632 [quant-ph]}}.

\bibitem{BrownFawzi2013RandomCodes}
W.~Brown and O.~Fawzi,
  \href{http://dx.doi.org/10.1109/ISIT.2013.6620245}{``{Short random circuits
  define good quantum error correcting codes},''} in {\em IEEE International
  Symposium on Information Theory - Proceedings}, pp.~346--350.
\newblock 2013.
\newblock \href{http://arxiv.org/abs/1312.7646}{{\ttfamily arXiv:1312.7646
  [quant-ph]}}.

\bibitem{HarrowLow2009RQC2design}
A.~W. {Harrow} and R.~A. {Low}, ``{Random Quantum Circuits are Approximate
  2-designs},'' \href{http://dx.doi.org/10.1007/s00220-009-0873-6}{{\em Commun.
  Math. Phys.} {\bfseries 291} (2009) 257},
  \href{http://arxiv.org/abs/0802.1919}{{\ttfamily arXiv:0802.1919
  [quant-ph]}}.

\bibitem{BrownViola2010tdesign}
W.~G. Brown and L.~Viola, ``Convergence rates for arbitrary statistical moments
  of random quantum circuits,''
  \href{http://dx.doi.org/10.1103/PhysRevLett.104.250501}{{\em Phys. Rev.
  Lett.} {\bfseries 104} (2010) 250501},
  \href{http://arxiv.org/abs/0910.0913}{{\ttfamily arXiv:0910.0913
  [quant-ph]}}.

\bibitem{BHH2016RQCtdesign}
F.~G.~S.~L. {Brand{\~a}o}, A.~W. {Harrow}, and M.~{Horodecki}, ``{Local Random
  Quantum Circuits are Approximate Polynomial-Designs},''
  \href{http://dx.doi.org/10.1007/s00220-016-2706-8}{{\em Commun. Math. Phys.}
  {\bfseries 346} (2016) 397}, \href{http://arxiv.org/abs/1208.0692}{{\ttfamily
  arXiv:1208.0692 [quant-ph]}}.

\bibitem{HarrowMehraban2018tdesign}
A.~{Harrow} and S.~{Mehraban}, ``{Approximate unitary $t$-designs by short
  random quantum circuits using nearest-neighbor and long-range gates},''
  \href{http://arxiv.org/abs/1809.06957}{{\ttfamily arXiv:1809.06957
  [quant-ph]}}.

\bibitem{Hunter-Jones2019StatMechDesign}
N.~Hunter-Jones, ``{Unitary designs from statistical mechanics in random
  quantum circuits},'' \href{http://arxiv.org/abs/1905.12053}{{\ttfamily
  arXiv:1905.12053 [quant-ph]}}.

\bibitem{Bouland2019RCSComplexity}
A.~Bouland, B.~Fefferman, C.~Nirkhe, and U.~Vazirani, ``{On the complexity and
  verification of quantum random circuit sampling},''
  \href{http://dx.doi.org/10.1038/s41567-018-0318-2}{{\em Nat. Phys.}
  {\bfseries 15} (2019) 159}, \href{http://arxiv.org/abs/1803.04402}{{\ttfamily
  arXiv:1803.04402 [quant-ph]}}.

\bibitem{Movassagh2018RCSAverageCase}
R.~Movassagh, ``{Efficient unitary paths and quantum computational supremacy: A
  proof of average-case hardness of Random Circuit Sampling},''
  \href{http://arxiv.org/abs/1810.04681}{{\ttfamily arXiv:1810.04681
  [quant-ph]}}.

\bibitem{Movassagh2019RCSAverageCaseRobust}
R.~Movassagh, ``{Quantum supremacy and random circuits},''
  \href{http://arxiv.org/abs/1909.06210}{{\ttfamily arXiv:1909.06210
  [quant-ph]}}.

\bibitem{Bouland2021NoiseFrontier}
A.~Bouland, B.~Fefferman, Z.~Landau, and Y.~Liu, ``Noise and the frontier of
  quantum supremacy,'' \href{http://arxiv.org/abs/2102.01738}{{\ttfamily
  arXiv:2102.01738 [quant-ph]}}.

\bibitem{Kondo2021ImprovedRobustness}
Y.~Kondo, R.~Mori, and R.~Movassagh, ``Improved robustness of quantum supremacy
  for random circuit sampling,''
  \href{http://arxiv.org/abs/2102.01960}{{\ttfamily arXiv:2102.01960
  [quant-ph]}}.

\bibitem{Aaronson2011BosonSampling}
S.~Aaronson and A.~Arkhipov,
  \href{http://dx.doi.org/10.1145/1993636.1993682}{``{The computational
  complexity of linear optics},''} in {\em Proceedings of the Annual ACM
  Symposium on Theory of Computing}.
\newblock 2011.
\newblock \href{http://arxiv.org/abs/1011.3245}{{\ttfamily arXiv:1011.3245
  [quant-ph]}}.

\bibitem{Bremner2016AverageCaseIQP}
M.~J. Bremner, A.~Montanaro, and D.~J. Shepherd, ``{Average-Case Complexity
  Versus Approximate Simulation of Commuting Quantum Computations},''
  \href{http://dx.doi.org/10.1103/PhysRevLett.117.080501}{{\em Phys. Rev.
  Lett.} {\bfseries 117} (2016) 080501},
  \href{http://arxiv.org/abs/1504.07999}{{\ttfamily arXiv:1504.07999
  [quant-ph]}}.

\bibitem{Morimae2017DQC1HardnessTVD}
T.~Morimae, ``{Hardness of classically sampling the one-clean-qubit model with
  constant total variation distance error},''
  \href{http://dx.doi.org/10.1103/PhysRevA.96.040302}{{\em Phys. Rev. A}
  {\bfseries 96} (2017) 040302},
  \href{http://arxiv.org/abs/1704.03640}{{\ttfamily arXiv:1704.03640
  [quant-ph]}}.

\bibitem{Bouland2018CCC}
A.~Bouland, J.~F. Fitzsimons, and D.~E. Koh,
  \href{http://dx.doi.org/10.4230/LIPIcs.CCC.2018.21}{``{Complexity
  Classification of Conjugated Clifford Circuits},''} in {\em Proceedings of
  the 33rd Computational Complexity Conference}, pp.~21:1--21:25.
\newblock 2018.
\newblock \href{http://arxiv.org/abs/1709.01805}{{\ttfamily arXiv:1709.01805
  [quant-ph]}}.

\bibitem{Hangleiter2018anticoncentration}
D.~Hangleiter, J.~Bermejo-Vega, M.~Schwarz, and J.~Eisert, ``{Anticoncentration
  theorems for schemes showing a quantum speedup},''
  \href{http://dx.doi.org/10.22331/q-2018-05-22-65}{{\em Quantum} {\bfseries 2}
  (2018) 65}, \href{http://arxiv.org/abs/1706.03786}{{\ttfamily
  arXiv:1706.03786 [quant-ph]}}.

\bibitem{Dalzell2020HowManyQubits}
A.~M. {Dalzell}, A.~W. {Harrow}, D.~{Enshan Koh}, and R.~L. {La Placa}, ``{How
  many qubits are needed for quantum computational supremacy?},''
  \href{http://dx.doi.org/10.22331/q-2020-05-11-264}{{\em Quantum} {\bfseries
  4} (2020) 264}, \href{http://arxiv.org/abs/1805.05224}{{\ttfamily
  arXiv:1805.05224 [quant-ph]}}.

\bibitem{Morimae2019FGAdditive}
T.~Morimae and S.~Tamaki, ``Additive-error fine-grained quantum supremacy,''
  \href{http://dx.doi.org/10.22331/q-2020-09-24-329}{{\em {Quantum}} {\bfseries
  4} (Sept., 2020) 329}, \href{http://arxiv.org/abs/1912.06336}{{\ttfamily
  arXiv:1912.06336 [quant-ph]}}.
  \url{https://doi.org/10.22331/q-2020-09-24-329}.

\bibitem{Bremner2017SparseNoisyIQP}
M.~J. Bremner, A.~Montanaro, and D.~J. Shepherd, ``{Achieving quantum supremacy
  with sparse and noisy commuting quantum computations},''
  \href{http://dx.doi.org/10.22331/q-2017-04-25-8}{{\em Quantum} {\bfseries 1}
  (2017) 8}, \href{http://arxiv.org/abs/1610.01808}{{\ttfamily arXiv:1610.01808
  [quant-ph]}}.

\bibitem{GaoDuan2018NoisySimulation}
X.~Gao and L.~Duan, ``{Efficient classical simulation of noisy quantum
  computation},'' \href{http://arxiv.org/abs/1810.03176}{{\ttfamily
  arXiv:1810.03176 [quant-ph]}}.

\bibitem{Barak2020Spoofing}
B.~{Barak}, C.-N. {Chou}, and X.~{Gao}, ``{Spoofing Linear Cross-Entropy
  Benchmarking in Shallow Quantum Circuits},''
  \href{http://arxiv.org/abs/2005.02421}{{\ttfamily arXiv:2005.02421
  [quant-ph]}}.

\bibitem{haferkamp2019closinggaps}
J.~{Haferkamp}, D.~{Hangleiter}, A.~{Bouland}, B.~{Fefferman}, J.~{Eisert}, and
  J.~{Bermejo-Vega}, ``{Closing gaps of a quantum advantage with short-time
  Hamiltonian dynamics},'' \href{http://arxiv.org/abs/1908.08069}{{\ttfamily
  arXiv:1908.08069 [quant-ph]}}.

\bibitem{onorati2017mixing}
E.~Onorati, O.~Buerschaper, M.~Kliesch, W.~Brown, A.~H. Werner, and J.~Eisert,
  ``{Mixing properties of stochastic quantum Hamiltonians},''
  \href{http://dx.doi.org/10.1007/s00220-017-2950-6}{{\em Commun. Math. Phys.}
  {\bfseries 355} (2017) 905},
  \href{http://arxiv.org/abs/1606.01914}{{\ttfamily arXiv:1606.01914
  [quant-ph]}}.

\bibitem{Gharibyan2018RandomMatrix}
H.~Gharibyan, M.~Hanada, S.~H. Shenker, and M.~Tezuka, ``{Onset of Random
  Matrix Behavior in Scrambling Systems},''
  \href{http://dx.doi.org/10.1007/JHEP07(2018)124}{{\em JHEP} {\bfseries 07}
  (2018) 124}, \href{http://arxiv.org/abs/1803.08050}{{\ttfamily
  arXiv:1803.08050 [hep-th]}}. [Erratum: JHEP 02, 197 (2019)].

\bibitem{nhj2018opgrowth}
N.~Hunter-Jones, ``{Operator growth in random quantum circuits with
  symmetry},'' \href{http://arxiv.org/abs/1812.08219}{{\ttfamily
  arXiv:1812.08219 [quant-ph]}}.

\bibitem{Hayden2016HolographicDualityTN}
P.~Hayden, S.~Nezami, X.-L. Qi, N.~Thomas, M.~Walter, and Z.~Yang,
  ``{Holographic duality from random tensor networks},''
  \href{http://dx.doi.org/10.1007/JHEP11(2016)009}{{\em JHEP} {\bfseries 11}
  (2016) 009}, \href{http://arxiv.org/abs/1601.01694}{{\ttfamily
  arXiv:1601.01694 [hep-th]}}.

\bibitem{ZhouNahum2019EmergentStatMech}
T.~Zhou and A.~Nahum, ``{Emergent statistical mechanics of entanglement in
  random unitary circuits},''
  \href{http://dx.doi.org/10.1103/PhysRevB.99.174205}{{\em Phys. Rev. B}
  {\bfseries 99} (2019) 174205},
  \href{http://arxiv.org/abs/1804.09737}{{\ttfamily arXiv:1804.09737
  [cond-mat.stat-mech]}}.

\bibitem{BaoChoiAltman2020TheoryPhaseTransition}
Y.~Bao, S.~Choi, and E.~Altman, ``{Theory of the phase transition in random
  unitary circuits with measurements},''
  \href{http://dx.doi.org/10.1103/PhysRevB.101.104301}{{\em Phys. Rev. B}
  {\bfseries 101} (2020) 104301},
  \href{http://arxiv.org/abs/1908.04305}{{\ttfamily arXiv:1908.04305
  [cond-mat.stat-mech]}}.

\bibitem{JianYouVasseur2020MeasurementInduced}
C.~M. Jian, Y.~Z. You, R.~Vasseur, and A.~W. Ludwig, ``{Measurement-induced
  criticality in random quantum circuits},''
  \href{http://dx.doi.org/10.1103/PhysRevB.101.104302}{{\em Phys. Rev. B}
  {\bfseries 101} (2020) 104302},
  \href{http://arxiv.org/abs/1908.08051}{{\ttfamily arXiv:1908.08051
  [cond-mat.stat-mech]}}.

\bibitem{Napp2019SEBD}
J.~Napp, R.~L. La~Placa, A.~M. Dalzell, F.~G.~S.~L. Brandao, and A.~W. Harrow,
  ``{Efficient classical simulation of random shallow 2D quantum circuits},''
  \href{http://arxiv.org/abs/2001.00021}{{\ttfamily arXiv:2001.00021
  [quant-ph]}}.

\bibitem{Lopez-Piqueres2020MeanField}
J.~Lopez-Piqueres, B.~Ware, and R.~Vasseur, ``Mean-field entanglement
  transitions in random tree tensor networks,''
  \href{http://dx.doi.org/10.1103/PhysRevB.102.064202}{{\em Phys. Rev. B}
  {\bfseries 102} (2020) 064202},
  \href{http://arxiv.org/abs/2003.01138}{{\ttfamily arXiv:2003.01138
  [cond-mat.stat-mech]}}.

\bibitem{TerhalDiVincenzo2004ConstantDepthHardness}
B.~M. Terhal and D.~P. DiVincenzo, ``{Adaptive Quantum Computation, Constant
  Depth Quantum Circuits and Arthur-Merlin Games},''
  \href{http://dx.doi.org/10.26421/QIC4.2-5}{{\em Quantum Inf. Comput.}
  {\bfseries 4} (2004) 134},
  \href{http://arxiv.org/abs/quant-ph/0205133}{{\ttfamily
  arXiv:quant-ph/0205133}}.

\bibitem{GoodmanWallach2000Representations}
R.~Goodman and N.~R. Wallach, {\em {Representations and Invariants of the
  Classical Groups}}.
\newblock Cambridge University Press, 2000.

\bibitem{BH13}
F.~G. S.~L. Brand\~{a}o and M.~Horodecki, ``Exponential quantum speed-ups are
  generic,'' \href{http://dx.doi.org/10.26421/QIC13.11-12-1}{{\em Quantum Info.
  Comput.} {\bfseries 13} (2013) 901},
  \href{http://arxiv.org/abs/1010.3654}{{\ttfamily arXiv:1010.3654
  [quant-ph]}}.

\end{thebibliography}\endgroup

\end{document}